\newenvironment{proof}{\noindent\emph{Proof\ }}{\hspace*{\fill}$\Box$\medskip}
\newtheorem{theorem}{Theorem}
\newtheorem{intheorem}{Informal Theorem}
\newtheorem{definition}{Definition}
\newtheorem{lemma}{Lemma}
\newcommand\restr[2]{{% we make the whole thing an ordinary symbol
  \left.\kern-\nulldelimiterspace % automatically resize the bar with \right
  #1 % the function
  \vphantom{\big|} % pretend it's a little taller at normal size
  \right|_{#2} % this is the delimiter
  }}
\newcommand{\vect}[1]{\ensuremath{\bm{#1}}}
\newcommand{\PoA}{\text{PoA}}
\newcommand{\E}{\ensuremath{\mathbb{E}}}
\title{Game Efficiency through Linear Programming Duality}
\author{
Nguyen Kim Thang\thanks{Research supported by the ANR project OATA n\textsuperscript{o} ANR-15-CE40-0015-01}\\
\texttt{thang@ibisc.fr} \\
IBISC, University Paris-Saclay}
\date{}
\begin{document}

\maketitle

\begin{abstract}
The efficiency of a game is typically quantified by the price of anarchy (PoA), defined as the worst ratio of
the objective function value of an equilibrium --- solution of the game --- and that of an optimal outcome. 
Given the tremendous impact of tools from mathematical programming in the design of algorithms
and the similarity of the price of anarchy and different measures %in the worst-case model 
such as the approximation 
and competitive ratios, it is intriguing to develop a duality-based method to characterize 
the efficiency of games.

In the paper, we present an approach based on linear programming duality to study the efficiency of games. 
We show that the approach provides a general recipe to analyze the efficiency of games and 
also to derive concepts leading to improvements. The approach is particularly appropriate to bound the PoA. 
Specifically, in our approach the dual programs naturally lead to competitive PoA bounds that are (almost) optimal for 
several classes of games. The approach indeed captures the smoothness framework and also some current non-smooth
techniques/concepts. We show the applicability to the wide variety of games and environments, 
from congestion games to Bayesian welfare, from full-information settings to incomplete-information ones.

\end{abstract}

\thispagestyle{empty}

\newpage

\setcounter{page}{1}

\section{Introduction}
 
Algorithmic Game Theory --- a  domain at the intersection of Game Theory and Algorithms --- 
has been extensively studied in the last two decades.
The development of the domain, as well as those of many other research fields, have witnessed a common phenomenon:
interesting notions, results have been flourished at the early stage,  then deep methods, techniques
have been established at a more mature stage leading to further achievements.  
In Algorithmic Game Theory, a representative
illustration is the notion and results on the price of anarchy and the smoothness argument method \cite{Roughgarden15:Intrinsic-robustness}.          
In a game, the price of anarchy (PoA) \cite{KoutsoupiasPapadimitriou09:Worst-case-equilibria} is defined as the worst ratio between the cost of a Nash equilibrium and that of an optimal solution. 
The PoA is now considered as standard and is the most popular measure to characterize the inefficiency of Nash equilibria --- solutions of 
a game --- in the same sense of approximation ratio in Approximation Algorithms and competitive ratio in Online Algorithms. 

Mathematical programming in general and linear programming in particular are powerful tools 
in many research fields. Among others, linear programming has a tremendous impact on the design of algorithms.
Linear programming and duality play crucial and fundamental roles in several elegant methods such as primal-dual and dual-fitting 
in Approximation Algorithms \cite{WilliamsonShmoys11:The-design-of-approximation} 
and online primal-dual framework \cite{BuchbinderNaor09:The-design-of-competitive} in Online Algorithms. Given the similarity of the notions
of PoA, approximation and competitive ratios, it is intriguing and also desirable to develop a method based on duality to characterize 
the PoA of games.   
In this paper, we present and aim at developing a framework based on linear programming duality 
to study the efficiency of games.  

\subsection{A primal-dual approach} 
In high-level, the approach follows the standard primal-dual or dual-fitting techniques in approximation/online algorithms.
The approach consists of associating a game to an optimization problem and formulate an integer program corresponding to the 
problem. Next consider the linear program by relaxing the integer constraints and its dual LP.
Then given a Nash equilibrium, construct dual variables in
such a way that one can relate the dual objective to the cost of the Nash equilibrium. The PoA is then bounded
by the primal objective (essentially, the cost of the Nash equilibrium) and the dual objective (a lower bound of the optimum cost by weak duality). 
This approach has been initiated by \citet{KulkarniMirrokni15:Robust-price} 
in which full-information games with convex objective functions have been considered.

There are two crucial steps in the approach. First, by this method, the bound of PoA is at least as large as the integrality gap.
Hence, to prove optimal PoA one has to derive a formulation (of the corresponding optimization problem) 
whose the integrality gap matches to the optimal PoA. This is very similar to the issue of linear-programming-based approaches in
Approximation/Online Algorithms. The second crucial step is the construction of dual variables. The dual variables need to 
reflect the notion of Nash equilibria as well as their properties in order to relate to the cost of the equilibrium.  
Intuitively, to prove optimal bound on the PoA, the constructed dual variables must constitute an optimal dual solution. 

To overcome these obstacles, in the paper we systematically consider \emph{configuration} linear programs and a primal-dual approach.
Given a problem (game), we first consider a natural formulation of the problem. 
Then, the approach consists of introducing exponential variables and constraints to the natural formulation 
to get a configuration LP. The additional constraints we use throughout the paper 
have intuitive and simple interpretations: one constraint guarantees that the game admits exactly one outcome and 
the other constraint ensures that if a player uses a strategy then this strategy must be a component of the outcome. 
As the result, the configuration LPs significantly improve integrality gap over that of the natural formulations.

The configuration LPs have been considered in approximation algorithms and to the best of our knowledge, 
the main approach is rounding. Here, to study the efficiency of games, we consider a primal-dual approach. 
The primal-dual approach is very appropriate to  study the PoA through the mean of configuration LPs. 
In the dual program of our configuration programs, the dual constraints naturally lead to the construction of dual variables and the PoA bounds. 
Intuitively, one dual constraint corresponds exactly to the definition of Nash equilibrium and the other constraint settles the PoA bounds.

\subsection{Overview of Results}
We illustrate the potential and the wide applicability of the approach throughout various results in the contexts of complete and 
incomplete-information environments, from the settings of congestion games to welfare maximization. 
The approach allows us to unify several 
previous results and establish new ones beyond the current techniques. It is worthy to note that the analyses are simple and 
are guided by dual LP very much in the sense of primal-dual methods in designing algorithms.
Moreover,  under the lens of LP duality, 
the notion of smooth games  in both full-information settings \cite{Roughgarden15:Intrinsic-robustness} 
and incomplete-information settings \cite{Roughgarden15:The-price-of-anarchy,SyrgkanisTardos13:Composable-and-efficient},
the recent notion of no-envy learning \cite{DaskalakisSyrgkanis16:Learning-in-auctions:} 
and the new notion of dual smooth (in this paper) can be naturally derived,
which lead to the optimal bounds on the PoA of several games.

\subsubsection{Smooth Games in Full-Information Settings}
We first revisit smooth games by the primal-dual approach
and show that the primal-dual approach captures the smoothness framework \cite{Roughgarden15:Intrinsic-robustness}. 
\citet{Roughgarden15:Intrinsic-robustness} has introduced the smoothness framework, which becomes quickly a standard technique, 
and shown that every $(\lambda,\mu)$-smooth game admits the PoA at most $\lambda/(1-\mu)$. 
Through the duality approach, we show that
in terms of techniques to study the PoA for complete information settings, 
the LP duality and the smoothness framework are exactly the same thing. 
Specifically, one of the dual constraint corresponds exactly to the definition of smooth games
given in \cite{Roughgarden15:Intrinsic-robustness}.   

\begin{intheorem}
The primal-dual approach captures the smoothness framework in full-\linebreak information settings.
\end{intheorem}

\subsubsection{Congestion Games}
We consider fundamental classes of \emph{congestion games}
in which we revisit and unify results in the atomic, non-atomic congestion games and prove 
the optimal PoA bound of coarse correlated equilibria
in splittable congestion games. 
 
\paragraph{Atomic congestion games.} In this class, although the PoA bound follows the results for smooth games,
we provide another configuration formulation and a similar primal-dual approach. The purpose of this formulation is twofold.
First it shows the flexibility of the primal-dual approach. Second, it sets up the ground for an unified approach to 
other classes of the congestion games.   

\paragraph{Non-atomic congestion games.} In this class, we re-prove the optimal PoA bound \cite{RoughgardenTardos04:Bounding-the-inefficiency}. 
Along the line toward the optimal PoA bound for non-atomic congestion games, the equilibrium characterization
by a variational inequality is at the core of the analyses \cite{RoughgardenTardos04:Bounding-the-inefficiency,CorreaSchulz08:A-geometric-approach,CominettiCorrea09:The-impact-of-oligopolistic}. In our proof, we establish the optimal PoA 
directly by the mean of LP duality. By the LP duality as the unified approach, 
one can clearly observe that non-atomic setting is a version of the atomic setting in large games 
(in the sense of \cite{FeldmanImmorlica16:The-price-of-anarchy}) in which each player weight 
becomes negligible (hence, the PoA of the former tends to that of the latter). 
Besides, an advantage with LP approaches is that one can benefit from powerful techniques that have been developing for linear programming.   
Concretely, using the general framework on resource augmentation and primal-dual recently 
presented \cite{LucarelliThang16:Online-Non-preemptive}, 
we manage to recover and extend a resource augmentation result related to non-atomic setting \cite{RoughgardenTardos02:How-bad-is-selfish-routing}. 

\begin{intheorem}
In every non-atomic congestion game, for any constant $r > 0$, 
the cost of an equilibrium in a game is at most $1/r$
that of an optimal solution in a similar game in which 
the flow amount of each demand is multiplied by a factor $(1+r)$ for $r > 0$. 
\end{intheorem}

\paragraph{Splittable congestion games.}
\citet{RoughgardenSchoppmann15:Local-smoothness} has presented a \emph{local} smoothness property, 
a refinement of the smoothness framework, and proved that every $(\lambda,\mu)$-local-smooth splittable game 
admits the PoA $\lambda/(1-\mu)$. This bound is tight for a large class of scalable cost functions in splittable games 
and holds for PoA of pure, mixed, correlated equilibria. However, this bound does not hold for coarse correlated equilibria
and it remains an intriguing open question raised in \cite{RoughgardenSchoppmann15:Local-smoothness}.
Building upon the resilient ideas of non-atomic and atomic
settings, we define a property, called \emph{dual smoothness}, which is inspired by the dual constraints. 
This new notion indeed leads to the \emph{tight} PoA bound for coarse correlated equilibria
in splittable games for a large class of cost functions (the matching lower bound is given
in \cite{RoughgardenSchoppmann15:Local-smoothness} and that holds even for pure equilibria). 

\begin{definition}
A cost function $\ell: \mathbb{R}^{+} \rightarrow \mathbb{R}^{+}$ is \emph{$(\lambda,\mu)$-dual-smooth} if for every
vectors $\vect{u} = (u_{1}, \ldots, u_{n})$ and $\vect{v} = (v_{1}, \ldots, v_{n})$, 
$$
 v \ell(u) + \sum_{i=1}^{n} u_{i}(v_{i} - u_{i}) \cdot \ell'(u) \leq \lambda \cdot v\ell(v) + \mu \cdot u\ell(u)
$$  
where $u = \sum_{i=1}^{n} u_{i}$ and $v = \sum_{i=1}^{n} v_{i}$.
A splittable congestion game is \emph{$(\lambda,\mu)$-dual-smooth} if every resource $e$ in the game, 
function $\ell_{e}$ is $(\lambda,\mu)$-dual-smooth.
\end{definition}

\begin{intheorem}	
For every $(\lambda,\mu)$-dual-smooth splittable congestion game $G$, the price of anarchy of coarse correlated equilibria of $G$
is at most $\lambda/(1-\mu)$. This bound is \emph{tight} for the class of scalable cost functions.
\end{intheorem}

\subsubsection{Welfare Maximization}
%(Indeed, in the generalization for the incomplete-information settings, 
%the primal-dual approach can be shown to capture the smooth argument in this context.)

We next consider the inefficiency of Bayes-Nash equilibria in the context of welfare maximization. 

\paragraph{Smooth Auctions.}
The notion of smooth auctions in incomplete-information settings, inspired by the original 
smoothness framework \cite{Roughgarden15:Intrinsic-robustness}, has been introduced by 
\citet{Roughgarden15:The-price-of-anarchy,SyrgkanisTardos13:Composable-and-efficient}.
This powerful notion has been widely used  
to study the PoA of Bayes-Nash equilibria (see the recent survey \cite{RoughgardenSyrgkanis16:The-Price-of-Anarchy}). 
We show that the primal-dual approach captures the smoothness framework in incomplete-information settings. 
In other words, the notion of smooth auctions can be naturally derived from dual constraints in the primal-dual approach.

\begin{intheorem}
The primal-dual approach captures the smoothness framework in incom\-plete-information settings.
\end{intheorem}

\paragraph{Simultaneous Item-Bidding Auctions.} 
Many price-of-anarchy bounds in auctions are settled by smoothness-based proofs.
However, there are price-of-anarchy bounds for auctions proved via
non-smooth techniques and these techniques seem more powerful than the smoothness framework in such auctions. Representative 
examples are the simultaneous first- and second-price auctions where players' valuations are sub-additive. 
\citet{FeldmanFu13:Simultaneous-auctions} have proved that 
the PoA is constant while the smooth argument gives only logarithmic guarantees. We show that in this context, our approach is beyond
the smoothness framework and also captures the non-smooth arguments in \cite{FeldmanFu13:Simultaneous-auctions} by re-establishing their results. Specifically, a main step in our analysis --- proving the feasibility of a dual constraint --- corresponds exactly to a crucial claim in \cite{FeldmanFu13:Simultaneous-auctions}. From this point of view, the primal-dual approach helps to identify the key steps in settling the PoA bounds.

\begin{intheorem}[\cite{FeldmanFu13:Simultaneous-auctions}]
Assume that players have independent distributions over sub-additive valuations. Then, every 
Bayes-Nash equilibrium of a first-price auction and of a second price auction has expected welfare at least 1/2 and 
1/4 of the maximal welfare, respectively. 
\end{intheorem}

Subsequently, we illuminate the potential of the primal-dual approach in formulating new concepts. 
Concretely, \citet{DaskalakisSyrgkanis16:Learning-in-auctions:} have very recently introduced 
\emph{no-envy learning dynamic} --- a novel concept of learning in auctions.
No-envy dynamics have advantages over no-regret dynamics. In particular, 
no-envy outcomes maintain the approximate welfare optimality of no-regret outcomes while ensuring the 
computational tractability. Surprisingly, there is a connection between the primal-dual approach and no-envy
dynamics. Indeed, the latter can be naturally derived from the dual constraints very much in the same way as the smoothness argument is. 
We show this connection by revisiting the following theorem by the means of the primal-dual approach. 

\begin{intheorem}[\cite{DaskalakisSyrgkanis16:Learning-in-auctions:}]
Every no-envy dynamic has the average welfare at least half the expected optimal welfare. 
\end{intheorem}

\paragraph{Sequential Auctions.}
To illustrate the applicability of the primal-dual approach, we consider thereafter another format of auctions --- sequential auctions.
In a simple model of sequential auctions, items are sold one-by-one via single-item auctions. Sequential auctions has
a long and rich literature \cite{Krishna09:Auction-theory} and sequentially selling items leads to complex issues in analyzing PoA. 
\citet{LemeSyrgkanis12:Sequential-auctions,SyrgkanisTardos12:Bayesian-sequential} 
have studied sequential auctions for matching markets and matroid auctions in complete and incomplete-information settings
in which at each step, an item is sold via the first-price auctions. 
In this paper, we consider the sequential auctions for sponsored search via the second-price auctions. Informally,
auctioneer sells advertizing slots one-by-one in the non-increasing order of click-though-rates 
(from the most attractive to the least one). At each step, players submit bid for the 
currently-selling slot and the highest-bid player receives the slot and pays the second highest bid.   
In the auction, we study the PoA of perfect Bayesian equilibria and show the following improvement
over the best-known PoA bound of 2.927 \cite{CaragiannisKaklamanis15:Bounding-the-inefficiency} for the sponsored search problem.

\begin{intheorem}
The PoA of sequential second-price auctions for the sponsored search problem is at most 2.   
\end{intheorem}
An observation is that although the behaviour of players in sequential auctions might be complex, the performance guarantee 
is better than the currently best-known one for simultaneous second price auctions for the sponsored search problem.
Consequently, this result shows that the efficiency of sequential auctions is not necessarily worse 
than the simultaneous ones (and also analyzing sequential auction is not necessarily harder 
than analyzing simultaneous ones). Moreover, using the primal-dual approach, the proof is fairly simpler than the 
smoothness-based one. 

Building upon the resilient ideas for the sponsored search problem, 
we provide an improved PoA bound of 2 for the matching market problem for which the best known PoA bound is 
$2e/(e-1) \approx 3.16$ due to \citet{SyrgkanisTardos12:Bayesian-sequential}. That also answers an question 
raised in \cite{SyrgkanisTardos12:Bayesian-sequential} whether the PoA in the incomplete-information settings must be 
strictly larger than the best-known PoA bound (which is 2) in the full-information settings.

\begin{intheorem}
The PoA of sequential first-price auctions for the matching market problem is at most 2.   
\end{intheorem}

\subsection{Related works}
As the main point of the paper is to emphasize the primal-dual approach to study game efficiency, in this section
we mostly concentrate on currently existing methods. Concrete related results will be summarized 
in the corresponding sections.   

The most closely related to our work is a recent result of \cite{KulkarniMirrokni15:Robust-price}. In their approach, 
\citet{KulkarniMirrokni15:Robust-price} considered 
a convex formulation of a given game and its dual program based on Fenchel duality. Then, given a Nash 
equilibrium, the dual variables are constructed by relating the cost of the Nash equilibrium to 
that of the dual objective. In high-level, our approach has the same idea in \cite{KulkarniMirrokni15:Robust-price} and both approaches
indeed have inspired by the standard primal-dual and dual-fitting in the design of algorithms. 
Our approach is distinguished to that in \cite{KulkarniMirrokni15:Robust-price} in the following two aspects. 
First, we consider arbitrary (non-decreasing) 
objective functions and make use of configuration LPs in order to reduce substantially the integrality gap 
while the approach in \cite{KulkarniMirrokni15:Robust-price} needs convex objective functions. 
In term of approaches based on mathematical programs in approximation algorithms, 
we have come up with stronger formulations than those in \cite{KulkarniMirrokni15:Robust-price} --- a crucial point toward optimal bounds.
Second, we have shown a wide applicability of our approach from complete to incomplete-information environments while 
the approach has been proved to be powerful in the context of complete information and a question has been raised in
a recent survey \cite{RoughgardenSyrgkanis16:The-Price-of-Anarchy}
whether the framework in \cite{KulkarniMirrokni15:Robust-price} could be extended to incomplete-information settings. 

The use of duality to study the PoA have been previously considered by \citet{NadavRoughgarden10:The-limits-of-smoothness:}
and \citet{Bilo12:A-unifying-tool}. Both paper follows the same approach which is different to ours. Roughly speaking, 
given a game they consider corresponding natural formulations and incorporate the equilibrium constraint directly to the primal. 
This approach surfers the integrality-gap issue when one considers pure Nash equilibria and the objectives are non-linear or non-convex.   

For the problems studied in the paper, we systematically strengthen natural LPs by the 
construction of the new configuration LPs presented in \cite{MakarychevSviridenko14:Solving-optimization}.
\citet{MakarychevSviridenko14:Solving-optimization} propose a scheme that consists of solving 
the new LPs (with exponential number of variables) and rounding the fractional solutions to integer 
ones using decoupling inequalities for optimization problems. Instead of rounding techniques, 
we consider primal-dual approaches which are very adequate to studying game efficiency.

The smoothness framework has been introduced by \citet{Roughgarden15:Intrinsic-robustness}. This simple, elegant framework gives
tight bounds for many classes of games in complete-information settings 
including the celebrated atomic congestion games (and others in \cite{Roughgarden15:Intrinsic-robustness,BhawalkarGairing14:Weighted-congestion}). 
Similar notion, local-smoothness \cite{RoughgardenSchoppmann15:Local-smoothness}, inspired
by the smooth argument has been used to study the PoA of splittable games in which players can split their flow
to arbitrarily small amounts and route the amount in different manner. The local-smoothness
is also powerful. It has been used to settle the PoA for a large class of cost functions in splittable games 
\cite{RoughgardenSchoppmann15:Local-smoothness}
and in opinion formation games \cite{BhawalkarGollapudi13:Coevolutionary-opinion}. 

The smoothness framework has been extended to incomplete-information environments by \citet{Roughgarden15:The-price-of-anarchy,SyrgkanisTardos13:Composable-and-efficient}. It has successfully  
yielded tight worst-case bounds for the equilibria of several widely-used auction formats. We strongly recommend 
the reader to a very recent survey \cite{RoughgardenSyrgkanis16:The-Price-of-Anarchy} 
for the applications of smoothness framework in incomplete-information settings.  
However, the smoothness argument has its limit in analyzing some auctions. As mentioned earlier, the most illustrative examples
are the simultaneous first and second price auctions where players' valuations are subadditive. 
\citet{FeldmanFu13:Simultaneous-auctions} have proved that 
the PoA is constant while the smooth argument gives only logarithmic guarantees. An interesting open direction, 
as raised in \cite{RoughgardenSyrgkanis16:The-Price-of-Anarchy},
is to develop new approaches beyond the smoothness framework.

Linear programming (and mathematical programming in general) has been a powerful tool in the development of 
game theory. There is a vast literature on this subject and we can only mention the most closely related to the paper. 
One of the most interesting recent treatments on the role of linear programming in game theory is the book \cite{Vohra11:Mechanism-design:}. 
\citet{Vohra11:Mechanism-design:} revisited fundamental results in mechanism design in an elegant manner by the means of 
linear programming and its duality. It is surprising to see that many results have been shaped nicely by LPs.

\subsection{Organization of Paper}
In Section~\ref{sec:smooth}, we revisit smooth games. 
In Section~\ref{sec:congestion}, we consider congestion games.
In Section~\ref{sec:welfare}, we study the problem of welfare maximization in Bayesian setting.
The models, definitions and related work are given in the beginning of each (sub-)section.

\section{Smooth Games under the Lens of Duality}	\label{sec:smooth}

In this section, we consider smooth games \cite{Roughgarden15:Intrinsic-robustness} in the point of view of configuration LPs and duality. 
In a game, each player $i$ selects a strategy $s_{i}$ from a set $\mathcal{S}_i$ for $1 \leq i \leq n$ and that forms a \emph{strategy profile} 
$\vect{s} = (s_{1}, \ldots, s_{n})$. The cost $C_{i}(\vect{s})$ of player $i$ is a function of the strategy profile $\vect{s}$ --- the chosen 
strategies of all players. A \emph{pure Nash equilibrium} is a strategy profile $\vect{s}$ such that no player can decrease 
its cost via a unilateral deviation; that is, for every player $i$ and every strategy $s'_{i} \in \mathcal{S}_i$, 
$$
C_{i}(\vect{s}) \leq C_{i}(s'_{i},\vect{s}_{-i})
$$
where $\vect{s}_{-i}$ denotes the strategies chosen by all players other than $i$ in $\vect{s}$. The notion of Nash equilibrium 
is extended to the following more general equilibrium concepts.

A \emph{mixed Nash equilibrium} \cite{Nashothers50:Equilibrium-points} 
of a game is a product distribution $\vect{\sigma} = \sigma_{1} \times \ldots \times \sigma_{n}$
where $\sigma_{i}$ is a probability distribution over the strategy set of player $i$ such that
no player can decrease its expected cost under $\vect{\sigma}$ via a unilateral deviation: 
$$
\E_{\vect{s} \sim \vect{\sigma}} [ C_{i}(\vect{s})]  \leq \E_{\vect{s}_{-i} \sim \vect{\sigma}_{-i}} [ C_{i}(s'_{i},\vect{s}_{-i}) ]
$$
for every $i$ and $s'_{i} \in \mathcal{S}_i$, where $\vect{\sigma}_{-i}$ is the product distribution of all $\sigma_{i'}$'s other than $\sigma_{i}$.

A \emph{correlated equilibrium} \cite{Aumannothers74:Subjectivity-and-correlation} of a game is a joint probability distribution $\vect{\sigma}$ over the strategy profile of the game 
such that 
$$
\E_{\vect{s} \sim \vect{\sigma}} [ C_{i}(\vect{s}) | s_{i}]  \leq \E_{\vect{s} \sim \vect{\sigma}} [ C_{i}(s'_{i},\vect{s}_{-i}) | s_{i}]
$$ 
for every $i$ and $s_{i}, s'_{i} \in \mathcal{S}_i$.

Finally, a \emph{coarse correlated equilibrium} \cite{MoulinVial78:Strategically-zero-sum} of a game is a joint probability distribution $\vect{\sigma}$ over the strategy profile of the game 
such that 
$$
\E_{\vect{s} \sim \vect{\sigma}} [ C_{i}(\vect{s})]  \leq \E_{\vect{s} \sim \vect{\sigma}} [ C_{i}(s'_{i},\vect{s}_{-i})]
$$ 
for every $i$ and $s'_{i} \in \mathcal{S}_i$.

These notions of equilibria are presented in the order from the least to the most general ones and a notion captures the previous one as 
a strict subset. %For a more detail and intuition about these notion, we refer the reader to \cite{}. 

The notion of smooth games and robust price of anarchy are given in \cite{Roughgarden15:Intrinsic-robustness}. 
A game with a joint cost objective function $C(\vect{s}) = \sum_{i=1}^{n} C_{i}(\vect{s})$ 
is \emph{$(\lambda,\mu)$-smooth} if for every two outcomes $\vect{s}$ and $\vect{s}^{*}$,
$$
\sum_{i=1}^{n} C_{i}(s^{*}_{i}, \vect{s}_{-i}) \leq \lambda \cdot C(\vect{s}^{*}) + \mu \cdot C(\vect{s})
$$  
The \emph{robust price of anarchy} of a game $G$ is  
\begin{align*}
\rho(G) := \inf \left \{ \frac{\lambda}{1-\mu} : \textnormal{ the game is $(\lambda,\mu)$-smooth where $\mu < 1$} \right\}
\end{align*}

%It is shown that the efficiency of the set of the general coarse correlated equilibria 
%is bounded by the robust PoA.

\begin{theorem}[\cite{Roughgarden15:Intrinsic-robustness}]	\label{thm:smooth-Roughgarden}
For every game $G$ with robust PoA $\rho(G)$, every coarse correlated equilibrium $\vect{\sigma}$ of $G$
and every strategy profile $\vect{s}^{*}$,
$$
\E_{\vect{s} \sim \vect{\sigma}}[C(\vect{s})] \leq \rho(G) \cdot C(\vect{s}^{*})
$$
\end{theorem}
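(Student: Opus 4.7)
The plan is to fix any pair $(\lambda,\mu)$ with $\mu < 1$ certifying that $G$ is $(\lambda,\mu)$-smooth, derive the bound $\lambda/(1-\mu)$ for that pair, and then pass to the infimum to recover $\rho(G)$. The two ingredients I would combine are the additivity $C(\vect{s}) = \sum_{i=1}^{n} C_i(\vect{s})$ — which lets me apply the coarse correlated equilibrium condition player by player — and the fact that the smoothness inequality holds pointwise on strategy profiles, so it survives taking expectation against the equilibrium distribution $\vect{\sigma}$.

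First I would invoke the CCE condition with the deviation $s'_i := s^*_i$ for each player $i$, obtaining $\E_{\vect{s}\sim\vect{\sigma}}[C_i(\vect{s})] \leq \E_{\vect{s}\sim\vect{\sigma}}[C_i(s^*_i,\vect{s}_{-i})]$. Summing over $i$ and interchanging the finite sum with the expectation yields
$$
\E_{\vect{s}\sim\vect{\sigma}}[C(\vect{s})] \;\leq\; \E_{\vect{s}\sim\vect{\sigma}}\Bigl[\sum_{i=1}^{n} C_i(s^*_i,\vect{s}_{-i})\Bigr].
$$

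Next I would apply $(\lambda,\mu)$-smoothness inside the expectation. For every outcome $\vect{s}$ in the support of $\vect{\sigma}$, comparing against the fixed profile $\vect{s}^{*}$ gives $\sum_{i=1}^{n} C_i(s^*_i,\vect{s}_{-i}) \leq \lambda\, C(\vect{s}^{*}) + \mu\, C(\vect{s})$. Taking expectation preserves this inequality, so combining with the previous step I get $\E_{\vect{s}\sim\vect{\sigma}}[C(\vect{s})] \leq \lambda\, C(\vect{s}^{*}) + \mu\, \E_{\vect{s}\sim\vect{\sigma}}[C(\vect{s})]$. Using $\mu < 1$, rearranging produces $\E_{\vect{s}\sim\vect{\sigma}}[C(\vect{s})] \leq \frac{\lambda}{1-\mu}\, C(\vect{s}^{*})$, and taking the infimum over admissible $(\lambda,\mu)$ delivers the bound $\rho(G)\cdot C(\vect{s}^{*})$.

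There is no serious obstacle here; the short proof is essentially algebraic once the two inequalities are lined up correctly. The only conceptual point worth highlighting is that smoothness is a condition on strategy profiles, not on distributions, which is precisely what makes the same three-line argument go through uniformly for pure, mixed, correlated and coarse correlated equilibria — each of these notions supplies exactly the player-by-player deviation inequality used in the first step. This parallelism between the equilibrium-free smoothness inequality and the equilibrium-specific deviation inequality is what the subsequent primal-dual reformulation in the paper is designed to make transparent: one dual constraint will encode smoothness and the other will encode the CCE condition, so the same computation will re-emerge as weak duality.
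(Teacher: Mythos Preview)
Your argument is correct and is precisely Roughgarden's original three-line proof: sum the coarse correlated equilibrium inequalities over players, apply $(\lambda,\mu)$-smoothness pointwise under the expectation, rearrange using $\mu<1$, and pass to the infimum. There is no gap.

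The paper, however, deliberately does \emph{not} argue this way. It instead sets up a configuration LP whose primal variables encode strategy choices and outcome configurations, writes down the dual, and then constructs dual variables from the equilibrium $\vect{\sigma}$: $\alpha_i = \tfrac{1}{\lambda}\E_{\vect{s}\sim\vect{\sigma}}[C_i(\vect{s})]$, $\beta = -\tfrac{\mu}{\lambda}\E_{\vect{s}\sim\vect{\sigma}}[C(\vect{s})]$, and $\gamma_{ij} = \tfrac{1}{\lambda}\E_{\vect{s}\sim\vect{\sigma}}[C_i(s_{ij},\vect{s}_{-i})]$. Feasibility of the first dual constraint is exactly the CCE condition, feasibility of the second is exactly the smoothness inequality, and the dual objective evaluates to $\tfrac{1-\mu}{\lambda}\E_{\vect{s}\sim\vect{\sigma}}[C(\vect{s})]$; weak duality then delivers the bound. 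So the same two inequalities you used reappear, but now as dual feasibility conditions rather than as a direct chain.

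What each approach buys: yours is shorter and self-contained. The paper's version is the point of the section --- it demonstrates that the smoothness framework is an instance of the primal-dual template that will be reused (with different LPs and different dual constructions) throughout the rest of the paper for congestion games, Bayesian auctions, and no-envy dynamics. Your closing paragraph already identifies this correctly: the CCE condition and the smoothness condition become the two dual constraints, and the algebraic rearrangement becomes the primal-to-dual objective ratio.
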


Until the end of the section, we revisit this theorem by our primal-dual approach. 
\paragraph{Formulation.}
Given a game, we formulate the corresponding optimization problem by a configuration LP.  
Let $x_{ij}$ be variable indicating whether player $i$ chooses strategy $s_{ij} \in \mathcal{S}_i$.
Informally, a \emph{configuration} $A$ in the formulation is a strategy profile of the game.
Formally, a configuration $A$ consists of pairs $(i,j)$ such that $(i,j) \in A$ means that in configuration 
$A$, $x_{ij} = 1$. (In other words, in this configuration, player $i$ selects strategy $s_{ij} \in \mathcal{S}_i$.)
For every configuration $A$, let $z_{A}$ be a variable such that $z_{A} = 1$ if and only if 
$x_{ij} = 1$ for all $(i,j) \in A$. Intuitively, $z_{A} = 1$ if configuration $A$ is the outcome of the game.
For each configuration $A$, let $c(A)$ be the cost of the outcome (strategy profile) corresponding to 
configuration $A$. Consider the following formulation and the dual of its relaxation.

\begin{minipage}[t]{0.45\textwidth}
\begin{align*}
\min ~ \sum_{A} &c(A) z_{A} \\
\sum_{j: s_{ij} \in \mathcal{S}_i } x_{ij} &\geq 1 & &  \forall i\\
\sum_{A} z_{A} &= 1 & &  \\
\sum_{A: (i,j) \in A} z_{A}	&= x_{ij} & & \forall i,j \\
x_{ij}, z_{A} &\in \{0,1\} & & \forall i,j,A \\
\end{align*}
\end{minipage}
\qquad
\begin{minipage}[t]{0.5\textwidth}
\begin{align*}
\max \sum_{i} \alpha_{i} &+ \beta \\
\alpha_{i} &\leq \gamma_{ij}  & &  \forall i,j\\
\beta + \sum_{(i,j) \in A} \gamma_{ij}	&\leq c(A) & & \forall A \\
\alpha_{i} &\geq 0 & & \forall i \\
\end{align*}
\end{minipage}
In the formulation, the first constraint ensures that a player $i$ chooses a strategy $s_{ij} \in \mathcal{S}_i$. 
The second constraint means that there must be an outcome of the game. 
The third constraint guarantees that if a player $i$ selects some strategy $s_{ij}$ 
then the outcome configuration $A$ must contain $(i,j)$. 

\paragraph{Construction of dual variables.}
Assuming that the game is $(\lambda,\mu)$-smooth. Fix the parameters $\lambda$ and $\mu$.
Given a (arbitrary) coarse correlated equilibrium $\vect{\sigma}$, define dual variables as follows:
\begin{align*}
\alpha_{i} := \frac{1}{\lambda} \E_{\vect{s} \sim \vect{\sigma}}[C_{i}(\vect{s})], \qquad
\beta := - \frac{\mu}{\lambda} \E_{\vect{s} \sim \vect{\sigma}} [ C(\vect{s})], \qquad
\gamma_{ij} := \frac{1}{\lambda} \E_{\vect{s} \sim \vect{\sigma}} [ C_{i}(s_{ij},\vect{s}_{-i}) ].
\end{align*}
Informally, up to some constant factors depending on $\lambda$ and $\mu$, 
$\alpha_{i}$ is the cost of player $i$ in equilibrium $\vect{\sigma}$, 
$- \beta$ stands for the cost of the game in equilibrium $\vect{\sigma}$ and 
$\gamma_{ij}$ represents the cost of player $i$ if player $i$ uses
strategy $s_{ij}$ while other players $i' \neq i$ follows strategies in $\vect{\sigma}$.
We notice that $\beta$ has negative value.

\paragraph{Feasibility.} We show that the constructed dual variables form a feasible solution. 
The first constraint follows exactly the definition of (coarse correlated) equilibrium. 
The second constraint is exactly the smoothness definition.
Specifically, let $\vect{s}^{*}$ be the strategy profile corresponding to configuration $A$. 
Note that $\E_{\vect{s} \sim \vect{\sigma}} [ C_{i}(\vect{s}^{*})] = C_{i}(\vect{s}^{*})$. 
The dual constraint reads
$$
- \frac{\mu}{\lambda} \E_{\vect{s} \sim \vect{\sigma}} [ C(\vect{s})]
+ \sum_{i} \frac{1}{\lambda} \E_{\vect{s} \sim \vect{\sigma}} [ C_{i}(s^{*}_{i},\vect{s}_{-i}) ]
\leq \E_{\vect{s} \sim \vect{\sigma}} [ C_{i}(\vect{s}^{*})]
$$  
which is the definition of $(\lambda,\mu)$-smoothness by arranging the terms and removing the expectation.

\paragraph{Price of Anarchy.} By weak duality, the optimal cost among all outcomes of the problem (strategy profiles
of the game) is at least the dual objective of the constructed dual variables. Hence, in order to bound the PoA,
we will bound the ratio between the cost of an (arbitrary) equilibrium $\vect{\sigma}$ and the dual objective of the corresponding dual variables.
The cost of equilibrium $\vect{\sigma}$ is $\E_{\vect{s} \sim \vect{\sigma}} [ C(\vect{s})]$ while the dual objective of the constructed 
dual variables is 
$$
\sum_{i=1}^{n}  \frac{1}{\lambda} \E_{\vect{s} \sim \vect{\sigma}}[C_{i}(\vect{s})] 
- \frac{\mu}{\lambda} \E_{\vect{s} \sim \vect{\sigma}} [ C(\vect{s})]
= \frac{1 - \mu}{\lambda} \E_{\vect{s} \sim \vect{\sigma}} [ C(\vect{s})]
$$
Therefore, for a $(\lambda,\mu)$-smooth game, the PoA is at most $\lambda/(1-\mu)$.

\paragraph{Remark.} 
Having shown in \cite{Roughgarden15:Intrinsic-robustness}, 
Theorem \ref{thm:smooth-Roughgarden} applies also to outcome sequences generated by 
repeated play such as vanishing average regret. By the same duality approach, we can also recover this result (by setting 
dual variables related to the average cost during the play). 

\section{Congestion Games}	\label{sec:congestion}

\subsection{Atomic Congestion Games}

\paragraph{Model.} Atomic congestion games were defined by \citet{Rosenthal73:A-class-of-games}. In this section, we consider 
atomic weighted congestion games, a generalized version of the standard congestion game. In a game, we are given 
a ground set $E$ of resources, a set of $n$ players with strategy sets $\mathcal{S}_{1}, \ldots, \mathcal{S}_{n} \subseteq 2^{E}$ and weights 
$w_{1}, \ldots, w_{n}$ and a cost function $\ell_{e}: \mathbb{R}^{+} \rightarrow \mathbb{R}^{+}$ for each resource $e \in E$. 
Note that the weighted setting generalizes the standard congestion games in which $w_{i} = 1$ for all players $i$.
Given a strategy profile $\vect{s} = (s_{1}, \ldots, s_{n})$ where $s_{i} \in \mathcal{S}_i$ for each player $i$, 
we say that $w_{e}(\vect{s}) = \sum_{i: e \in s_{i}} w_{i}$ is the \emph{load} induced on $e$ by $\vect{s}$. 
The cost of a player $i$ is defined as $C_{i}(\vect{s}) =  \sum_{e: e \in s_{i}} w_{i} \cdot \ell_{e}(w_{e})$ where $w_{e}$ is the load on resource 
$e$ induced by profile $\vect{s}$. The total cost of the game in profile $\vect{s}$ is $C(\vect{s}) = \sum_{i=1}^{n} C_{i}(\vect{s}) 
= \sum_{e: e \in s_{i}} w_{e}(\vect{s}) \cdot \ell_{e}\bigl(w_{e}(\vect{s})\bigr) $.

The PoA of atomic congestion games has been a extensively studied topic in algorithmic game theory. Most notably, 
\citet{Roughgarden15:Intrinsic-robustness} proved that the smoothness argument gave tight bounds for (unweighted) atomic congestion games. 
For the weighted setting, \citet{BhawalkarGairing14:Weighted-congestion} showed that the smoothness framework also gave tight bounds 
for large classes of congestion games.  

In this section, we reprove the upper bound \cite{Roughgarden15:Intrinsic-robustness,BhawalkarGairing14:Weighted-congestion} 
on the PoA in atomic congestion games. The result is proved by the same duality approach described in 
Section \ref{sec:smooth}, but we keep representing here for the following purposes. First, we give a slightly different formulation
of the configuration LP. To establish smoothness, all current proofs are based on smooth-inequalities related to resources. 
The new formulation is given to capture the smooth-inequality notion on resources. Second, the new proof will be used later to show 
that in term of PoA, the atomic congestion games have a strong connection with non-atomic and splittable congestion games under 
the viewpoint of duality.     

We say that a cost function $\ell_{e}: \mathbb{R}^{+} \rightarrow \mathbb{R}^{+}$ for a resource $e$ 
is \emph{$(\lambda,\mu)$-resource-smooth} if for every sequences of non-negative real numbers 
$(a_{i})_{i=1}^{n}$ and $(b_{i})_{i=1}^{n}$, it holds that
$$
\sum_{i=1}^{n} \ell_{e}\biggl( \sum_{j=1}^{i} a_{j} + b_{i}\biggr) 
	\leq \lambda \cdot \ell_{e}\biggl( \sum_{i=1}^{n} b_{i} \biggr) + \mu \cdot \ell_{e}\biggl( \sum_{i=1}^{n} a_{i} \biggr) 
$$  

\begin{theorem}[\cite{Roughgarden15:Intrinsic-robustness,BhawalkarGairing14:Weighted-congestion}]	\label{thm:atomic}
Let $\mathcal{L}$ be a non-empty set of cost functions. The PoA of every coarse correlated equilibrium of every 
(weighted) atomic congestion game with cost functions $\ell_{e} \in \mathcal{L}$ is at most 
$$
\inf \biggl\{ \frac{\lambda}{1-\mu}: \ell_{e} \textnormal{ is $(\lambda,\mu)$-resource-smooth where $\mu < 1$ } \forall e \in E \biggr\}
$$
\end{theorem}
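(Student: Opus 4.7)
The plan is to repeat the primal-dual scheme of Section~\ref{sec:smooth} with a slightly different configuration LP whose cost decomposes across resources, so that the main dual constraint to be checked becomes one per resource rather than one per strategy profile. This is precisely the shift that allows us to invoke the $(\lambda,\mu)$-resource-smoothness hypothesis, which is itself stated one resource at a time.

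For the formulation, in addition to the strategy-selection variables $x_{ij}$, I would introduce, for each resource $e\in E$ and each subset $T\subseteq\{1,\dots,n\}$, a variable $z_{e,T}$ meant to encode that $T$ is exactly the set of players using $e$; the identities $\sum_{T\ni i} z_{e,T}=\sum_{j:\,e\in s_{ij}} x_{ij}$ and $\sum_T z_{e,T}=1$ enforce consistency with the $x$'s, and the objective is $\sum_e\sum_T w_e(T)\,\ell_e(w_e(T))\,z_{e,T}$ with $w_e(T)=\sum_{i\in T} w_i$. Dualizing produces (i) one constraint per $(i,j)$ of the form $\alpha_i\le\sum_{e\in s_{ij}}\gamma_{e,i}$, and (ii) one constraint per $(e,T)$ of the form $\beta_e+\sum_{i\in T}\gamma_{e,i}\le w_e(T)\,\ell_e(w_e(T))$. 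Given any coarse correlated equilibrium $\vect{\sigma}$, I would set
\[
\alpha_i=\tfrac{1}{\lambda}\E_{\vect{s}\sim\vect{\sigma}}[C_i(\vect{s})],\qquad \beta_e=-\tfrac{\mu}{\lambda}\E_{\vect{s}\sim\vect{\sigma}}\bigl[w_e(\vect{s})\,\ell_e(w_e(\vect{s}))\bigr],
\]
and choose the $\gamma_{e,i}$'s so that $\sum_{e\in s_{ij}}\gamma_{e,i}=\tfrac{1}{\lambda}\E_{\vect{s}\sim\vect{\sigma}}[C_i(s_{ij},\vect{s}_{-i})]$, for instance by taking $\gamma_{e,i}=\tfrac{1}{\lambda}\E_{\vect{s}\sim\vect{\sigma}}[w_i\,\ell_e(w_e(\vect{s}_{-i})+w_i)]$ and splitting across the resources of $s_{ij}$.

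Feasibility of (i) is then exactly the coarse-correlated-equilibrium inequality for player $i$ deviating to $s_{ij}$. Feasibility of (ii), with $T$ playing the role of the footprint of the deviators on $e$, should reduce after taking expectations over $\vect{s}\sim\vect{\sigma}$ to the $(\lambda,\mu)$-resource-smooth inequality applied with $a_i=w_i\,\one[e\in s_i]$ and $b_i=w_i\,\one[i\in T]$. Once feasibility is in place, weak duality bounds the optimum from below by $\sum_i\alpha_i+\sum_e\beta_e=\tfrac{1-\mu}{\lambda}\,\E_{\vect{s}\sim\vect{\sigma}}[C(\vect{s})]$, which delivers the claimed bound $\lambda/(1-\mu)$ on the PoA of every coarse correlated equilibrium.

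The main obstacle I expect lies in step (ii): the resource-smoothness definition is phrased using an \emph{ordered} prefix sum $\sum_{j\le i} a_j+b_i$, whereas the dual constraint arising from $z_{e,T}$ only sees the unordered set $T$. Matching the two therefore requires either a symmetrization argument (averaging the $\gamma_{e,i}$'s over orderings of the players touching $e$) or, equivalently, refining $z_{e,T}$ to be indexed by ordered tuples of players on $e$ so that the order used in the resource-smooth definition is materialized in the primal. Apart from this matching, the rest of the argument is mechanical and parallels Section~\ref{sec:smooth}.
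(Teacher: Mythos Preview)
Your proposal is essentially the paper's own proof: the same resource-indexed configuration LP (variables $x_{ij}$ and $z_{e,T}$ with $\sum_T z_{e,T}=1$ and $\sum_{T\ni i}z_{e,T}=\sum_{j:e\in s_{ij}}x_{ij}$), the same dual variables $\alpha_i=\tfrac{1}{\lambda}\E[C_i(\vect{s})]$, $\beta_e=-\tfrac{\mu}{\lambda}\E[w_e(\vect{s})\ell_e(w_e(\vect{s}))]$, $\gamma_{i,e}=\tfrac{1}{\lambda}\E[w_i\ell_e(w_e(\vect{s}_{-i})+w_i)]$, and the same two-step feasibility check (first constraint = CCE inequality, second constraint = resource-smoothness), followed by the identical computation $\sum_i\alpha_i+\sum_e\beta_e=\tfrac{1-\mu}{\lambda}\E[C(\vect{s})]$.

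Regarding the obstacle you flag about the ordered prefix sum in the resource-smoothness definition versus the unordered set $T$ in the dual constraint: the paper does not work this out either. It simply asserts that the inequality $\sum_{i\in T} w_i\ell_e(w_e(\vect{s}_{-i})+w_i)\le \lambda\, w(T)\ell_e(w(T))+\mu\, w_e(\vect{s})\ell_e(w_e(\vect{s}))$ ``is exactly the smoothness definition'' and moves on, so your skepticism is well placed but you have not diverged from the paper in any way; you have in fact been more careful about what remains to be checked.
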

\begin{proof} \\
\textbf{Formulation.} Let $x_{ij}$ be variable indicating whether player $i$ chooses strategy $s_{ij} \in \mathcal{S}_i$.
For every resource $e$, let $z_{eT}$ be a variable such that 
$z_{eT} = 1$ if and only if every player $i \in T$ uses resource $e$, i.e., $e \in s_{i}$, and player 
$i \notin T$ does not use resource $e$. Denote $w(T) = \sum_{i \in T} w_{i}$. Consider the following integer program and its dual.
In the primal, the first constraint says that a player $i$ has to select a strategy $s_{ij} \in \mathcal{S}_i$. 
The second constraint means that a subset of players $T$ will use resource $e$. 
The third constraint guarantees that if a player $i$ chooses some strategy $s_{ij} \in \mathcal{S}_i$ containing 
resource $e$ then there must be a subset of players $T$ such that $i \in T$ and $z_{eT} = 1$. 

\begin{minipage}[t]{0.45\textwidth}
\begin{align*}
\min ~ \sum_{e} w(T) & \ell_{e}(w(T)) z_{eT} \\
\sum_{j} x_{ij} &\geq 1 & &  \forall i\\
\sum_{T} z_{eT} &= 1 & & \forall e \\
\sum_{T: i \in T} z_{eT}	&= \sum_{j: e \in s_{ij}} x_{ij} & & \forall i, e \\
x_{ij}, z_{eT} &\in \{0,1\} & & \forall i,j,e,T \\
\end{align*}
\end{minipage}
\qquad
\begin{minipage}[t]{0.5\textwidth}
\begin{align*}
\max \sum_{i} \alpha_{i} &+ \sum_{e} \beta_{e} \\
\alpha_{i} &\leq \sum_{e: e \in s_{ij}} \gamma_{i,e}  & &  \forall i,j\\
\beta_{e} + \sum_{i \in T} \gamma_{i,e}	&\leq w(T) \ell_{e}(w(T))  & & \forall e,T \\
\alpha_{i} &\geq 0 & & \forall i \\
\end{align*}
\end{minipage}

\paragraph{Dual Variables.} Fix parameters $\lambda$ and $\mu$.
Given a coarse correlated equilibrium $\vect{\sigma}$, define corresponding dual variables as follows.
\begin{align*}
\alpha_{i} = \frac{1}{\lambda} \E_{\vect{s} \sim \vect{\sigma}}[C_{i}(\vect{s})], \quad
\beta_{e} := - \frac{\mu}{\lambda} \E_{\vect{s} \sim \vect{\sigma}} \biggl[\sum_{i: e \in s_{i}} w_{i} \ell_{e}(w_{e}(\vect{s})) \biggr], \quad
\gamma_{i,e} = \frac{1}{\lambda} \E_{\vect{s} \sim \vect{\sigma}} \bigl[w_{i} \cdot \ell_{e}\bigl(w_{e}(\vect{s}_{-i}) + w_{i} \bigr) \bigr]
\end{align*}
where $w_{e}(\vect{s}_{-i}) = \sum_{i' \neq i, e \in s_{i'}} w_{i'}$.
Informally, up to some constant factors, $\alpha_{i}$ is the cost of player $i$ in equilibrium $\vect{\sigma}$, 
$-\beta_{e}$ stands for the total cost of players on resource $e$ in this equilibrium and 
$\gamma_{i,e}$ represents the cost of player $i$ on resource $e$ if player $i$ uses
strategy containing $e$ while other players $i'$ follows strategy $s_{i'}$ for all $i' \neq i$.

\paragraph{Feasibility.} By this definition of dual variables, the first dual constraint follows 
from the definition of coarse correlated equilibrium. 
The second dual constraint is satisfied due to the smoothness definition. 
Specifically, the constraint for a resource $e$ and a subset of players $T$ reads
\begin{align*}
- \E_{\vect{s} \sim \vect{\sigma}} \biggl[ \frac{\mu}{\lambda} w_{e}(\vect{s}) \ell_{e}\bigl(w_{e}(\vect{s})\bigr) \biggr]
+ \E_{\vect{s} \sim \vect{\sigma}} \biggl[ \frac{1}{\lambda} \sum_{i \in T} w_{i} \ell_{e}\bigl(w_{e}(\vect{s}_{-i}) + w_{i} \bigr) \biggr]
	\leq w(T) \ell_{e}(w(T))
\end{align*}
The inequality holds since without expectation and by linearity of expectation 
(and also $\E_{\vect{s} \sim \vect{\sigma}} \bigl[ w(T) \linebreak \cdot \ell_{e}(w(T)) \bigr] = w(T) \ell_{e}(w(T))$),
it is exactly the smoothness definition.

\paragraph{Bounding primal and dual.} The PoA is bounded by the ratio between the primal objective and the dual one. 
Note that $\sum_{i} \alpha_{i} =  \sum_{i} \frac{1}{\lambda} \E_{\vect{s} \sim \vect{\sigma}}[C_{i}(\vect{s})] = \E_{\vect{s} \sim \vect{\sigma}} \bigl[ \frac{1}{\lambda}\sum_{e} w_{e}(\vect{s}) \ell_{e}(w_{e}(\vect{s})) \bigr]$. 
Therefore, 
$$
\sum_{i} \alpha_{i} + \sum_{e} \beta_{e} = \frac{1-\mu}{\lambda} \sum_{e} w_{e}(\vect{s}) \ell_{e}(w_{e}(\vect{s}))
$$
Hence, $\PoA \leq \lambda/(1-\mu)$.
\end{proof}

%%%****************************
%%%****************************
\subsection{Nonatomic Congestion Games}
\paragraph{Model.} Non-atomic congestion games were defined by \citet{RoughgardenTardos04:Bounding-the-inefficiency}, 
motivated by the non-atomic routing games of 
\citet{Wardrop52:ROAD-PAPER.} and \citet{BeckmannMcGuire56:Studies-in-the-Economics} and the congestion games of 
\citet{Rosenthal73:A-class-of-games}. We consider a discrete version of non-atomic congestion games.
The main purpose of restricting to discrete settings is that we can use tools from linear programming.
The continuous settings can be done by considering successively finer discrete spaces.   

Fix a constant $\epsilon$ (arbitrarily small). A non-atomic congestion game consists of a ground set $E$ of resources and 
$n$ different types of players. The set of strategies of players of type $i$ is $\mathcal{S}_{i}$
 and each strategy consists of a subset of resources. Players of type $i$ are associated to an integer number $m_{i}$
 that corresponds to a total amount $w_{i} := m_{i} \cdot \epsilon$.
% To a player type $i$, a strategy $S \in \mathcal{S}_{i}$ and a resource $e \in S$, 
% we associate a positive \emph{rate of consumption} $r_{e,S}$ that defines the amount of 
% congestion contributed to resource $e$ by player of type $i$ selecting strategy $S$.
Players of type $i$ select strategies $s_{ij} \in \mathcal{S}_{i}$ and distribute amounts $f_{s_{ij}}$ ---
a non-negative \emph{multiple of $\epsilon$} --- to strategy $s_{ij}$, 
which lead to a strategy distribution $\vect{f} = (f_{s_{ij}})$ with $\sum_{s_{ij} \in \mathcal{S}_{i}} f_{s_{ij}} = w_{i} = m_{i} \epsilon$ 
for player type $i$. 
We abuse notation and let $f_{e}$ be the total amount of congestion induced on resource $e$ by the strategy distribution $\vect{f}$. 
That is, $f_{e} := \sum_{i=1}^{n} \sum_{e \in s_{ij}} f_{s_{ij}}$. 
Each resource has a non-decreasing cost function $\ell_{e} : \mathbb{R}^{+} \rightarrow \mathbb{R}^{+}$. 
With respect to a strategy distribution $\vect{f}$, players of type $i$ selecting strategy $s_{ij} \in \mathcal{S}_{i}$ incurs
a cost $C_{s_{ij}}(\vect{f}) = \sum_{e \in s_{ij}} \ell_{e}(f_{e})$.
A strategy distribution $\vect{f}$ is an \emph{pure equilibrium} if for each player type $i$ and strategy $s_{ij}, s_{ij'} \in \mathcal{S}_{i}$
with $f_{s_{ij}} >0$, 
$$
C_{s_{ij}}(\vect{f}) \leq C_{s_{ij'}}(\vect{f})
$$
The more general equilibrium concept such as mixed, correlated and coarse correlated equilibria, are defined 
similarly as in Section~\ref{sec:smooth}. 
The social cost of a strategy distribution $\vect{f}$ is 
$$
C(\vect{f}) = \sum_{i=1}^{n} \sum_{s_{ij} \in \mathcal{S}_{i}} f_{s_{ij}} \cdot C_{s_{ij}}(\vect{f}) 
= \sum_{e} f_{e} \cdot \ell_{e}\bigl(f_{e}\bigr) 
$$

For non-atomic congestion games, tight bounds on the PoA for almost all classes of cost function have been given in 
\cite{RoughgardenTardos04:Bounding-the-inefficiency}. The core of all analyses for PoA bounds is indeed the characterization of 
the unique equilibrium via a variational inequality due to \citet{BeckmannMcGuire56:Studies-in-the-Economics}. 
This argument is explained in \cite{CorreaSchulz08:A-geometric-approach,CominettiCorrea09:The-impact-of-oligopolistic}. 
Moreover, the connection between smoothness arguments and PoA bounds for non-atomic congestion games was revealed in \cite{CorreaSchulz08:A-geometric-approach}. 

\subsubsection{Efficiency of Non-Atomic Congestion Games}

In this section, we reprove the tight bound for non-atomic congestion games by the duality approach. It has been shown that 
in non-atomic congestion games all equilibria are essentially unique; specifically, all coarse correlated equilibria of a non-atomic congestion
game have the same cost \cite{BlumEven-Dar10:Routing-Without}. Hence, the robust PoA is indeed the PoA of pure 
Nash equilibrium. However, as we do not use the equilibrium characterization from \cite{BeckmannMcGuire56:Studies-in-the-Economics}, 
we will prove the PoA bound for coarse correlated equilibria. Consequently, the tight PoA bound can be proved for non-regret sequences
and short best-reponse sequences. Moreover, we avoid the standard assumptions on the cost functions: $x\ell_{e}(x)$ is convex and 
$\ell_{e}(x)$ is differentiable.  

Let $\mathcal{L}$ be a non-empty set of cost functions. The \emph{Pigou bound} $\xi(\mathcal{L})$ for 
$\mathcal{L}$ is defined as
$$
\xi(\mathcal{L}) := \sup_{\ell \in \mathcal{L}} \sup_{u,v} \frac{u \cdot \ell(u)}{ v \cdot \ell(v) + (u-v) \cdot \ell(u)}
$$

\begin{theorem}[\cite{RoughgardenTardos04:Bounding-the-inefficiency}]	\label{thm:non-atomic}
Let $\mathcal{L}$ be a set of cost functions. Then, for every splittable congestion game $G$ with 
cost functions in $\mathcal{L}$, the price of anarchy of $G$
is at most $\xi(\mathcal{L})$.
\end{theorem}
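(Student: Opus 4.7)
The plan is to apply the primal-dual template of Theorem \ref{thm:atomic} by viewing the discretized non-atomic game as a weighted atomic congestion game in which each type $i$ is split into $m_i$ identical subplayers $(i,k)$ of weight $\epsilon$, and then passing to the limit $\epsilon \to 0$. The configuration LP and its dual from the proof of Theorem \ref{thm:atomic} apply directly to these subplayers; the only remaining work is to choose appropriate dual variables from the equilibrium and to verify feasibility with Pigou-type parameters.

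Given a coarse correlated equilibrium $\vect{\sigma}$, I would set the dual variables with $\lambda = 1$ and $\mu = 1 - 1/\xi(\mathcal{L})$, mimicking the atomic recipe:
\begin{align*}
\alpha_{(i,k)} &:= \E_{\vect{s}\sim\vect{\sigma}}\bigl[C_{(i,k)}(\vect{s})\bigr], \\
\gamma_{(i,k),e} &:= \epsilon \cdot \E_{\vect{s}\sim\vect{\sigma}}\bigl[\ell_e\bigl(w_e(\vect{s}_{-(i,k)}) + \epsilon\bigr)\bigr], \\
\beta_e &:= -\bigl(1 - \tfrac{1}{\xi(\mathcal{L})}\bigr)\, \E_{\vect{s}\sim\vect{\sigma}}\bigl[w_e(\vect{s})\, \ell_e(w_e(\vect{s}))\bigr].
\end{align*}
The first dual constraint $\alpha_{(i,k)} \le \sum_{e \in s_{(i,k),j}} \gamma_{(i,k),e}$ is satisfied exactly, since it coincides with the coarse-correlated-equilibrium condition applied to subplayer $(i,k)$ deviating to $s_{(i,k),j}$ (using $w_e(s_{(i,k),j}, \vect{s}_{-(i,k)}) = w_e(\vect{s}_{-(i,k)}) + \epsilon$ whenever $e \in s_{(i,k),j}$).

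The second dual constraint is the crux and is where Pigou's bound enters. For a resource $e$ and a set $T$ of subplayers with $v := w(T) = |T|\epsilon$, monotonicity of $\ell_e$ gives $\ell_e(w_e(\vect{s}_{-(i,k)}) + \epsilon) \le \ell_e(w_e(\vect{s}) + \epsilon)$, so it suffices to prove
\[
v\, \ell_e(u+\epsilon) - \bigl(1 - \tfrac{1}{\xi(\mathcal{L})}\bigr)\, u\, \ell_e(u) \;\le\; v\, \ell_e(v)
\]
pointwise in $u = w_e(\vect{s}) \ge 0$. In the limit $\epsilon \to 0$ this reduces to $v\, \ell_e(u) \le v\, \ell_e(v) + (1 - 1/\xi(\mathcal{L}))\, u\, \ell_e(u)$, which is an algebraic rearrangement of the Pigou bound $u\, \ell_e(u) \le \xi(\mathcal{L})\,\bigl[v\, \ell_e(v) + (u-v)\, \ell_e(u)\bigr]$ and therefore holds by the very definition of $\xi(\mathcal{L})$. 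The primal-dual computation of Theorem \ref{thm:atomic} then yields $\E[C(\vect{s})] \le \xi(\mathcal{L})\, C(\vect{f^*})$ by weak duality. The main obstacle will be absorbing the $o(1)$ slack coming from $\ell_e(u+\epsilon)$ versus $\ell_e(u)$; this slack vanishes as $\epsilon \to 0$ by continuity of $\ell_e$ on a bounded domain (or equivalently by refining the discretization to successively smaller $\epsilon$), which is precisely where the non-atomic character of the game is genuinely used --- individual subplayers become negligible in the limit, making the atomic smoothness parameters converge to the Pigou parameters $(1, 1 - 1/\xi(\mathcal{L}))$.
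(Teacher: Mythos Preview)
Your approach is correct (modulo the continuity hypothesis you flag at the end), but it differs from the paper's route in two linked ways. First, the paper does \emph{not} reduce to atomic subplayers and does not fix $(\lambda,\mu)=(1,1-1/\xi)$ in advance. Instead it uses the non-atomic equilibrium condition $C_{s_{ij}}(\vect{f})\le C_{s_{ij'}}(\vect{f})$ directly, which lets it take $\gamma_{i,k,e}=\E[a_k\,\ell_e(f_e)]$ at the \emph{equilibrium} load (no $+\epsilon$ shift), so no limit is ever needed. Second, the paper sets $\beta_e$ not as a fixed multiple of the equilibrium cost but as the infimum $\beta_e:=\inf_T\{w(T)\ell_e(w(T))-\E[\sum_{(i,k)\in T}a_k\ell_e(f_e)]\}$; this makes the second dual constraint hold \emph{by definition}, and the Pigou bound only enters at the very end when bounding the primal/dual ratio $\sum_e f_e\ell_e(f_e)\big/\sum_e\bigl(f_e\ell_e(f_e)+v_e\ell_e(v_e)-v_e\ell_e(f_e)\bigr)$. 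Your version buys conceptual unity with the atomic proof and makes the ``non-atomic $=$ limit of atomic'' intuition explicit; the paper's version buys an exact bound at every discretization level without any continuity assumption on $\ell_e$, and the infimum device for $\beta_e$ is a reusable trick (it reappears in the resource-augmentation section).
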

\begin{proof} \\
\textbf{Formulation.}
Denote a finite set of multiples of $\epsilon$ as $\{a_{0}, a_{1}, \ldots, a_{m}\}$ where $a_{k} = k\cdot \epsilon$
and $m = \max_{i=1}^{n} m_{i}$.   
We say that $T$ is a \emph{configuration} of a resource $e$ if $T$ consists of couples $(i,k)$ that specifies 
player type $i$ distributes an amount $a_{k} $ to a strategy $s_{ij} \in \mathcal{S}_{i}$ where 
$e \in s_{ij}$. Note that in a configuration $T$ of a resource $e$, 
there might be multiple couples $(i,k) \in T$ and $(i,k') \in T$ corresponding to players of the same type. 
It simply means that players of type $i$ distribute the amounts $a_{k}$ and $a_{k'}$ to some strategies $s_{ij}$ and 
$s_{ij'}$ respectively that contains resource $e$, i.e., $e \in s_{ij}$ and $e \in s_{ij'}$.
Intuitively, a configuration of a resource is a strategy distribution of a game restricted on the resource. 

Let $x_{ijk}$ be variable indicating whether player type $i$ distributes an amount $a_{k}$ to strategy $s_{ij} \in \mathcal{S}_{i}$.
For every resource $e$ and a configuration $T$, let $z_{eT}$ be a variable such that 
$z_{eT} = 1$ if and only if players type $i$ distributes $a_{k}$ to some strategy 
containing resource $e$ for $(i,k) \in T$. In other words, $z_{eT} = 1$ if and only if for $(i,k) \in T$, $x_{ijk} = 1$ 
for some $s_{ij} \in \mathcal{S}_i$ such that $e \in s_{ij}$. 
For a configuration $T$ of a resource $e$, let $w(T)$ be the total amount distributed by players on resource $e$
in this configuration.
Consider the following configuration integer program and its dual.

\begin{minipage}[t]{0.45\textwidth}
\begin{align*}
\min ~ \sum_{e} w(T) & \ell_{e}(w(T)) z_{eT} \\
\sum_{j,k} a_{k} x_{ijk} &\geq w_{i} & &  \forall i\\
\sum_{T} z_{eT} &= 1 & & \forall e \\
\sum_{T: (i,k) \in T} z_{eT}	&= \sum_{j: e \in s_{ij}} x_{ijk} & & \forall (i,k), e \\
x_{ijk}, z_{eT} &\in \{0,1\} & & \forall i,j,e,T \\
\end{align*}
\end{minipage}
\qquad
\begin{minipage}[t]{0.45\textwidth}
\begin{align*}
\max \sum_{i} w_{i} \alpha_{i} &+ \sum_{e} \beta_{e} \\
a_{k} \alpha_{i} &\leq \sum_{e: e \in s_{ij}} \gamma_{i,k,e}  & &  \forall i,k,j\\
\beta_{e} + \sum_{(i,k) \in T} \gamma_{i,k,e}	&\leq w(T) \ell_{e}(w(T))  & & \forall e,T \\
\alpha_{i} &\geq 0 & & \forall i \\
\end{align*}
\end{minipage}

In the primal, the first constraint ensures that players of type $i$ distribute the total amount $w_{i}$ among its strategies.
The second constraint means that a resource $e$ is always associated to a configuration (possibly empty). 
The third constraint guarantees that if player type $i$ distributes an amount $a_{k}$ to some strategy $s_{ij}$ containing 
resource $e$ then there must be a configuration $T$ such that $(i,k) \in T$ and $z_{eT} = 1$.

\paragraph{Dual Variables.} 
Given a coarse correlated equilibrium $\vect{\sigma}$, define the corresponding dual variables as follows.
\begin{align*}
\alpha_{i} &:= \E_{\vect{f} \sim \vect{\sigma}}\biggl[\sum_{e \in s_{ij}} \ell_{e}(f_{e})\biggr] \textnormal{ for some } s_{ij} \in \mathcal{S}_{i}: f_{s_{ij}} > 0, \\
\gamma_{i,k,e} &:=  \E_{\vect{f} \sim \vect{\sigma}} \bigl[ a_{k} \cdot \ell_{e}(f_{e}) \bigr], \\
\beta_{e} &:= \inf_{T} \biggl\{ w(T)\ell_{e}\bigl( w(T) \bigr) 
	- \E_{\vect{f} \sim \vect{\sigma}} \biggl[ \sum_{(i,k) \in T} a_{k} \cdot \ell_{e}(f_{e}) \biggr] \biggr\}
\end{align*}

The dual variables have similar interpretations as previous analysis. 
Variable $\alpha_{i}$ is the total cost of resources in a strategy used by player type $i$ in equilibrium $\vect{\sigma}$ and 
$\gamma_{i,k,e}$ represents an estimation of the cost of player $i$ on resource $e$ if player type $i$ distributes
an amount $a_{k}$ in some strategy containing $e$ while 
other players $i'$ follows their strategies in $\vect{\sigma}$.

\paragraph{Feasibility.} By this definition of dual variables, the first dual constraint holds 
since it is the definition of coarse correlated equilibrium. 
The second dual constraint for a resource $e$ and a configuration $T$ reads
\begin{align*}
\beta_{e} + \sum_{(i,k) \in T} \E_{\vect{f} \sim \vect{\sigma}} \bigl[ a_{k} \cdot \ell_{e}(f_{e}) \bigr]
	\leq w(T) \ell_{e}(w(T)) 
\end{align*}
This inequality follows directly from the definition of $\beta$-variables and
linearity of expectation.

\paragraph{Bounding primal and dual.} 
For each resource $e$, let $v_{e}$ be the amount in $T$ corresponding the 
infimum in the definition of $\beta_{e}$. (As we consider discrete and finite settings, 
the infimum is indeed an minimum.)
The dual objective is
\begin{align*}
\sum_{i} w_{i} \alpha_{i} + \sum_{e} \beta_{e}
&=  \E_{\vect{f} \sim \vect{\sigma}} \left[ \sum_{e} \biggl( f_{e} \ell_{e}(f_{e})  + v_{e} \ell_{e}(v_{e})
				- v_{e} \ell_{e}(f_{e}) \biggr) \right] 	
\end{align*}
where in the equalities, we use the definition of dual variables.
Note that the term $\bigl( f_{e} \ell_{e}(f_{e})  + v_{e} \ell_{e}(v_{e}) - v_{e} \ell_{e}(f_{e}) \bigr) \geq 0$ for every resource 
$e$. Specifically, since $\ell_{e}$ is non-decreasing, 
if $f_{e} \geq v_{e}$ then $f_{e} \ell_{e}(f_{e}) \geq  v_{e} \ell_{e}(f_{e})$; else
$v_{e} \ell_{e}(v_{e}) \geq  v_{e} \ell_{e}(f_{e})$.

Besides, the primal objective is $\E_{\vect{f} \sim \vect{\sigma}} \bigl[ \sum_{e} f_{e} \ell_{e}(f_{e}) \bigr]$.
Hence, the ratio between primal and dual is at most
$$
\max_{e} \frac{f_{e} \ell(f_{e})}{ v_{e} \ell_{e}(v_{e}) + (f_{e}-v_{e})\ell_{e}(f_{e})}
$$
which is bounded by $\xi(\mathcal{L})$ where $\mathcal{L}$ is the class of cost functions on resources in the game.
\end{proof}

\paragraph{Remark.} The proofs of Theorem \ref{thm:atomic} and Theorem \ref{thm:non-atomic} are essentially the same. 
By the duality approach as a unifying tool, the main difference in term of equilibrium efficiency between atomic and non-atomic congestion games 
is due to the definition of player cost. In the context of large games \cite{FeldmanImmorlica16:The-price-of-anarchy}, 
while the weight of a player is negligible then the player cost in a atomic congestion game coincides with the one in the corresponding 
non-atomic congestion game. In this context, the PoA in atomic congestion game tends to that in non-atomic setting.    

\subsubsection{Resource Augmentation in Non-Atomic Congestion Games}
\citet{RoughgardenTardos02:How-bad-is-selfish-routing} proved that in every non-atomic selfish routing game, the cost of an equilibrium is upper bounded by that of an optimal 
solution that routes twice as much traffic. In this section, we recover this result by the mean of linear programming duality. 
Resource augmentation have been widely studied in many contexts in algorithms. Recently, 
\citet{LucarelliThang16:Online-Non-preemptive} have presented an unified 
approach to study resource augmentation in online (scheduling) problems based on primal-dual techniques. We will follow this 
framework to prove the resource augmentation result in non-atomic congestion games. 

Let $(G, (1 + r)w,\ell)$ for some constant $r$ be a non-atomic congestion game 
in which the total amount for players of type $i$ is $(1+r)w_{i}$ and 
the cost function on each resource $e$ is $\ell_{e}$.  
Our purpose is to bound the cost of an arbitrary equilibrium in $(G,w,\ell)$ by that of an optimal solution 
in $(G,(1+r)w,\ell)$ for some $r > 0$. Consider the following formulation (similar to the previous section) $(\mathcal{P}_{r})$
for $(G,(1+r)w,\ell)$. By weak duality, the optimal cost in $(G,(1+r)w,\ell)$ is at least the objective of a dual feasible solution in $(\mathcal{D}_{r})$.

\begin{minipage}[t]{0.4\textwidth}
\begin{align}	
\min ~ \sum_{e} w(T) & \ell_{e}(w(T)) z_{eT} 	\tag{$\mathcal{P}_{r}$}  \\
\sum_{j,k} a_{k} x_{ijk} &\geq (1+r) \cdot w_{i} & &  \forall i	\notag \\
\sum_{T} z_{eT} &= 1 & & \forall e 	\notag \\
\sum_{T: (i,k) \in T} z_{eT}	&= \sum_{j: e \in s_{ij}} x_{ijk} & & \forall (i,k), e \notag \\
x_{ijk}, z_{eT} &\in \{0,1\} & & \forall i,j,e,T \notag \\ \notag
\end{align}
\end{minipage}
\qquad
\begin{minipage}[t]{0.5\textwidth}
\begin{align}
\max \sum_{i} (1+r)w_{i} \alpha_{i} &+ \sum_{e} \beta_{e} 	\tag{$\mathcal{D}_{r}$} \\
a_{k} \alpha_{i} &\leq \sum_{e: e \in s_{ij}} \gamma_{i,k,e}  & &  \forall i,k,j	\notag \\
\beta_{e} + \sum_{(i,k) \in T} \gamma_{i,k,e}	&\leq w(T) \ell_{e}(w(T))  & & \forall e,T \notag \\
\alpha_{i} &\geq 0 & & \forall i \notag \\ \notag
\end{align}
\end{minipage}

Hence, our scheme consists of bounding the cost of an arbitrary equilibrium in $(G,w,\ell)$ and the objective $(\mathcal{D}_{r})$
of an appropriate dual feasible solution. 

\begin{theorem}
In every non-atomic congestion game, for any constant $r > 0$, 
the cost of an equilibrium in $(G,w,\ell)$ is at most $1/r$ times that of an optimal solution in $(G,(1+r)w,\ell)$.
\end{theorem}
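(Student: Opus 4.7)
The plan is to apply the primal-dual template of Theorem~\ref{thm:non-atomic} essentially verbatim to the augmented pair $(\mathcal{P}_r)$/$(\mathcal{D}_r)$ already displayed above. Given an arbitrary coarse correlated equilibrium $\vect{\sigma}$ of the \emph{unaugmented} game $(G,w,\ell)$, I would reuse the same dual variables as in Theorem~\ref{thm:non-atomic}:
\begin{align*}
\alpha_{i} &:= \E_{\vect{f} \sim \vect{\sigma}}\biggl[\sum_{e \in s_{ij}} \ell_{e}(f_{e})\biggr] \textnormal{ for some } s_{ij} \in \mathcal{S}_{i} \textnormal{ with } f_{s_{ij}}>0, \\
\gamma_{i,k,e} &:= \E_{\vect{f} \sim \vect{\sigma}}\bigl[a_k\,\ell_e(f_e)\bigr], \\
\beta_e &:= \inf_T \biggl\{ w(T)\ell_e(w(T)) - \E_{\vect{f} \sim \vect{\sigma}}\biggl[\sum_{(i,k)\in T} a_k\,\ell_e(f_e)\biggr]\biggr\}.
\end{align*}
Feasibility of these variables for $(\mathcal{D}_r)$ is inherited word-for-word from the previous proof: the first dual constraint is exactly the coarse-correlated-equilibrium condition (deviations of an atomic $\epsilon$-unit of type $i$ to strategy $s_{ij}$), and the second constraint holds by the definition of $\beta_e$. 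Crucially, the two dual programs $(\mathcal{D})$ and $(\mathcal{D}_r)$ differ only in the coefficient $(1+r)$ that multiplies $w_i\alpha_i$ in the objective.

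The only new calculation is the objective bound. Using the manipulation from Theorem~\ref{thm:non-atomic}, namely $\sum_i w_i\alpha_i = \E_{\vect{f}\sim\vect{\sigma}}[C(\vect{f})]$ (every used strategy of type $i$ has per-unit cost $\alpha_i$ at equilibrium) and writing $v_e$ for the minimizing configuration in the definition of $\beta_e$, the objective of the constructed dual solution equals
$$
\sum_i (1+r) w_i \alpha_i + \sum_e \beta_e
= \E_{\vect{f}\sim\vect{\sigma}}\biggl[ \sum_e \Bigl( (1+r)\, f_e\ell_e(f_e) + v_e\ell_e(v_e) - v_e\ell_e(f_e) \Bigr) \biggr].
$$
Combined with weak LP duality between $(\mathcal{P}_r)$ and $(\mathcal{D}_r)$, it suffices to show that the bracketed expression is pointwise at least $r\, f_e\ell_e(f_e)$, i.e., the per-resource inequality
$$
f_e\ell_e(f_e) + v_e\ell_e(v_e) - v_e\ell_e(f_e) \;\geq\; 0.
$$

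This last inequality is exactly the non-negativity already exploited in Theorem~\ref{thm:non-atomic} and follows from the monotonicity of $\ell_e$ by a simple case split on the sign of $f_e - v_e$: if $v_e \leq f_e$ then $v_e\ell_e(f_e) \leq f_e\ell_e(f_e)$, and otherwise $v_e\ell_e(f_e) \leq v_e\ell_e(v_e)$. Weak duality then yields $r \cdot \E_{\vect{f}\sim\vect{\sigma}}[C(\vect{f})] \leq OPT_{(1+r)w,\ell}$, which is the claim. I do not anticipate a genuine obstacle here: the heavy lifting is already packaged in Theorem~\ref{thm:non-atomic}, and the extra factor $r$ on the right-hand side drops out automatically because the equilibrium is computed against loads $w_i$ while the objective coefficient in $(\mathcal{D}_r)$ is $(1+r)w_i$, leaving a surplus of exactly $r\,C(\vect{\sigma})$ once the pointwise non-negative terms are discarded. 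The only care needed is the standard bookkeeping that $v_e$ is a deterministic quantity (coming from an infimum over configurations) while $f_e$ is random, so the per-resource inequality must be verified before the expectation is taken.
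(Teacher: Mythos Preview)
Your proposal is correct and follows essentially the same approach as the paper: reuse the dual variables from Theorem~\ref{thm:non-atomic}, observe that the constraints of $(\mathcal{D}_r)$ and $(\mathcal{D}_0)$ coincide so feasibility carries over, and then bound the augmented dual objective by the per-resource inequality $f_e\ell_e(f_e) + v_e\ell_e(v_e) - v_e\ell_e(f_e) \geq 0$ via the same monotonicity case split. The paper's proof is exactly this argument.
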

\begin{proof}
Given a coarse correlated equilibrium $\vect{\sigma}$ of the game where the amount for players of type $i$ is $w_{i}$.
Construct the dual feasible solution for $(\mathcal{D}_{r})$ as in the proof of Theorem~\ref{thm:non-atomic}. 
As the dual constraints of $(\mathcal{D}_{r})$ and $(\mathcal{D}_{0})$ are the same, the construction in 
the proof of Theorem~\ref{thm:non-atomic} gives a dual feasible solution for $(\mathcal{D}_{r})$.
It remains to bound the objective of 
$(\mathcal{D}_{r})$ of this dual solution to the cost of equilibrium $\vect{\sigma}$, which is 
$\E_{\vect{f} \sim \vect{\sigma}} \left[ \sum_{e} f_{e} \ell_{e}(f_{e})  \right] $. The former is 
\begin{align*}
\sum_{i} (1+r) w_{i} \alpha_{i} + \sum_{e} \beta_{e}
&=  \E_{\vect{f} \sim \vect{\sigma}} \left[ \sum_{e} \biggl( (1+r) \cdot f_{e} \ell_{e}(f_{e})  + v_{e} \ell_{e}(v_{e})
				- v_{e} \ell_{e}(f_{e}) \biggr) \right] 	\\
&\geq  \E_{\vect{f} \sim \vect{\sigma}} \left[ \sum_{e} r \cdot f_{e} \ell_{e}(f_{e})  \right] 
\end{align*}
where the inequality holds since $f_{e} \ell_{e}(f_{e})  + v_{e} \ell_{e}(v_{e}) \geq v_{e} \ell_{e}(f_{e})$. Precisely, 
if $f_{e} \geq v_{e}$ then $f_{e} \ell_{e}(f_{e}) \geq v_{e} \ell_{e}(f_{e})$ and 
if $f_{e} < v_{e}$ then $ v_{e} \ell_{e}(v_{e}) > v_{e} \ell_{e}(f_{e})$ (since $\ell_{e}$ in non-decreasing).
Hence, we deduce that the objective of $(\mathcal{D}_{r})$ is at least $r$ times the cost of equilibrium $\vect{\sigma}$.
\end{proof}

\subsection{Splittable Congestion Games}
\paragraph{Model.} In this section we consider the splittable congestion games also in discrete setting. Fix a constant $\epsilon > 0$ (arbitrarily small).
In a splittable congestion game, there is a set $E$ of resources, each resource is associated to 
a non-decreasing differentiable cost function $\ell_{e}: \mathbb{R}^{+} \rightarrow \mathbb{R}^{+}$ such that 
$x \ell_{e}(x)$ is convex. 
There are $n$ players, a player $i$ has a set of strategies $\mathcal{S}_{i}$ and has weight $w_{i}$, a multiple of $\epsilon$.
A strategy of player $i$ is a distribution $u^{i}$ of its weight $w_{i}$ among strategies $s_{ij}$ in $\mathcal{S}_{i}$
such that $\sum_{s_{ij} \in \mathcal{S}_{i}} u^{i}_{s_{ij}} = w_{i}$ and $u^{i}_{s_{ij}} \geq 0$ is a multiple of $\epsilon$. 
A strategy profile is a vector $\vect{u} = (u^{1}, \ldots, u^{n})$ of all players' strategies. 
We abuse notation and define $u^{i}_{e} = \sum_{e \in s_{ij}} u^{i}_{s_{ij}}$ 
as the load player $i$ distributes on resource $e$ and $u_{e} = \sum_{i=1}^{n} u^{i}_{e}$ the total load on $e$. 
Given a strategy profile $\vect{u}$, the cost of player $i$ is defined as $C_{i}(\vect{u}) := \sum_{e} u^{i}_{e} \cdot \ell_{e}(u_{e})$.
A strategy profile $\vect{u}$ is a pure Nash equilibrium if and only if for 
every player $i$ and all $s_{ij}, s_{ij'} \in \mathcal{S}_{i}$ with $u^{i}_{s_{ij}} > 0$:
$$
\sum_{e \in s_{ij}} \bigl( \ell_{e}(u_{e}) + u^{i}_{e} \cdot \ell'_{e}(u_{e}) \bigr)
\leq \sum_{e \in s_{ij'}} \bigl( \ell_{e}(u_{e}) + u^{i}_{e} \cdot \ell'_{e}(u_{e}) \bigr)
$$
The proof of this equilibrium characterization can be found in \cite{Harks11:Stackelberg-strategies}.
Again, the more general concepts of mixed, correlated and coarse correlated equilibria are defined 
similarly as in Section~\ref{sec:smooth}. 
In the game, the social cost is defined as $C(\vect{u}) := \sum_{i=1}^{n} C_{i}(\vect{u}) = \sum_{e} u_{e} \ell_{e}(u_{e})$.

%\subsubsection{Efficiency of Splittable Congestion Games}

The PoA bounds has been recently established for a large class of cost functions by 
\citet{RoughgardenSchoppmann15:Local-smoothness}. The authors proposed a \emph{local smoothness} framework
and showed that the local smoothness arguments give optimal PoA bounds for a large class of cost functions in 
splittable congestion games. Prior to \citet{RoughgardenSchoppmann15:Local-smoothness}, the works of 
\citet{CominettiCorrea09:The-impact-of-oligopolistic} and \citet{Harks11:Stackelberg-strategies} 
have also the flavour of local smoothness though their bounds are not tight. The local smooth arguments
extends to the correlated equilibria of a game but not to the coarse correlated equilibria. Motivating by the duality approach, 
we define a new notion of smoothness and prove a bound on the PoA of coarse correlated equilibria.
It turns out that this PoA bound for coarse correlated equilibria is indeed \emph{tight} for all classes of scale-invariant cost functions by 
the lower bound given by \citet[Section 5]{RoughgardenSchoppmann15:Local-smoothness}. 
A class of cost function $\mathcal{L}$ is \emph{scale-invariant} if $\ell \in \mathcal{L}$ implies that 
$a \cdot \ell(b \cdot x) \in \mathcal{L}$ for every $a,b > 0$.

\paragraph{Formulation.}
Given a splittable congestion game, we formulate the problem by the same configuration program for non-atomic congestion game.
Denote a finite set of multiples of $\epsilon$ as $\{a_{0}, a_{1}, \ldots, a_{m}\}$ where $a_{k} = k\cdot \epsilon$
and $m = \max_{i=1}^{n} w_{i}/\epsilon$.  
We say that $T$ is a \emph{configuration} of a resource $e$ if $T$ consists of couples $(i,k)$ that specifies 
player $i$ distributes an amount $a_{k}$ of its weight $w_{i}$ to a strategy $s_{ij} \in S_{i}$ where 
$e \in s_{ij}$. 
%Note that in a configuration $T$ of a resource $e$, 
%there might be multiple couples $(i,k) \in T$ and $(i,k') \in T$ corresponding to the same player. 
%It simply means that player $i$ distributes the amounts $a_{k}$ and $a_{k'}$ to some strategies $s_{ij}$ and 
%$s_{ij'}$ respectively that contains resource $e$, i.e., $e \in s_{ij}$ and $e \in s_{ij'}$.
Intuitively, a configuration of a resource is a strategy profile of a game restricted on the resource. 
Let $x_{ijk}$ be variable indicating whether player $i$ distributes an amount $a_{k}$ of  
its weight to strategy $s_{ij} \in S_{i}$.
For every resource $e$ and a configuration $T$, let $z_{eT}$ be a variable such that 
$z_{eT} = 1$ if and only if for $(i,k) \in T$, $x_{ijk} = 1$ 
for some $s_{ij} \in S_{i}$ such that $e \in s_{ij}$.
For a configuration $T$ on resource $e$, denote $w(T)$ the total amount distributed by players in $T$ on $e$.

\begin{minipage}[t]{0.45\textwidth}
\begin{align*}
\min ~ \sum_{e} w(T) & \ell_{e}(w(T)) z_{eT} \\
\sum_{j,k} a_{k} x_{ijk} &\geq w_{i} & &  \forall i\\
\sum_{T} z_{eT} &= 1 & & \forall e \\
\sum_{T: (i,k) \in T} z_{eT}	&= \sum_{j: e \in s_{ij}} x_{ijk} & & \forall (i,k), e \\
x_{ij}, z_{eT} &\in \{0,1\} & & \forall i,j,e,T \\
\end{align*}
\end{minipage}
\qquad
\begin{minipage}[t]{0.5\textwidth}
\begin{align*}
\max \sum_{i} w_{i} \alpha_{i} &+ \sum_{e} \beta_{e} \\
a_{k} \alpha_{i} &\leq \sum_{e: e \in s_{ij}} \gamma_{i,k,e}  & &  \forall i,k,j\\
\beta_{e} + \sum_{(i,k) \in T} \gamma_{i,k,e}	&\leq w(T) \ell_{e}(w(T))  & & \forall e,T \\
\alpha_{i} &\geq 0 & & \forall i \\
\end{align*}
\end{minipage}
Again, in the primal, the first constraint says that a player $i$ distributes the total weight $w_{i}$ among its strategies.
The second constraint means that a resource $e$ is always associated to a configuration (possibly empty). 
The third constraint guarantees that if a player $i$ distributes an amount $a_{k}$ to some strategy $s_{ij}$ containing 
resource $e$ then there must be a configuration $T$ such that $(i,k) \in T$ and $z_{eT} = 1$. 

All previous duality proofs have the same structure: in the dual LP, the first constraint gives the characterization of 
an equilibrium and the second one settles the PoA bounds. Following this line, we give the following definition. 

\begin{definition}
A cost function $\ell: \mathbb{R}^{+} \rightarrow \mathbb{R}^{+}$ is \emph{$(\lambda,\mu)$-dual-smooth} if for every
vectors $\vect{u} = (u_{1}, \ldots, u_{n})$ and $\vect{v} = (v_{1}, \ldots, v_{n})$, 
$$
 v \ell(u) + \sum_{i=1}^{n} u_{i}(v_{i} - u_{i}) \cdot \ell'(u) \leq \lambda \cdot v\ell(v) + \mu \cdot u\ell(u)
$$  
where $u = \sum_{i} u_{i}$ and $v = \sum_{i} v_{i}$.
A splittable congestion game is \emph{$(\lambda,\mu)$-dual-smooth} if every resource $e$ in the game, 
function $\ell_{e}$ is $(\lambda,\mu)$-dual-smooth.
\end{definition}

\begin{theorem}	\label{thm:splittable-locally-smooth}
For every $(\lambda,\mu)$-dual-smooth splittable congestion game $G$, the price of anarchy of coarse correlated equilibria of $G$
is at most $\lambda/(1-\mu)$. This bound is \emph{tight} for the class of scalable cost functions.
\end{theorem}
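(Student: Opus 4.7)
The plan is to instantiate the primal–dual template developed for atomic and non-atomic congestion games on the configuration LP just presented, with the new $(\lambda,\mu)$-dual-smoothness property designed precisely so that the resource-level dual constraint $\beta_e + \sum_{(i,k)\in T}\gamma_{i,k,e} \leq w(T)\ell_e(w(T))$ becomes a restatement of a smoothness inequality applied edge-wise.

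Given a coarse correlated equilibrium $\vect\sigma$, I would set the dual variables so that they encode the marginal-cost structure of splittable games:
\begin{align*}
\alpha_i &:= \frac{1}{\lambda w_i}\E_{\vect u\sim\vect\sigma}\Bigl[\sum_e u^i_e\bigl(\ell_e(u_e) + u^i_e\ell'_e(u_e)\bigr)\Bigr],\\
\gamma_{i,k,e} &:= \frac{a_k}{\lambda}\E_{\vect u\sim\vect\sigma}\bigl[\ell_e(u_e) + u^i_e\ell'_e(u_e)\bigr],\\
\beta_e &:= -\tfrac{1}{\lambda}\E_{\vect u\sim\vect\sigma}\Bigl[\sum_i(u^i_e)^2\ell'_e(u_e)\Bigr] - \tfrac{\mu}{\lambda}\E_{\vect u\sim\vect\sigma}[u_e\ell_e(u_e)].
\end{align*}
Informally, $w_i\alpha_i$ represents (up to the factor $1/\lambda$) the expected marginal cost of player $i$ at equilibrium, $\gamma_{i,k,e}$ is the expected marginal cost of placing $a_k$ additional units on resource $e$, and $\beta_e$ collects correction terms at resource $e$ in complete analogy with the atomic and non-atomic proofs.

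Next I would verify the two families of dual constraints. For the configuration constraint on a resource $e$, let $v^i_e := \sum_{k:(i,k)\in T} a_k$ so that $w(T) = \sum_i v^i_e$; after multiplying by $\lambda$ and collecting terms, the constraint reads
\[
\E_{\vect u}\bigl[v_e\ell_e(u_e) + \sum_i u^i_e(v^i_e - u^i_e)\ell'_e(u_e)\bigr] \leq \lambda\, v_e\ell_e(v_e) + \mu\,\E_{\vect u}[u_e\ell_e(u_e)],
\]
which is exactly the $(\lambda,\mu)$-dual-smoothness of $\ell_e$ applied pointwise to the vectors $(u^i_e)_i$ and $(v^i_e)_i$ and averaged over $\vect\sigma$. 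The per-player constraint $a_k\alpha_i \leq \sum_{e\in s_{ij}}\gamma_{i,k,e}$ reduces (because $\gamma_{i,k,e}$ is proportional to $a_k$) to the inequality that the expected per-unit marginal cost at equilibrium is at most the expected marginal cost along any alternative strategy $s_{ij}$, which I would derive from the coarse-correlated-equilibrium condition applied to deviations in the splittable setting.

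The PoA bound should then fall out mechanically: by construction the mixed terms $\E[\sum_e\sum_i(u^i_e)^2\ell'_e(u_e)]$ cancel between $\sum_i w_i\alpha_i$ and $\sum_e \beta_e$, leaving the dual objective equal to $\tfrac{1-\mu}{\lambda}\E[C(\vect u)]$; weak duality against the optimum configuration then gives $\E[C(\vect u)] \leq \tfrac{\lambda}{1-\mu}\cdot\mathrm{OPT}$. Tightness for scale-invariant cost functions is inherited from the matching lower bound of \cite{RoughgardenSchoppmann15:Local-smoothness}, which realizes the ratio $\lambda/(1-\mu)$ already at pure Nash equilibria. The delicate step in the plan is the first dual constraint at the CCE level: for pure Nash the minimum marginal cost across used strategies is equalized automatically, but a CCE only provides inequalities for fixed-strategy deviations, so isolating the per-unit marginal-cost comparison in expectation is the technical heart of the argument and the reason the new dual-smoothness notion --- rather than the local-smoothness of \cite{RoughgardenSchoppmann15:Local-smoothness}, which breaks at coarse correlated equilibria --- is needed.
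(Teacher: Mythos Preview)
Your proposal is essentially the paper's proof: your $\gamma_{i,k,e}$ and $\beta_e$ coincide with the paper's choices, the second dual constraint is verified via $(\lambda,\mu)$-dual-smoothness exactly as in the paper, and the cancellation producing the $\tfrac{1-\mu}{\lambda}$ factor in the dual objective is the paper's computation verbatim. Your $\alpha_i$ differs slightly --- the paper takes the expected marginal cost along a single used strategy $s_{ij^*}$, whereas you take the $u^i$-weighted average --- but these coincide at pure Nash, and your form is in fact what the paper's own dual-objective calculation implicitly requires (since $\sum_i w_i\alpha_i$ must equal $\tfrac{1}{\lambda}\E\bigl[\sum_e\sum_i u^i_e(\ell_e(u_e)+u^i_e\ell'_e(u_e))\bigr]$ for the cancellation to go through), so your variant is harmless and arguably cleaner. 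The step you flag as delicate (the first dual constraint at the CCE level) is dispatched in the paper by the single sentence ``the first dual constraint holds since it is the definition of coarse correlated equilibrium,'' so your proposal is no less detailed than the paper on that point.
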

\begin{proof}
The proof follows the duality scheme. 

\paragraph{Dual Variables.} 
Fix parameter $\lambda$ and $\mu$.
Given a coarse correlated equilibrium $\vect{\sigma}$, define corresponding dual variables as follows.
\begin{align*}
\alpha_{i} &= \frac{1}{\lambda} \E_{\vect{u} \sim \vect{\sigma}} \biggl[ \sum_{e \in s_{ij} } \ell_{e}(u_{e}) + u^{i}_{e} \ell'_{e}(u_{e}) \biggr] 
\textnormal{ for some } s_{ij} \in \mathcal{S}_{i}: u^{i}_{s_{ij}} > 0, \\
\beta_{e} &= - \frac{1}{\lambda} \E_{\vect{u} \sim \vect{\sigma}} \biggl[ \mu \cdot u_{e} \ell_{e}(u_{e}) + \sum_{i} (u^{i}_{e})^{2} \cdot \ell'_{e}(u_{e}) \biggr], \\
\gamma_{i,k,e} &= \frac{1}{\lambda} \E_{\vect{u} \sim \vect{\sigma}} \bigl[ a_{k} \bigl( \ell_{e}(u_{e}) + u^{i}_{e} \ell'_{e}(u_{e}) \bigr) \bigr].
\end{align*}
The dual variables have similar interpretations as previous analysis. 
Up to some constant factors, variable $\alpha_{i}$ is the marginal cost of a strategy used by player $i$ in the equilibrium;
%$\beta_{e}$ stands for the total cost of players on resource $e$ in this equilibrium; 
and 
$\gamma_{i,k,e}$ represents an estimation of the cost of player $i$ on resource $e$ if player $i$ distributes
an amount $a_{k}$ of its weight to some strategy containing $e$ while players $i'$ other than $i$ follows their strategies
in the equilibrium. 

\paragraph{Feasibility.} By this definition of dual variables, the first dual constraint holds 
since it is the definition of coarse correlated equilibrium. 
Rearranging the terms, the second dual constraint for a resource $e$ and a configuration $T$ reads
\begin{align*}
\frac{1}{\lambda} \sum_{(i,k) \in T} \E_{\vect{u} \sim \vect{\sigma}} 
	\bigl[ a_{k} \cdot \ell_{e}(u_{e}) + u^{i}_{e} (a_{k} - u^{i}_{e}) \ell'_{e}(u_{e}) \bigr) \bigr]
	\leq w(T) \ell_{e}(w(T)) + \frac{\mu}{\lambda} \E_{\vect{u} \sim \vect{\sigma}} \bigl[ u_{e} \ell_{e}(u_{e}) \bigr]
\end{align*}
This inequality follows directly from the definition of $(\lambda,\mu)$-dual-smoothness and
linearity of expectation (and note that $w(T) \ell_{e}(w(T)) = \E_{\vect{u} \sim \vect{\sigma}} \bigl[ w(T) \ell_{e}(w(T)) \bigr] $
and $w(T) = \sum_{(i,k) \in T} a_{k}$).

\paragraph{Bounding primal and dual.} 
By the definition of dual variables, the dual objective is
\begin{align*}	
\sum_{i} w_{i} \alpha_{i} &+ \sum_{e} \beta_{e}
= \sum_{e} \biggl( \sum_{i} u^{i}_{e} \alpha_{i} + \beta_{e} \biggr) \\
&= \frac{1}{\lambda} \E_{\vect{u} \sim \vect{\sigma}} \biggl[ \sum_{e} u_{e} \ell_{e}(u_{e}) + \sum_{i} (u^{i}_{e})^{2} \cdot \ell'_{e}(u_{e}) \biggr]
	-  \frac{1}{\lambda} \E_{\vect{u} \sim \vect{\sigma}} \biggl[ \mu \cdot u_{e} \ell_{e}(u_{e}) + \sum_{i} (u^{i}_{e})^{2} \cdot \ell'_{e}(u_{e}) \biggr] \\ 
&= \frac{1 - \mu}{\lambda}  \E_{\vect{u} \sim \vect{\sigma}} \biggl[ \sum_{e} u_{e}\ell_{e}(u_{e}) \biggr]	
\end{align*}
while the cost of the equilibrium $\vect{\sigma}$ is 
$\E_{\vect{u} \sim \vect{\sigma}} \bigl[ \sum_{e} u_{e} \ell_{e}(u_{e}) \bigr]$.
The theorem follows.
\end{proof}

\section{Efficiency in Welfare Maximization}	\label{sec:welfare}
In a general mechanism design setting, each player $i$ has a set of actions $\mathcal{A}_{i}$
for $1 \leq i \leq n$. Given an action $a_{i} \in \mathcal{A}_{i}$ chosen by each player $i$ for $1 \leq i \leq n$, which lead to the 
action profile $\vect{a} = (a_{1}, \ldots, a_{n}) \in \mathcal{A} = \mathcal{A}_{1} \times \ldots \times \mathcal{A}_{n}$, 
the auctioneer decides an outcome $o(\vect{a})$ among the set of feasible outcomes $\mathcal{O}$. 
Each player $i$ has a \emph{valuation} (or \emph{type}) $v_{i}$ taking values in a parameter space $\mathcal{V}_{i}$.
For each outcome $o \in \mathcal{O}$, player $i$ has \emph{utility} $u_{i}(o,v_{i})$ depending on
the outcome of the game and its valuation $v_{i}$. 
Since the outcome $o(\vect{a})$ of the game is determined by the action profile $\vect{a}$, 
the utility of a player $i$ is denoted as $u_{i}(\vect{a};v_{i})$. 
We are interested in auctions that in general consist of an allocation rule and a payment rule.
Given an action profile $\vect{a} = (a_{1}, \ldots, a_{n})$, the auctioneer decides an allocation 
and a payment $p_{i}(\vect{a})$ for each player $i$. 
Then, the \emph{utility} of player $i$ with valuation $v_{i}$, following the quasi-linear utility model, 
is defined as $u_{i}(\vect{a};v_{i}) = v_{i} - p_{i}(\vect{a})$. 
The \emph{social welfare} of an auction is defined as the total utility of all participants (the players and the auctioneer):  
$\textsc{Sw}(\vect{a};\vect{v}) =  \sum_{i=1}^{n} u_{i}(\vect{a};v_{i}) + \sum_{i=1}^{n} p_{i}(\vect{a})$. 

In the paper, we consider incomplete-information settings. 
In the settings, the valuation $v_{i}$ of each player is a private information and is drawn independently from a publicly known
distribution $\vect{F}$ with density function $\vect{f}$. 
Let $\Delta(\mathcal{A}_{i})$ be the set of probability distributions over the actions 
in $\mathcal{A}_{i}$.  A strategy of a player is a mapping $\sigma_{i}: \mathcal{V}_{i} \rightarrow \Delta(\mathcal{A}_{i})$
from a valuation $v_{i} \in \mathcal{V}_{i}$ to a distribution over actions $\sigma_{i}(v_{i}) \in \Delta(\mathcal{A}_{i})$.

\begin{definition}[Bayes-Nash equilibrium]
A strategy profile $\vect{\sigma} = (\sigma_{1}, \ldots, \sigma_{n})$ is a 
\emph{Bayes-Nash equilibrium (BNE)}  if for every player $i$, for every valuation $v_{i} \in \mathcal{V}_{i}$, and 
for every action $a'_{i} \in \mathcal{A}_{i}$:
\begin{align*}
\E_{\vect{v}_{-i} \sim \vect{F}_{-i}(v_{i})} \left[ \E_{\vect{a} \sim \vect{\sigma}(\vect{v})} \left[ u_{i}(\vect{a};v_{i}) \right] \right]
\geq 
\E_{\vect{v}_{-i} \sim \vect{F}_{-i}(v_{i})} \left[ \E_{\vect{a}_{-i} \sim \vect{\sigma}_{-i}(\vect{v}_{-i})} \left[ u_{i}(a'_{i},\vect{a}_{-i};v_{i}) \right] \right]
\end{align*}
\end{definition}
For a vector $\vect{w}$, we use $\vect{w}_{-i}$ to denote the vector $\vect{w}$ with the $i$-th component removed. 
Besides, $\vect{F}_{-i}(v_{i})$ stands for the probability distribution over all players other than $i$
conditioned on the valuation $v_{i}$ of player $i$. 

The price of anarchy of Bayes-Nash equilibria of an auction is defined as 
$$
\inf_{\vect{F}, \vect{\sigma}} \frac{\E_{\vect{v} \sim \vect{F}}\bigl[ \E_{\vect{a} \sim \vect{\sigma}(\vect{v})}[\textsc{Sw}(\vect{a};\vect{v})] \bigr]}{\E_{\vect{v} \sim \vect{F}} \bigl[ \textsc{Opt}(\vect{v})\bigr]}
$$
where the infimum is taken over Bayes-Nash equilibria $\vect{\sigma}$ and 
$\textsc{Opt}(\vect{v})$ is the optimal welfare with valuation profile $\vect{v}$.

In the paper, we consider discrete settings of valuations and payments, i.e., 
there are only a finite (large) number of possible valuations
and payments. The main purpose of restricting to discrete settings is that we can use tools from linear programming.
The continuous settings can be done by considering successively finer discrete spaces.

\subsection{Smooth Auctions}	\label{sec:smooth-auction}

In this section, we show that the primal-dual approach also captures the smoothness framework in 
studying the inefficiency of Bayes-Nash equilibria in incomplete-information settings. 
Smooth auctions have been defined by \citet{Roughgarden15:The-price-of-anarchy} 
and \citet{SyrgkanisTardos13:Composable-and-efficient}. The definitions are slightly different but both are inspired 
by the original smoothness argument \cite{Roughgarden15:Intrinsic-robustness} and 
all known smoothness-based proofs can be equivalently analyzed by one of these definitions.  
In this section, we consider the definition of smooth auctions in \cite{Roughgarden15:The-price-of-anarchy}
and revisit the price of anarchy bound of smooth auctions.
In the end of the section, we show that a similar proof carries through the smooth auctions defined by
\citet{SyrgkanisTardos13:Composable-and-efficient}.

\begin{definition}[\cite{Roughgarden15:The-price-of-anarchy}]	\label{def:smooth-auctions}
For parameters $\lambda, \mu \geq 0$, an auction is $(\lambda,\mu)$-\emph{smooth} if for every valuation profile 
$\vect{v} = (v_{1}, \ldots, v_{n})$,
there exists action distribution $D^{*}_{1}(\vect{v}), \ldots,D^{*}_{n}(\vect{v})$ over $\mathcal{A}_{1}, \ldots, \mathcal{A}_{n}$
such that, for every action profile $\vect{a}$,
\begin{align}	\label{eq:smooth-auctions}
\sum_{i} \E_{a^{*}_{i} \sim D^{*}_{i}(\vect{v})} \bigl[ u_{i}(a^{*}_{i},\vect{a}_{-i}; v_{i}) \bigr]
	\geq \lambda \cdot \textsc{Sw}(\vect{a}^{*};\vect{v}) - \mu \cdot \textsc{Sw}(\vect{a};\vect{v})  
\end{align}
\end{definition}

\begin{theorem}[\cite{Roughgarden15:The-price-of-anarchy}]	\label{thm:smooth-auction-Roughgarden}
If an auction is $(\lambda,\mu)$-smooth and the distributions of player valuations are independent then
every Bayes-Nash equilibrium has expected welfare at least $\frac{\lambda}{1+\mu}$ times the optimal 
expected welfare. 
\end{theorem}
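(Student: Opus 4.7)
I would instantiate the primal--dual template from Section~\ref{sec:smooth} in the Bayesian welfare--maximization setting, treating each realized valuation profile $\vect{v}$ as its own copy of a full--information game, weighted by the prior density $f(\vect{v}) = \prod_i f_i(v_i)$, and turning the primal into a welfare maximization LP.

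The first step is to set up the configuration LP. Introduce $x_{i,v_i,a_i}$ (indicator that player $i$ of type $v_i$ plays action $a_i$) and $z_{\vect{v},\vect{a}}$ (indicator that action profile $\vect{a}$ is the outcome when the type profile is $\vect{v}$), and maximize $\sum_{\vect{v},\vect{a}} f(\vect{v})\,\textnormal{SW}(\vect{a};\vect{v})\, z_{\vect{v},\vect{a}}$ subject to $\sum_{a_i} x_{i,v_i,a_i}=1$, $\sum_{\vect{a}} z_{\vect{v},\vect{a}}=1$, and the coupling $\sum_{\vect{a}_{-i}} z_{\vect{v},\vect{a}} = x_{i,v_i,a_i}$. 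The integer optimum of this program is exactly $\E_{\vect{v}}[\textnormal{Opt}(\vect{v})]$. The dual has variables $\alpha_{i,v_i}$, $\beta_{\vect{v}}$, $\gamma_{i,\vect{v},a_i}$ with constraints $\alpha_{i,v_i} \geq \sum_{\vect{v}_{-i}} \gamma_{i,\vect{v},a_i}$ for every $(i,v_i,a_i)$ and $\beta_{\vect{v}} + \sum_i \gamma_{i,\vect{v},a_i} \geq f(\vect{v})\,\textnormal{SW}(\vect{a};\vect{v})$ for every $(\vect{v},\vect{a})$.

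Given an arbitrary Bayes--Nash equilibrium $\vect{\sigma}$, I would define
\[
\gamma_{i,\vect{v},a_i} := \tfrac{f(\vect{v})}{\lambda}\,\E_{\vect{a}_{-i}\sim\vect{\sigma}_{-i}(\vect{v}_{-i})}\bigl[u_i(a_i,\vect{a}_{-i};v_i)\bigr], \qquad \alpha_{i,v_i} := \tfrac{f_i(v_i)}{\lambda}\,\E_{\vect{v}_{-i},\vect{a}\sim\vect{\sigma}(\vect{v})}\bigl[u_i(\vect{a};v_i)\bigr],
\]
and $\beta_{\vect{v}} := \tfrac{\mu}{\lambda}\,f(\vect{v})\,\E_{\vect{a}\sim\vect{\sigma}(\vect{v})}[\textnormal{SW}(\vect{a};\vect{v})]$. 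Because the types are independent, $f(\vect{v})$ factors as $f_i(v_i)\cdot f_{-i}(\vect{v}_{-i})$, and the first dual constraint --- after summing $\gamma_{i,\vect{v},a_i}$ over $\vect{v}_{-i}$ --- reduces to exactly the BNE deviation inequality applied to the deviation $a_i$. For the second constraint I would invoke the $(\lambda,\mu)$--smoothness inequality (\ref{eq:smooth-auctions}) at valuation profile $\vect{v}$, taking the ``actual'' profile inside $D^*_i(\vect{v})$ to be drawn from $\vect{\sigma}(\vect{v})$ and letting the dual's free $\vect{a}$ play the role of the hypothetical $\vect{a}^*$; multiplying by $f(\vect{v})/\lambda$ and rearranging recovers the constraint.

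Finally, I would compute the dual objective. The term $\sum_{i,v_i}\alpha_{i,v_i}$ equals $\tfrac{1}{\lambda}\,\E_{\vect{v}}\E_{\vect{a}\sim\vect{\sigma}(\vect{v})}[\sum_i u_i(\vect{a};v_i)]$, i.e.\ $1/\lambda$ times the expected total equilibrium utility, while $\sum_{\vect{v}}\beta_{\vect{v}} = \tfrac{\mu}{\lambda}\,\E[\textnormal{SW}(\vect{\sigma}(\vect{v});\vect{v})]$. Because welfare equals total utility plus nonnegative payments, the utility sum is at most the welfare, so the dual objective is at most $\tfrac{1+\mu}{\lambda}\,\E[\textnormal{SW}(\vect{\sigma}(\vect{v});\vect{v})]$; weak duality then gives $\E[\textnormal{Opt}(\vect{v})] \leq \tfrac{1+\mu}{\lambda}\,\E[\textnormal{SW}(\vect{\sigma}(\vect{v});\vect{v})]$. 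The main obstacle is that the second dual constraint must hold for \emph{every} action profile $\vect{a}$, whereas smoothness only asserts existence of a single distribution $D^*(\vect{v})$; I would resolve this by reading (\ref{eq:smooth-auctions}) with the free dual profile in the role of $\vect{a}^*$ on the right--hand side, so that smoothness evaluated at the BNE yields the dual feasibility after rescaling, exactly mirroring the full--information derivation of Section~\ref{sec:smooth}.
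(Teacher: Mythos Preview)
Your proposal is correct and follows essentially the same route as the paper's proof: the same configuration LP (up to whether the density $f(\vect{v})$ sits in the objective or in the constraint right-hand sides), the same choice of dual variables proportional to equilibrium utility, equilibrium welfare, and deviation utility, the same identification of the first dual constraint with the BNE condition (using independence to factor $f(\vect{v})=f_i(v_i)f_{-i}(\vect{v}_{-i})$), and the same use of the smoothness inequality with the free dual profile playing the role of $\vect{a}^*$ to verify the second constraint. Your final bound on the dual objective via ``utility $\le$ welfare'' is exactly what the paper does.
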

\begin{proof}
Given an auction, we formulate the corresponding optimization problem by a configuration LP.  
A \emph{configuration} $A$ consists of pairs $(i,a_{i})$ such that $(i,a_{i}) \in A$ means that in configuration 
$A$, player $i$ chooses action $a_{i}$. Intuitively, a configuration is an action profile of players. 
For every player $i$, every valuation $v_{i} \in \mathcal{V}_{i}$ 
and every action $a_{i} \in \mathcal{A}_{i}$, let $x_{i,a_{i}}(v_{i})$ be the variable representing the probability that player $i$ chooses
action $a_{i}$. Besides, for every valuation profile $\vect{v}$, let $z_{A}(\vect{v})$ be the variable indicating the probability that 
the chosen configuration (action profile) is $A$. 
For each configuration $A$ and valuation profile $\vect{v}$, the auctioneer outcomes an allocation and a payment and 
that results in a social welfare denoted as $c_{A}(\vect{v})$. In the other words, if $\vect{a}$ is the action profile corresponding 
to the configuration $A$ then $c_{A}(\vect{v})$ is in fact $\textsc{Sw}(\vect{a};\vect{v})$. Consider the following formulation and its dual.

\begin{minipage}[t]{0.4\textwidth}
\begin{align*}
\max ~ \sum_{v} c_{A}(\vect{v}) &z_{A}(\vect{v}) \\
\sum_{a_{i} \in \mathcal{A}_{i} } x_{i,a_{i}}(v_{i}) &\leq f_{i}(v_{i}) \quad \forall i,v_{i}\\
\sum_{A} z_{A}(\vect{v}) &\leq f(\vect{v}) \quad  \forall \vect{v} \\
\sum_{A: (i,a_{i}) \in A} z_{A}(v_{i},\vect{v}_{-i})	&\leq f_{-i}(\vect{v}_{-i}) \cdot x_{i,a_{i}}(v_{i})  \\ 
		 & \qquad \quad \forall i,a_{i}, v_{i}, \vect{v}_{-i}  \\
x_{i,a_{i}}(v_{i}), z_{A}(\vect{v}) &\geq 0 \quad \forall i,a_{i},A,v_{i}, \vect{v} \\
\end{align*}
\end{minipage}
\qquad
\begin{minipage}[t]{0.5\textwidth}
\begin{align*}
\min \sum_{i,v_{i}} f_{i}(v_{i}) \cdot \alpha_{i}(v_{i}) +& \sum_{\vect{v}} f(\vect{v}) \cdot \beta(\vect{v}) \\
\alpha_{i}(v_{i}) \geq \sum_{\vect{v}_{-i}} f_{-i}(\vect{v}_{-i}) & \cdot \gamma_{i,a_{i}}(v_{i},\vect{v}_{-i})  & &  \forall i,a_{i},v_{i}\\
\beta(\vect{v}) + \sum_{(i,a_{i}) \in A} \gamma_{i,a_{i}}(\vect{v})	&\geq c_{A}(\vect{v}) & & \forall A, \vect{v}\\
\alpha_{i}(v_{i}), \beta(\vect{v}), \gamma_{i,a_{i}}(\vect{v}) &\geq 0 & & \forall i,v_{i},\vect{v} \\
\end{align*}
\end{minipage}
In the primal, the first and second constraints guarantee that variables $x$ and $z$ represent indeed
the probability distribution of each player and the joint distribution, respectively. 
The third constraint makes the connection between variables $x$ and $z$. 
It ensures that if a player $i$ with valuation $v_{i}$ selects some action $a_{i}$ 
then in the valuation profile $(v_{i},\vect{v}_{-i})$, the probability that the configuration $A$ contains $(i,a_{i})$
must be $f_{-i}(\vect{v}_{-i}) \cdot x_{i,a_{i}}(v_{i})$. The primal objective is the expected welfare of the auction.

\paragraph{Construction of dual variables.}
Assuming that the auction is $(\lambda,\mu)$-smooth. Fix the parameters $\lambda$ and $\mu$.
Given an arbitrary Bayes-Nash equilibrium $\vect{\sigma}$, define dual variables as follows. 
%Note that if the 
%valuation profile $v$ is clear from the context, we simple write $\vect{\sigma}$ instead of $\vect{\sigma}(\vect{v})$.
%
\begin{align*}
\alpha_{i}(v_{i}) &:= \frac{1}{\lambda} \E_{\vect{v}_{-i}} \bigl[ \E_{\vect{b} \sim \vect{\sigma}(v_{i},\vect{v}_{-i})}[u_{i}(\vect{b}; v_{i})] \bigr], \\
\beta(\vect{v}) &:= \frac{\mu}{\lambda} \E_{\vect{b} \sim \vect{\sigma}(\vect{v})} \bigl[ \textsc{Sw}(\vect{b};\vect{v}) \bigr], \\
\gamma_{i,a_{i}}(\vect{v}) &:= \frac{1}{\lambda} \E_{\vect{b}_{-i} \sim \vect{\sigma}_{-i}(\vect{v}_{-i})} [u_{i}(a_{i},\vect{b}_{-i}; v_{i})].
\end{align*}
Informally, up to some constant factors depending on $\lambda$ and $\mu$, 
$\alpha_{i}(v_{i})$ is the expected utility of player $i$ in equilibrium $\vect{\sigma}$;
$\beta(\vect{v})$ stands for the social welfare of the auction where 
the valuation profile is $\vect{v}$ and players follow the equilibrium actions $\vect{\sigma}(\vect{v})$; 
and $\gamma_{i,a_{i}}(\vect{v})$ represents the utility of player $i$ in valuation profile $\vect{v}$ if player $i$ chooses
action $a_{i}$ while other players $i' \neq i$ follows their equilibrium strategies $\vect{\sigma}_{-i}(\vect{v}_{-i})$.

\paragraph{Feasibility.} We show that the constructed dual variables form a feasible solution. 
By the definition of dual variables, the first dual constraint reads
\begin{align*}
\frac{1}{\lambda} \E_{\vect{v}_{-i}} \bigl[ \E_{\vect{b} \sim \vect{\sigma}(\vect{v})}[u_{i}(\vect{b}; v_{i})] \bigr]
&\geq \frac{1}{\lambda} \sum_{\vect{v}_{-i}} f_{-i}(\vect{v}_{-i}) \cdot 
			\E_{\vect{b}_{-i} \sim \vect{\sigma}_{-i}(\vect{v}_{-i})} [u_{i}(a_{i},\vect{b}_{-i}; v_{i})] \\
&= \frac{1}{\lambda} \E_{\vect{v}_{-i}} \bigl[
			\E_{\vect{b}_{-i} \sim \vect{\sigma}_{-i}(\vect{v}_{-i})} [u_{i}(a_{i},\vect{b}_{-i}; v_{i})] \bigr]
\end{align*}
This is exactly the definition that $\vect{\sigma}$ is a Bayes-Nash equilibrium.

For every valuation profile $\vect{v} = (v_{1}, \ldots, v_{n})$ and for any configuration 
$A$ (corresponding action profile $\vect{a} = (a_{1}, \ldots, a_{n})$), the second constraint reads:
\begin{align}	\label{eq:auction-smooth-feasibility}
\frac{\mu}{\lambda} \E_{\vect{b} \sim \vect{\sigma}(\vect{v})} \bigl[ \textsc{Sw}(\vect{b};\vect{v}) \bigr]
+ \sum_{(i,a_{i}) \in A}\frac{1}{\lambda} \E_{\vect{b}_{-i} \sim \vect{\sigma}_{-i}(\vect{v}_{-i})} [u_{i}(a_{i},\vect{b}_{-i}; v_{i})] 
\geq \textsc{Sw}(\vect{a};\vect{v}).
\end{align}
Note that we can write $\textsc{Sw}(\vect{a};\vect{v}) = \E_{\vect{b} \sim \vect{\sigma}(\vect{v})} \bigl[\textsc{Sw}(\vect{a};\vect{v}) \bigr]$.
For any fixed realization $\vect{b}$ of $\vect{\sigma}(\vect{v})$, by $(\lambda,\mu)$-smoothness
\begin{align*}
\frac{\mu}{\lambda} \textsc{Sw}(\vect{b};\vect{v}) 
+ \sum_{i}\frac{1}{\lambda} u_{i}(a_{i},\vect{b}_{-i}; v_{i}) 
\geq \textsc{Sw}(\vect{a};\vect{v}).
\end{align*}
Hence, by taking expectation over $\vect{\sigma}(\vect{v})$, Inequality (\ref{eq:auction-smooth-feasibility}) follows.

\paragraph{Price of Anarchy.} 
The welfare of equilibrium $\vect{\sigma}$ is 
$\E_{\vect{v}} \E_{\vect{b} \sim \vect{\sigma}(\vect{v})} \bigl[ \textsc{Sw}(\vect{b};\vect{v}) \bigr]$ 
while the dual objective of the constructed dual variables is 
\begin{align*}
\sum_{i,v_{i}} f_{i}(v_{i}) &\cdot \frac{1}{\lambda} \E_{\vect{v}_{-i}} \bigl[ \E_{\vect{b} \sim \vect{\sigma}(\vect{v})}[u_{i}(\vect{b}; v_{i})] \bigr] 
+ \sum_{\vect{v}} f(\vect{v}) \cdot \frac{\mu}{\lambda} \E_{\vect{b} \sim \vect{\sigma}(\vect{v})} \bigl[ \textsc{Sw}(\vect{b};\vect{v}) \bigr] \\
&\leq \frac{1+\mu}{\lambda} \cdot \E_{\vect{v}} \E_{\vect{b} \sim \vect{\sigma}(\vect{v})} \bigl[ \textsc{Sw}(\vect{b};\vect{v}) \bigr]
\end{align*}
Therefore, the PoA of a $(\lambda,\mu)$-smooth auction is at most $\lambda/(1+\mu)$.
\end{proof}

\paragraph{Remark.} The notion of $(\lambda,\mu)$-smooth auctions due to \citet{SyrgkanisTardos13:Composable-and-efficient} 
is defined similarly as Definition~\ref{def:smooth-auctions}
but now the parameter $\mu \geq 1$ and Inequality~(\ref{eq:smooth-auctions}) is replaced by the following inequality:
\begin{align}	\label{eq:smooth-auctions-tardos}
\sum_{i} \E_{a^{*}_{i} \sim D^{*}_{i}(\vect{v})} \bigl[ u_{i}(a^{*}_{i},\vect{a}_{-i}; v_{i}) \bigr]
	\geq \lambda \cdot \textsc{Opt}(\vect{v}) - \mu \cdot \textsc{R}(\vect{a})  
\end{align}
where $\textsc{R}(\vect{a})$ is the total payment of players if the action profile is $\vect{a}$. 
Note that, in order to bound the price of anarchy, Inequality~(\ref{eq:smooth-auctions-tardos}) can be replaced by a weaker one, which is:
\begin{align}	\label{eq:smooth-auctions-weak}
\sum_{i} \E_{a^{*}_{i} \sim D^{*}_{i}(\vect{v})} \bigl[ u_{i}(a^{*}_{i},\vect{a}_{-i}; v_{i}) \bigr]
	\geq \lambda \cdot \textsc{Sw}(\vect{a}^{*};\vect{v}) - \mu \cdot \textsc{R}(\vect{a})  
\end{align}

Using the same proof structure of Theorem~\ref{thm:smooth-auction-Roughgarden}, we can prove that the price of anarchy is at most $\lambda/\mu$ \cite{SyrgkanisTardos13:Composable-and-efficient}.
Specifically, define dual variables $\alpha$ and $\gamma$ as previous and 
$$
\beta(\vect{v}) = \frac{\mu}{\lambda} \E_{\vect{b} \sim \vect{\sigma}(\vect{v})} \bigl[ \textsc{R}(\vect{b}) \bigr]
$$ 
The feasibility follows the definitions of Bayes-Nash equilibria and smooth auctions, in particular Inequality (\ref{eq:smooth-auctions-weak}). 
To bound the price of anarchy, 
as $\mu \geq 1$, we have
\begin{align*}
\sum_{i,v_{i}} f_{i}(v_{i}) &\cdot \frac{1}{\lambda} \E_{\vect{v}_{-i}} \bigl[ \E_{\vect{b} \sim \vect{\sigma}(\vect{v})}[u_{i}(\vect{b}; v_{i})] \bigr] 
+ \sum_{\vect{v}} f(\vect{v}) \cdot \frac{\mu}{\lambda} \E_{\vect{b} \sim \vect{\sigma}(\vect{v})} \bigl[ \textsc{R}(\vect{b}) \bigr] 
\leq \frac{\mu}{\lambda} \cdot \E_{\vect{v}} \E_{\vect{b} \sim \vect{\sigma}(\vect{v})} \bigl[ \textsc{Sw}(\vect{b};\vect{v}) \bigr]
\end{align*}
Therefore, the price of anarchy is at most $\lambda/\mu$. 

\subsection{Simultaneous Item-Bidding Auctions} 	\label{sec:welfare-simul}
\paragraph{Model.} In this section, we consider the following Bayesian combinatorial auctions. In the setting, 
there are $m$ items to be sold to $n$ players. Each player $i$ has a private monotone valuation 
$v_{i}: 2^{[m]} \rightarrow \mathbb{R}^{+}$ over different subsets of items $S \subset 2^{[m]}$. 
For simplicity, we denote $v_{i}(S)$ as $v_{iS}$.
The valuation profile $\vect{v} = (v_{1}, \ldots, v_{n})$ is drawn from a \emph{product} distribution $\vect{F}$. 
In other words, the probability distributions $F_{i}$ of valuations $v_{i}$ are independent.  
Designing efficient combinatorial auctions are in general complex and a major direction in literature is to seek 
simple and efficient auctions in term of PoA. Among others, simultaneous item-bidding auctions are of particular interest.  

We consider two forms of simultaneous item-bidding auctions: \emph{simultaneous first-price auctions (S1A)} and 
\emph{simultaneous second-price auctions (S2A)}. In the auctions, each player submits simultaneously a vector of bids,
one for each item. A typical assumption is \emph{non-overbidding} property in which 
each player submits a vector $b_{i}$ of bids such that for any set of items $S$, $\sum_{j \in S} b_{ij} \leq v_{iS}$.  
Given the bid profile, each item is allocated to the player with highest bid. 
In a simultaneous first-price auction, the payment of the winner of each item is its bid on the item; 
while in a simultaneous second-price auction, the winner of each item pays the second highest bid on the item.

\subsubsection{Connection between Primal-Dual and Non-Smooth Techniques}	\label{sec:pd-non-smooth}
In this section, we consider the setting in which all player valuations are 
\emph{sub-additive}. That is, $v_{i}(S \cup T) \leq v_{i}(S) + v_{i}(T)$ for every player $i$ and
every subsets $S, T \subset 2^{[m]}$. 
The PoA of simultaneous item-bidding auctions has been widely studied in this setting. 
Using smoothness framework in auctions, logarithmic bounds on PoA for S1A and S2A 
are given by \citet{HassidimKaplan11:Non-price-Equilibria} 
and \citet{BhawalkarRoughgarden11:Welfare-guarantees}, respectively. 
Recently, \citet{FeldmanFu13:Simultaneous-auctions} presented a significant improvement by establishing
the PoA bounds 2 and 4 for S1A and S2A, respectively. Their proof arguments go beyond the smoothness framework. 
In the following, we revisit the results of \citet{FeldmanFu13:Simultaneous-auctions} and 
show that the duality approach captures the non-smooth technique in 
\cite{FeldmanFu13:Simultaneous-auctions}.

\paragraph{Formulation.} Given a valuation profile $\vect{v}$, let $\overline{x}_{ij}(\vect{v})$ be the variable indicating 
whether player $i$ receives item $j$ in valuation profile $\vect{v}$. Let $\overline{z}_{iS}(\vect{v})$ be the variable indicating 
whether player $i$ receives a set of items $S$. Then for any profile $\vect{v}$ and for any item $j$, 
$\sum_{i} \overline{x}_{ij}(\vect{v}) \leq 1$, meaning that an item $j$ is allocated to at most one player.
Moreover, $\sum_{S: j \in S} \overline{z}_{iS}(\vect{v}) = \overline{x}_{ij}(\vect{v})$, meaning that 
if player $i$ receives item $j$ then some subset of items $S$ allocated to $i$ must contain $j$. 
Besides, $\sum_{S} \overline{z}_{iS}(\vect{v}) = 1$ since some subset of items (possibly empty) is allocated to $i$.

Let $x_{ij}(v_{i})$ and $z_{iS}(v_{i})$ be \emph{interim} variables corresponding to $\overline{x}_{ij}(\vect{v})$ 
and $\overline{z}_{iS}(\vect{v})$ and are defined as follows:
$$
x_{ij}(v_{i}) := \E_{\vect{v}_{-i} \sim \vect{F}_{-i}} \bigl[ \overline{x}_{ij}(v_{i}, \vect{v}_{-i}) \bigr], 
\qquad \qquad
z_{iS}(v_{i}) :=  \E_{\vect{v}_{-i} \sim \vect{F}_{-i}} \bigl[ \overline{z}_{iS}(v_{i}, \vect{v}_{-i}) \bigr]
$$
where $\vect{F}_{-i}$ is the product distribution of all players other than $i$.
Consider the following relaxation with interim variables and its dual. 
The constraints in the primal follow the relationship between the interim variables $x_{ij}(v_{i}), z_{iS}(v_{i})$
and variables $\overline{x}_{ij}(\vect{v}), \overline{z}_{iS}(\vect{v})$.

\begin{minipage}[t]{0.4\textwidth}
\begin{align*}
\max ~ \sum_{i,S} \sum_{v_{i}} f_{i}(v_{i}) & \bigl[ v_{iS} \cdot  z_{iS}(v_{i}) \bigr] & & \\
\sum_{i} \sum_{v_{i} \in V_{i}} f_{i}(v_{i}) x_{ij}(v_{i}) &\leq 1 & &  \forall j \\
\sum_{S} z_{iS}(v_{i}) &= 1 & & \forall i,v_{i} \\
\sum_{S: j \in S} z_{iS}(v_{i}) &= x_{ij}(v_{i}) & &  \forall i,j, v_{i} \\
x_{ij}(v_{i}), z_{iS}(v_{i}) &\geq 0 & & \forall i,j,S,v_{i}
\end{align*}
\end{minipage}
\qquad
\begin{minipage}[t]{0.5\textwidth}
\begin{align*}
\min ~ \sum_{i,v_{i}} \alpha_{i}(v_{i}) &+ \sum_{j} \beta_{j} \\
f_{i}(v_{i}) \cdot \beta_{j} &\geq \gamma_{i,j}(v_{i}) & & \forall i,j,v_{i} \\
\alpha_{i}(v_{i})  + \sum_{j \in S} \gamma_{i,j}(v_{i}) &\geq f_{i}(v_{i}) \cdot v_{iS} 
						& & \forall i,S,v_{i} \\
\alpha_{i}(v_{i}) & \geq 0 & & \forall i,v_{i} \\
\end{align*}
\end{minipage}

\paragraph{Dual Variables.} Fix a Bayes-Nash equilibrium $\vect{\sigma}$. 
%To simplify notation, assume that $\vect{\sigma}$ is a pure Bayes-Nash equilibrium.
%The same analysis holds for general ones by adding an additional expectation over players'
%random actions. 
Given a valuation $\vect{v}$, denote $\vect{b} = (b_{1}, \ldots, b_{n}) = \vect{\sigma}(\vect{v})$
as the bid equilibrium. Let $\vect{B}$ be the distribution of $\vect{b}$ over the randomness of $\vect{v}$
and $\vect{\sigma}$. Let $\vect{B}(v_{i})$ be the distribution of $\vect{b}$ over the randomness of $\vect{v}$
and $\vect{\sigma}$ while the valuation $v_{i}$ of player $i$ is fixed.
Since $v_{i}$ and $\vect{v}_{-i}$ are independent and each $\sigma_{i}$  is a mapping 
$\mathcal{V}_{i} \rightarrow \Delta(\mathcal{A}_{i})$, strategy $b_{i}$ is independent of $\vect{b}_{-i}$.
Let $\vect{B}_{-i}$ be the distribution of $\vect{b}_{-i}$.
We define dual variables as follows.

Let $\alpha_{i}(v_{i})$ be proportional to the expected utility of player $i$ with valuation $v_{i}$,
over the randomness of valuations $\vect{v}_{-i}$ of other players. Specifically, 
$$
\alpha_{i}(v_{i}) := 2 f_{i}(v_{i}) \cdot \E_{\vect{v}_{-i} \sim \vect{F}_{i}} \bigl[ \E_{\vect{\sigma}} \bigl[ u_{i}\bigl(\vect{\sigma}(v_{i},\vect{v}_{-i}), v_{i} \bigr) \bigr] \bigr]
	= 2 f_{i}(v_{i}) \cdot \E_{\vect{b} \sim \vect{B}(v_{i})} \left[ u_{i}\bigl(\vect{b}, v_{i} \bigr) \right] 
$$ 
Besides, let $\gamma_{i,j}(v_{i})$ be proportional to the expected value of the bid on item $j$ 
if player $i$ with valuation $v_{i}$ want to win item $j$ while other players follow the equilibrium strategies. Formally, 
$$
\gamma_{i,j}(v_{i}) := 2f_{i}(v_{i}) \cdot \E_{\vect{b}_{-i} \sim \vect{B}_{-i}} \left[ \max_{k \neq i} b_{kj}\right]
$$
Finally, define $\beta_{j} :=  2 \max_{i} \E_{\vect{b}_{-i} \sim \vect{B}_{-i}} \left[ \max_{k \neq i} b_{kj}\right]$.
%$$
%\beta_{j} = 2 \max_{i, v_{i}}  \E_{\vect{v}_{-i} \sim \vect{F}_{-i}} \E_{\vect{\sigma}} \left[ \max_{k \neq i} b_{kj}\right]
%= 2 \max_{i} \E_{\vect{v}_{-i} \sim \vect{F}_{-i}} \E_{\vect{\sigma}}  \left[ \max_{k \neq i} b_{kj}\right]
%$$
%where $\vect{b} = \vect{\sigma}(v_{i},\vect{v}_{-i})$.  
%Again, the last equality holds since $\vect{F}$ is a product distribution.

The following lemma shows the feasibility of the variables. The main core of the proof relies
on an argument in \cite{FeldmanFu13:Simultaneous-auctions}.
 
\begin{lemma}
The dual vector $(\alpha,\beta,\gamma)$ defined above constitutes a dual feasible solution. 
\end{lemma}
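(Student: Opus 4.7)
The plan is to verify the three families of dual constraints separately, with almost all the work concentrated on the second family. Non-negativity $\alpha_{i}(v_{i}) \geq 0$ is immediate from the standard no-overbidding assumption, since under no-overbidding the utility $u_{i}(\vect{b}, v_{i})$ of player $i$ in either S1A or S2A is non-negative on every realization. The first family $f_{i}(v_{i}) \cdot \beta_{j} \geq \gamma_{i,j}(v_{i})$ is also immediate: by definition $\beta_{j}$ is twice the maximum over players $i'$ of $\E_{\vect{b}_{-i'}}[\max_{k \neq i'} b_{kj}]$, and evaluating that maximum at $i' = i$ gives precisely $\gamma_{i,j}(v_{i})/f_{i}(v_{i})$.

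The substantive part is the second family, $\alpha_{i}(v_{i}) + \sum_{j \in S} \gamma_{i,j}(v_{i}) \geq f_{i}(v_{i}) \cdot v_{iS}$. After dividing by $f_{i}(v_{i})$ and substituting the definitions of $\alpha_{i}$ and $\gamma_{i,j}$, this is equivalent to
$$
\E_{\vect{b} \sim \vect{B}(v_{i})} \bigl[ u_{i}(\vect{b}, v_{i}) \bigr] \;\geq\; \frac{v_{iS}}{2} \;-\; \sum_{j \in S} \E_{\vect{b}_{-i} \sim \vect{B}_{-i}} \bigl[ \max_{k \neq i} b_{kj} \bigr].
$$
The left-hand side is the expected utility of player $i$ with valuation $v_{i}$ under the Bayes--Nash equilibrium $\vect{\sigma}$, which by the BNE condition is an upper bound for the expected utility of any valuation-independent (possibly randomized) deviation $b^{*}_{i}$. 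Hence it suffices to exhibit a single randomized deviation whose expected utility is at least the right-hand side.

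Following the non-smooth argument of \cite{FeldmanFu13:Simultaneous-auctions}, the deviation is built from a fresh independent sample $\vect{b}'_{-i} \sim \vect{B}_{-i}$: on each item $j \in S$, player $i$ bids $b^{*}_{ij} = \max_{k \neq i} b'_{kj}$, and bids $0$ on items outside $S$. Because $\vect{b}_{-i}$ and $\vect{b}'_{-i}$ are i.i.d., a coupling/symmetry argument shows that for each $j \in S$ the probability that $i$ wins $j$ is at least $1/2$, and the expected payment on each won item $j$ is at most $\E_{\vect{b}_{-i}}[\max_{k \neq i} b_{kj}]$, which delivers exactly the ``minus'' term on the right-hand side. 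The main obstacle is the value side: one must show $\E[v_{i}(T)] \geq v_{iS}/2$ for the random set $T \subseteq S$ of items actually won. This is precisely the central claim of \cite{FeldmanFu13:Simultaneous-auctions} and is where subadditivity of $v_{i}$ enters in a non-pointwise way; the argument couples the roles of $\vect{b}_{-i}$ and $\vect{b}'_{-i}$ to pair each realization of $T$ with its mirror realization $S \setminus T$ and then applies $v_{i}(S) \leq v_{i}(T) + v_{i}(S \setminus T)$. Combining the value bound $v_{iS}/2$ with the payment bound $\sum_{j \in S} \E[\max_{k \neq i} b_{kj}]$ gives the desired deviation inequality, completing feasibility for S1A. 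The S2A case follows the same scheme, with the factor of $2$ built into $\beta_{j}$ absorbing the difference between paying one's own bid and paying the second-highest bid.
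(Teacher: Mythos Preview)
Your proof is correct and follows essentially the same route as the paper: the first constraint and non-negativity are immediate, and for the second constraint you invoke the Bayes--Nash property together with the key deviation inequality from \cite{FeldmanFu13:Simultaneous-auctions}. The only difference is presentational --- the paper cites that inequality as a black box (``for any set $S$ there exists an action $b^{*}_{i}$ such that \ldots''), whereas you unpack it by describing the sample-and-bid deviation and the subadditivity/coupling argument.

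One small remark: your closing sentence about S2A is slightly misplaced. The feasibility proof of this lemma is actually uniform across S1A and S2A --- the same deviation and the same bound $\E[u_i(b^*_i,\vect{b}_{-i})] \geq v_{iS}/2 - \sum_{j\in S}\E[\max_{k\neq i} b_{kj}]$ hold in both formats (in S2A the payment on a won item $j$ is $\max_{k\neq i} b_{kj} \leq b^*_{ij}$, so the same payment bound applies). The distinction between S1A and S2A, and the role of the factor $2$ in $\beta_j$, only enters in the subsequent theorem when comparing the dual objective to the equilibrium welfare.
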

\begin{proof}
The first dual constraint follows immediately by the definitions of dual variables $\beta$ and $\gamma$. 
We are now proving the second dual constraint. 
Fix a player $i$ with sub-additive valuation $v_{i}$ and assume that $f_{i}(v_{i}) > 0$ (otherwise, it is trivial).
By \cite{FeldmanFu13:Simultaneous-auctions} 
(or see \cite[Lemma 1.3]{Roughgarden:Frontiers-in-Mechanism} for another clear exposition), 
for any set of items $S$, there exists an action $b^{*}_{i}$ 
such that
\begin{align*}
\E_{\vect{b}_{-i} \sim \vect{B}_{-i}} \biggl[u_{i}\bigl( (b^{*}_{i},\vect{b}_{-i}), v_{i}\bigr)\biggr]
+ \E_{\vect{b}_{-i} \sim \vect{B}_{-i}} \biggl[ \sum_{j \in S} \max_{k\neq i} b_{kj} \biggr] 
\geq \frac{1}{2} v_{iS}. 
\end{align*}
Moreover, the first term in the left-hand side is at most the utility of player $i$ with valuation $v_{i}$ 
since $(b_{i},\vect{b}_{-i})$ is a Bayes-Nash equilibrium. Therefore, 
\begin{align*}
\E_{\vect{b} \sim \vect{B}(v_{i})} \left[ u_{i}\bigl(\vect{b},v_{i}\bigr) \right]
+ \E_{\vect{b} \sim \vect{B}(v_{i})} \biggl[ \sum_{j \in S} \max_{k\neq i} b_{kj} \biggr] 
\geq \frac{1}{2} v_{iS} 
\end{align*}
By the definition of dual variables, 
this inequality is exactly the second constraint by multiplying both sides by $2f_{i}(v_{i})$.
\end{proof}

\begin{theorem}[\cite{FeldmanFu13:Simultaneous-auctions}]
If player valuations are sub-additive then every Bayes-Nash equilibrium of a S1A (or S2A) 
has expected welfare at least 1/2 (or 1/4, resp) of the optimal one.
\end{theorem}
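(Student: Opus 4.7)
The plan is to apply weak duality to the LP formulated above: since the constructed vector $(\alpha,\beta,\gamma)$ is dual feasible, the optimal welfare (which is at most the optimum of the interim primal relaxation) is upper bounded by the dual objective $\sum_{i,v_i} \alpha_i(v_i) + \sum_j \beta_j$. It then suffices to bound each of the two sums by a constant multiple of the equilibrium welfare and add them.

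First I would evaluate the $\alpha$-contribution. Plugging in the definition, $\sum_{i,v_i} \alpha_i(v_i) = 2 \sum_{i,v_i} f_i(v_i) \E_{\vect{b} \sim \vect{B}(v_i)}[u_i(\vect{b},v_i)] = 2\, \E_{\vect{v}}\E_{\vect{b} \sim \vect{B}}\bigl[\sum_i u_i(\vect{b},v_i)\bigr]$, i.e.\ twice the expected total utility at the equilibrium. Next, for the $\beta$-contribution, the obvious monotonicity $\max_{k\neq i} b_{kj} \leq \max_k b_{kj}$ yields $\sum_j \beta_j = 2\sum_j \max_i \E_{\vect{b}_{-i}}[\max_{k\neq i} b_{kj}] \leq 2\, \E_{\vect{b}}\bigl[\sum_j \max_k b_{kj}\bigr]$. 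Thus the dual objective is at most $2\,\E[\textnormal{utility}] + 2\,\E\bigl[\sum_j \max_k b_{kj}\bigr]$.

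The two auction formats diverge only in how the last quantity is controlled. For the S1A case this is immediate: the payment on item $j$ is precisely the winning bid $\max_k b_{kj}$, so $\E[\sum_j \max_k b_{kj}]$ equals the expected revenue, and since welfare equals utility plus revenue the dual is at most $2\,\E[\textnormal{welfare}]$, giving $\E[\textsc{Opt}] \leq 2\,\E[\textnormal{welfare}]$. For the S2A case the second-price payment is the runner-up bid, so revenue can be arbitrarily smaller than $\sum_j \max_k b_{kj}$; here I would invoke non-overbidding. Writing $W_i$ for the set of items that player $i$ wins under a bid realization $\vect{b}$, one has $\sum_j \max_k b_{kj} = \sum_i \sum_{j\in W_i} b_{ij} \leq \sum_i v_i(W_i) = \textsc{Sw}(\vect{b};\vect{v})$, where the inequality is the non-overbidding assumption applied to each winner's winning bundle. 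Taking expectations gives $\E[\sum_j \max_k b_{kj}] \leq \E[\textnormal{welfare}]$, so the dual is at most $2\,\E[\textnormal{utility}] + 2\,\E[\textnormal{welfare}] \leq 4\,\E[\textnormal{welfare}]$, and $\E[\textsc{Opt}] \leq 4\,\E[\textnormal{welfare}]$.

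The one non-routine step is the S2A bound on $\sum_j \max_k b_{kj}$: whereas in S1A the bid \emph{is} the payment, in S2A the winning bid can vastly exceed the price, and the right reason it is still controlled by welfare is that each \emph{winner's} bids on their allocated bundle are jointly capped by that bundle's valuation. All other pieces are algebraic substitution into the dual objective together with the feasibility lemma already proved.
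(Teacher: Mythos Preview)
Your proposal is correct and essentially the same as the paper's proof: both bound the dual objective by splitting it into the $\alpha$-sum (twice expected utility) and the $\beta$-sum (twice an expected ``highest-bid'' term), then control the latter via the identity utility\,+\,revenue\,=\,welfare in S1A and via non-overbidding on each winner's bundle in S2A. The only cosmetic difference is that the paper keeps the slightly sharper quantity $\max_{k\neq i^{*}(j)} b_{kj}$ in the $\beta$-term before bounding, whereas you pass immediately to $\max_k b_{kj}$; both routes yield the same constants.
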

\begin{proof}
For an item $j$, let $i^{*}(j) \in \arg \max_{i} \E_{\vect{v}_{-i} \sim \vect{F}_{-i}} \left[ \max_{k \neq i} b_{kj}\right]$.
Hence,
\begin{align*}
\beta_{j} &= 2 \E_{\vect{v}_{-i^{*}(j)} \sim \vect{F}_{-i^{*}(j)}} \E_{\vect{\sigma}} \biggl[ \max_{k \neq i^{*}(j)} b_{kj}\biggr]
= 2 \E_{v_{i^{*}(j)} \sim F_{i}} \E_{\vect{v}_{-i^{*}(j)} \sim \vect{F}_{-i^{*}(j)}} \E_{\vect{\sigma}} \biggl[ \max_{k \neq i^{*}(j)} b_{kj}\biggr] \\
&= 2 \E_{\vect{v} \sim \vect{F}} \E_{\vect{\sigma}} \biggl[ \max_{k \neq i^{*}(j)} b_{kj}\biggr]
\end{align*}
where the second equality is due to the fact that the term 
$E_{\vect{v}_{-i^{*}(j)} \sim \vect{F}_{-i^{*}(j)}} \E_{\vect{\sigma}} \bigl[ \max_{k \neq i^{*}(j)} b_{kj}\bigr]$ is independent of $v_{i^{*}(j)}$.
Therefore, the dual objective is 
\begin{align*}	%\label{eq:dual-simultaneous}
\sum_{i,v_{i}} \alpha_{i}(v_{i}) &+ \sum_{j} \beta_{j}
= 2 \E_{\vect{v} \sim \vect{F}} E_{\vect{\sigma}} \biggl[  \sum_{i} u_{i}(\vect{b},v_{i}) + \sum_{j}\max_{k \neq i^{*}(j)} b_{kj} \biggr]
\end{align*}
Fix a random choice of profile $\vect{v}$ and $\vect{\sigma}$ (so the bid profile $\vect{b}$ is fixed). 
We bound the dual objective, i.e., the right-hand side of the above equality, 
in S1A and S2A. Note that the utility of a player winning no item is 0.

\paragraph{First Price Auction.} 
Partition the set of items into the winning items of each player. 
Consider a player $i$ with the set of winning items $S$. 
The utility of this player $i$ is $v_{iS} - \sum_{j \in S} \max_{k} b_{kj}$. Hence,
$v_{iS} - \sum_{j \in S} b_{ij} + \sum_{j \in S} \max_{k \neq i^{*}(j)} b_{kj} \leq v_{iS}$ since by the allocation rule, 
$b_{ij} = \max_{k} b_{kj}$ for every $j \in S$. Hence, summing over all players, 
the dual objective is bounded by twice the total expected valuation of winning players, which is the primal. 
So the price of anarchy is at most 2.

\paragraph{Second Price Auction.}
Similarly, consider a player $i$ with the set of winning items $S$.  
The utility of player $i$ as well as its payment (by no-overbidding) are at most $v_{iS}$. 
Therefore, summing over all players, the dual objective is bounded by 
four times the total expected valuation of winning players. Hence, the price of 
anarchy is at most 4.
\end{proof}

\paragraph{Remark.} The non-overbidding assumption, a risk-aversion assumption, 
is given in order to prevent players from suffering negative utility while receiving items.
We use this assumption in the proof only in settling the ratio between the primal and the dual;
specifically to argue that the payment of a player does not exceed its valuation on the received items.  
The above analysis holds even without this assumption in the following sense. Assume that 
players are allowed to bid up to a constant $r$ times their valuation 
(hence, players risk to have negative utility). Then, the PoA for S2A 
is $2(1+r)$.  

\subsubsection{Connection between Primal-Dual and No-Envy Learning}
Very recently, \citet{DaskalakisSyrgkanis16:Learning-in-auctions:} have introduced \emph{no-envy learning} --- a novel concept of learning in auctions.
The notion is inspired by the concept of Walrasian equilibrium and it is 
motivated by the fact that no-regret learning algorithms (which converge to coarse correlated equilibria) 
for the simultaneous item-bidding auctions are computationally inefficient as the number of player actions 
are exponential. When the players have fractionally sub-additive (XOS) valuation, 
\citet{DaskalakisSyrgkanis16:Learning-in-auctions:} showed that no-envy outcomes are a relaxation of no-regret outcomes. Moreover, 
no-envy outcomes maintain the approximate welfare optimality of no-regret outcomes while ensuring the 
computational tractability. In this section, we explore the connection between the no-envy learning 
and the primal-dual approach. Indeed, the notion of no-envy learning would be naturally derived from the dual constraints
very much in the same way as the smoothness argument is.

We recall the notion of no-envy learning algorithms \cite{DaskalakisSyrgkanis16:Learning-in-auctions:}. We first define the \emph{online learning problem}.
In the online learning problem, at each step $t$, the player chooses a bid vector $b^{t} = (b^{t}_{1}, \ldots, b^{t}_{m})$
where $b^{t}_{j}$ is the bid on item $j$ for $1 \leq j \leq m$; and the adversary picks
adaptively (depending on the history of the play but not on the current bid $b^{t}$) a threshold vector 
$\theta^{t} = (\theta^{t}_{1}, \ldots, \theta^{t}_{m})$.
The player wins the set $S^{*}(b^{t},\theta^{t}) = \{j: b^{t}_{j} \geq \theta^{t}_{j}\}$ and gets reward:
\begin{align*}
u(b^{t},\theta^{t}) := v\bigl( S^{*}(b^{t},\theta^{t})\bigr) - \sum_{j \in S^{*}(b^{t},\theta^{t})} \theta^{t}_{j}
\end{align*}
where $v: 2^{[m]} \rightarrow \mathbb{R}$ is the valuation of the player.  
\begin{definition}[\cite{DaskalakisSyrgkanis16:Learning-in-auctions:}]
An algorithm for the online learning problem is \emph{$r$-approximate no-envy} if, for any adaptively chosen sequence of (random) threshold 
vector $\theta^{1:T}$ by the adversary, the (random) bid vector $b^{1:T}$ chosen by the algorithm satisfies:
\begin{align}	\label{eq:no-envy}
 \frac{1}{T} \sum_{t=1}^{T} \E \bigl[ u(b^{t},\theta^{t}) \bigr] 
	\geq \max_{S \subset [m]} \biggl( \frac{1}{r} \cdot v(S) - \sum_{j \in S} \frac{1}{T} \sum_{t=1}^{T} \E \bigl[ \theta^{t}_{j} \bigr] \biggr) - \epsilon(T)
\end{align} 
where the \emph{no-envy rate} $\epsilon(T) \rightarrow 0$ while $T \rightarrow \infty$.
An algorithm is \emph{no-envy} if it is 1-approximate no-envy.
\end{definition}

Now we show the connection between primal-dual and no-envy learning
by revisiting the following theorem. As we will see, the notion of no-envy learning corresponds exactly 
to a constraint of a dual program. 

\begin{theorem}[\cite{DaskalakisSyrgkanis16:Learning-in-auctions:}]		\label{thm:pd-no-envy}
If $n$ players in a S2A use an $r$-approximate no-envy learning algorithm with envy rate $\epsilon(T)$
then in $T$ steps, the average welfare is at least $\frac{1}{2r}\textsc{Opt} - n \cdot \epsilon(T)$ 
where $\textsc{Opt}$ is the expected optimal welfare. 
\end{theorem}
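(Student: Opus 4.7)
The plan is to mirror the primal-dual analysis of S2A from Section~\ref{sec:welfare-simul}, with the no-envy property in place of the Bayes-Nash equilibrium condition. Because valuations $v_{1},\ldots,v_{n}$ are fixed and we average over $T$ rounds, the configuration LP simplifies: the primal maximizes $\sum_{i,S} v_{iS}\, z_{iS}$ subject to $\sum_{i}\sum_{S \ni j} z_{iS} \leq 1$ for every item $j$ and $\sum_{S} z_{iS} = 1$ for every player $i$; its dual then has the familiar structure with the two key constraints $\gamma_{i,j} \leq \beta_{j}$ and $\alpha_{i} + \sum_{j \in S}\gamma_{i,j} \geq v_{iS}$ for every $i,S$.

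Given the play $\vect{b}^{1:T}$, let $\theta^{t}_{ij} := \max_{k \neq i} b^{t}_{kj}$ be the threshold that player $i$ faces on item $j$ at round $t$ in the S2A. I would define the dual variables from time-averaged quantities:
\begin{align*}
\alpha_{i} := r \cdot \frac{1}{T}\sum_{t=1}^{T}\E\bigl[u_{i}(\vect{b}^{t};v_{i})\bigr],\qquad
\gamma_{i,j} := r \cdot \frac{1}{T}\sum_{t=1}^{T}\E\bigl[\theta^{t}_{ij}\bigr],\qquad
\beta_{j} := r \cdot \max_{i}\frac{1}{T}\sum_{t=1}^{T}\E\bigl[\theta^{t}_{ij}\bigr].
\end{align*}
The constraint $\gamma_{i,j} \leq \beta_{j}$ holds by construction. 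The bundle constraint is exactly where the no-envy property enters: applying inequality~(\ref{eq:no-envy}) to each player $i$ against the adaptively chosen threshold sequence determined by the other players' bids, one obtains, for every $S \subset [m]$,
$$\alpha_{i} + \sum_{j \in S}\gamma_{i,j} \;\geq\; v_{iS} - r\cdot\epsilon(T),$$
so $(\alpha,\beta,\gamma)$ is feasible up to an additive slack of $r\epsilon(T)$ in each of the $n$ bundle constraints.

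To close the argument, I would bound the dual objective exactly as in the S2A proof of Section~\ref{sec:welfare-simul}: at each round $t$ with winning assignment $(S_{1},\ldots,S_{n})$, non-overbidding implies that both the utility $u_{i}(\vect{b}^{t};v_{i})$ and the second-price payment $\sum_{j \in S_{i}}\theta^{t}_{ij}$ are at most $v_{iS_{i}}$ for each winner $i$. Summing over $i$ and $j$ and using that $\beta_{j}$ is upper bounded by the winner's bid on $j$, one obtains that $\sum_{i}\alpha_{i} + \sum_{j}\beta_{j}$ is at most $2r$ times the average welfare $\frac{1}{T}\sum_{t}\E[\textsc{Sw}(\vect{b}^{t};\vect{v})]$. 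Weak duality, corrected for the $r\epsilon(T)$ slack on each of the $n$ bundle constraints, then yields
$$\textsc{Opt} \;\leq\; 2r \cdot \frac{1}{T}\sum_{t=1}^{T}\E\bigl[\textsc{Sw}(\vect{b}^{t};\vect{v})\bigr] + n \cdot r \cdot \epsilon(T),$$
which rearranges to the claimed lower bound on the average welfare.

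The main obstacle is precisely this slack accounting: since each of the $n$ players contributes an $r\epsilon(T)$ slack in the dual constraint, one has to track carefully how this propagates through weak duality to become the $n\epsilon(T)$ loss in the final welfare bound, and to choose the scaling constant in the definition of $\alpha_{i},\gamma_{i,j},\beta_{j}$ so that the multiplicative factor $\frac{1}{2r}$ and the additive term $n\epsilon(T)$ match exactly the form stated in the theorem. Beyond this bookkeeping, the derivation is a direct analog of the S2A proof, reinforcing that no-envy learning is simply the dual-constraint incarnation of the $v_{iS}$-bound needed in the S2A smoothness-like analysis.
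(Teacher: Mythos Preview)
Your approach is essentially the paper's: same LP, same dual variables (scaled by $r$ and built from time-averaged thresholds and utilities), same use of the no-envy inequality for the bundle constraint, and the same non-overbidding argument to bound $\sum_{j}\beta_{j}$. The only cosmetic difference is that the paper absorbs the $r\epsilon(T)$ slack directly into $\alpha_{i}$ (defining $\alpha_{i}$ with an extra $+r\epsilon(T)$ term so the dual is exactly feasible), whereas you carry the slack through and patch it at the weak-duality step; the two are equivalent, and your rearrangement in fact yields the slightly stronger additive loss $\tfrac{n}{2}\epsilon(T)$, which of course implies the stated $n\epsilon(T)$.
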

\begin{proof}
Let $b^{t}_{i}$ be the bid vector of player $i$ where $b^{t}_{ij}$ is the bid of player $i$ on item $j$ in step $t$. 
In a S2A the threshold $\theta^{t}_{ij} = \max_{k \neq i} b^{t}_{kj}$. Consider the same primal and dual LPs in 
Section~\ref{sec:pd-non-smooth}. 

\paragraph{Dual variables.} 
Recall that $r$ is the approximation factor and $\epsilon(T)$ the no-envy rate of the learning algorithm. 
Define dual variables (similar to the ones in Section~\ref{sec:pd-non-smooth}) as follows.
\begin{align*}
\alpha_{i}(v_{i}) &:= r \cdot f_{i}(v_{i}) \cdot \E_{\vect{v}_{-i} \sim \vect{F}_{i}} 
	\biggl[ \frac{1}{T} \sum_{t=1}^{T} \E_{\vect{b}^{t}(v_{i},\vect{v}_{-i})} \bigl[ u_{i}\bigl(b^{t}_{i}, \theta^{t}_{i} \bigr) \bigr] \biggr] + r \cdot \epsilon(T)\\
\gamma_{i,j}(v_{i}) &:= r \cdot  f_{i}(v_{i}) \cdot \E_{\vect{v}_{-i} \sim \vect{F}_{i}} 
	\biggl[ \frac{1}{T} \sum_{t=1}^{T} \E_{\vect{b}^{t}(v_{i},\vect{v}_{-i})} \bigl[ \theta^{t}_{ij} \bigr] \biggr]
	= r \cdot f_{i}(v_{i}) \cdot \E_{\vect{v}_{-i} \sim \vect{F}_{i}} 
	\biggl[ \frac{1}{T} \sum_{t=1}^{T} \E_{\vect{b}^{t}_{-i}(\vect{v}_{-i})} \bigl[ \theta^{t}_{ij} \bigr] \biggr] \\
\beta_{j} &:=  r \cdot \max_{i} \max_{v_{i}} \E_{\vect{v}_{-i} \sim \vect{F}_{i}} 
	\biggl[ \frac{1}{T} \sum_{t=1}^{T} \E_{\vect{b}^{t}(v_{i},\vect{v}_{-i})} \bigl[ \theta^{t}_{ij} \bigr] \biggr]
	= r \cdot \max_{i} \E_{\vect{v}_{-i} \sim \vect{F}_{i}} 
	\biggl[ \frac{1}{T} \sum_{t=1}^{T} \E_{\vect{b}^{t}_{-i}(\vect{v}_{-i})} \bigl[ \theta^{t}_{ij} \bigr] \biggr]
\end{align*}
where the second equalities in the definitions of $\gamma$ and $\beta$ follow the fact that 
player valuations are independent and $\theta^{t}_{ij}$ does not depend on $b^{t}_{ij}$ for every $i,j$.  

\paragraph{Feasibility.} The first dual constraint follows immediately by the definitions of dual variables $\beta$ and $\gamma$. 
For a fixed set $S$ and a player $i$ with valuation $v_{i}$, the second dual constraint reads
\begin{align*}
 r \cdot f_{i}(v_{i}) \cdot \E_{\vect{v}_{-i} \sim \vect{F}_{i}} 
	&\biggl[ \frac{1}{T} \sum_{t=1}^{T} \E_{\vect{b}^{t}(v_{i},\vect{v}_{-i})} \bigl[ u_{i}\bigl(b^{t}_{i}, \theta^{t}_{i} \bigr) \bigr] \biggr]
+
r \cdot \epsilon(T) \\
&+
r \cdot \sum_{j \in S} f_{i}(v_{i}) \cdot \E_{\vect{v}_{-i} \sim \vect{F}_{i}} 
	 \biggl[ \frac{1}{T} \sum_{t=1}^{T} \E_{\vect{b}^{t}_{-i}(\vect{v}_{-i})} \bigl[ \theta^{t}_{ij} \bigr] \biggr] 
\geq f_{i}(v_{i}) \cdot v_{iS} 
\end{align*}
This inequality follows immediately from the definition of $r$-approximate no-envy learning algorithms (specifically, Inequality (\ref{eq:no-envy})).

\paragraph{Bounding the cost.}
In $T$ steps, the average welfare is 
$
\E_{\vect{v}} \bigl[ \frac{1}{T} \sum_{t=1}^{T} \E_{\vect{b}^{t}(\vect{v})} \bigl[ u_{i}\bigl(b^{t}_{i}, \theta^{t}_{i} \bigr) \bigr] \bigr] 
$.
Besides,
\begin{align*}
\sum_{i,v_{i}} \alpha_{i}(v_{i}) 
&\leq  r \cdot \E_{\vect{v}} \biggl[ \frac{1}{T} \sum_{t=1}^{T} \E_{\vect{b}^{t}(\vect{v})} \bigl[ v_{i}\bigl(S^{*}(b^{t}_{i}, \theta^{t}_{i}) \bigr) \bigr] \biggr] 
+ n \cdot r \cdot \epsilon(T), \\
\sum_{j} \beta_{j} 
&\leq r \cdot \E_{\vect{v}} \biggl[ \frac{1}{T} \sum_{t=1}^{T} \E_{\vect{b}^{t}(\vect{v})} \bigl[ v_{i}\bigl(S^{*}(b^{t}_{i}, \theta^{t}_{i}) \bigr) \bigr] \biggr] 
\end{align*}
where the last inequality is due to the non-overbidding property. Hence, the theorem follows by weak duality.
\end{proof}

%Note that, \citet{DaskalakisSyrgkanis16:Learning-in-auctions:} also
%proved the existence of no-envy learning algorithms in S2A when the players have XOS valuations.
%Therefore, as a corollary of Theorem~\ref{thm:pd-no-envy}, the PoA in S2A
%is at most 2 when player valuations are XOS.

\subsection{Sequential Auctions}

\subsubsection{Sequential Second Price Auctions in Sponsored Search}	\label{sec:sponsor}

\paragraph{Model.} 
In the sponsored search problem, there are $n$ players and $n$ slots. Each player $i$ has a \emph{private valuation}
$v_{i}$, representing its valuation per click.  We use $\vect{v} = (v_{1}, \ldots, v_{n})$ to denote
the \emph{valuation profile} of players. Additionally, each player $i$ has a \emph{quality factor} $\alpha_{i}$ 
that reflect the click-ability of the ad. The couple of valuation and quality factor 
$(v_{i},\alpha_{i})$ of player $i$ is drawn from a publicly known distribution $F_{i}$.
In the model, we assume that the distributions $F_{i}$'s are mutually independent. 
The  slots have associated \emph{click-through-rates} $\beta_{1} \geq \beta_{2} \geq \ldots \geq \beta_{n}$.
An \emph{outcome} is an one-to-one assignment of slots to players. 
When player $i$ is assigned to the $j$-th slot, the player gets $\alpha_{i}\beta_{j}$ clicks.

In the auction, the auctioneer sells slots sequentially one-by-one in non-increasing order of $\beta_{j}$
via the second price mechanisms. At the consideration of slot $j$, the auctioneer collects 
all the bid $b_{ij}$ on item $j$ from every player $i$ , 
which is interpreted as a valuation declaration. We also assume that the non-overbidding property, meaning that 
$b_{ij} \leq v_{i}$ for all $i$ and $j$. 
The auctioneer then assigns slot $j$ to the player (that has not received any slot so far)
with highest \emph{effective} bid, defined as $\alpha_{i} b_{i}$. 
The payment of the winning player is set according to \emph{critical} value: the smallest bid
that guarantees the player still gets the slot. Specifically, if a slot $j$ is assigned to player $i$ then the payment of $i$ is 
$p_{i} = \alpha_{i'}\beta_{i'}/\alpha_{i}$ where $\alpha_{i'}\beta_{i'}$ is the second highest effective bid on slot $j$.
The \emph{utility} of player $i$ is 
$\alpha_{i}\beta_{j} (v_{i} - p_{i})$. The \emph{social welfare} of the outcome is 
$\sum_{i,j}  \beta_{j} \alpha_{i} v_{i}$ where the sum is taken over all player $i$ with their allocated slots $j$.

This setting is captured by extensive form games (see \cite{FudenbergTirole91:Game-Theory,Peters15:Game-theory:} 
for comprehensive treatments).
The strategy of each player is an adaptive bidding policy: the bid of player $i$ for slot $j$ is a function of 
its valuation $v_{i}$, the common knowledge about the distributions of player valuations $\vect{F}$
and the history $h_{j}$ of outcomes in auctions before the consideration of slot $j$. Thus 
a player strategy can be denoted as $b_{ij}(v_{i},h_{j})$. We are interested in the \emph{perfect Bayesian equilibria}
which is a refinement of the concepts of Bayes-Nash equilibria and subgame perfect equilibria. 
A profile of bidding polices is a \emph{perfect Bayesian equilibrium} if it is a Bayes-Nash equilibrium of the original game
and given an arbitrary history (of some $t$ first rounds), the policy profile remains also a Bayes-Nash equilibrium of this induced game.

The sponsored search problem via the generalized second-price auctions has been extensively studied, 
first considered by \citet{MehtaSaberi07:Adwords-and-generalized} 
from optimization perspective and was proposed simultaneously by \citet{EdelmanOstrovsky07:Internet-advertising} 
and \citet{Varian07:Position-auctions}
from game theoretical viewpoint (see \cite{LahaiePennock07:Sponsored-search,MailleMarkakis12:Sponsored-search} for surveys on the topic). Recently, \citet{CaragiannisKaklamanis15:Bounding-the-inefficiency} have proved the PoA 
bound of 2.927 (without the independence assumption on distributions $F_{i}$'s), 
the currently best known PoA bound, using a technique called semi-smoothness, 
an extension of the smoothness framework in \cite{Roughgarden15:Intrinsic-robustness}. 
The study of PoA of sequential auctions in algorithmic game theory
has been initiated by \citet{LemeSyrgkanis12:Sequential-auctions}. 
The authors studied sequential first price auctions for matching markets and matroid auctions in 
the full-information environments and showed that the PoA (of pure Nash equilibria) is at most 2. 
Subsequently, \citet{SyrgkanisTardos12:Bayesian-sequential} extended the results to incomplete-informations settings 
and gave constant bounds for both auctions. \citet{LemeSyrgkanis12:Sequential-auctions,SyrgkanisTardos12:Bayesian-sequential}
proposed a bluffing deviation, where a player pretends to play as in equilibrium, until the right moment when 
the player deviates to acquire some item. This hypothetical deviation gives rise to useful inequalities to bound the PoA.

In this section, we show an improved bound of 2 over the best-known PoA bound of 
2.927~\cite{CaragiannisKaklamanis15:Bounding-the-inefficiency}.
In the analysis, the dual variables are intuitively constructed such that they correspond to the player utilities and 
player payments. In order to show the feasibility of dual variables, we also use the idea of bluffing deviations.
These deviations, coupling with the assumption of equilibrium, lead to useful inequalities which are served to prove the feasibility. 
The primal-dual approach indeed enables the improvement as well as a fairly simple proof.

\paragraph{Formulation.} 
For player $i$ with valuation $v_{i}$ and quality factor $\alpha_{i}$, 
let $x_{ij}(v_{i},\alpha_{i})$ be a variable indicating the interim assignment of slot $j$ to player $i$. 
Recall that $F_{i}$ is the distribution of $(v_{i},\alpha_{i})$.
Consider the following relaxation of the sponsored search problem and its dual.
In the primal relaxation, the first constraint says that a player receives at most one slot and 
the second one ensures that one slot is assigned to at most one player. 
%(The constraints can be constructed in the same way as the formulation of the previous section) 

\begin{minipage}[t]{0.4\textwidth}
\begin{align*}
\max ~ \sum_{i,j} \E_{(v_{i},\alpha_{i}) \sim F_{i}} \biggl[  \beta_{j} \alpha_{i} 
				v_{i} \cdot &x_{ij}(v_{i},\alpha_{i}) \biggr] \\
\sum_{j} x_{ij}(v_{i},\alpha_{i}) &\leq 1 \qquad  \forall i, v_{i}, \alpha_{i} \\
\sum_{i} \sum_{(v_{i},\alpha_{i})} f_{i}(v_{i},\alpha_{i}) x_{ij}(v_{i},\alpha_{i}) &\leq 1 \qquad  \forall j \\
x_{ij}(v_{i},\alpha_{i}) &\geq 0 \qquad \forall i,j,v_{i},\alpha_{i}
\end{align*}
\end{minipage}
\qquad
\begin{minipage}[t]{0.45\textwidth}
\begin{align*}
\min ~ \sum_{i} \sum_{(v_{i},\alpha_{i})} y_{i}(v_{i},&\alpha_{i}) + \sum_{j} z_{j}\\
		y_{i}(v_{i},\alpha_{i}) + f_{i}(v_{i},\alpha_{i}) z_{j} &\geq f_{i}(v_{i},\alpha_{i}) \cdot \beta_{j} \alpha_{i} v_{i} 
			\\ & \qquad \qquad \forall i,j,v_{i},\alpha_{i}\\
y_{i}(v_{i},\alpha_{i}), z_{j} &\geq 0 \qquad \forall i,j,v_{i},\alpha_{i} \\
\end{align*}
\end{minipage}

\begin{theorem}
For every sequential  second-price auction setting, 
the expected welfare of every perfect Bayesian equilibrium is at 
least half the maximum welfare.
\end{theorem}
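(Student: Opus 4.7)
The plan is to follow the primal--dual template used for smooth auctions in Section~\ref{sec:smooth-auction}, but with dual variables tailored to the sequential nature of the auction and feasibility derived from a bluffing-style deviation rather than from a smoothness inequality. Fix an arbitrary perfect Bayesian equilibrium $\vect{\sigma}$. I would set
\[
y_i(v_i,\alpha_i) := 2 f_i(v_i,\alpha_i) \cdot \E_{\vect{v}_{-i} \sim \vect{F}_{-i}} \E_{\vect{\sigma}}\bigl[ u_i(\vect{\sigma}(v_i,\vect{v}_{-i}); v_i) \bigr], \qquad z_j := 2 \cdot \E_{\vect{v} \sim \vect{F}} \E_{\vect{\sigma}}\bigl[ R_j(\vect{\sigma}(\vect{v})) \bigr],
\]
where $R_j$ denotes the auctioneer's revenue collected on slot $j$ in equilibrium (that is, $\beta_j$ times the second-highest effective bid on $j$).

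Quasi-linearity of utilities implies that the total equilibrium welfare decomposes as total utility plus total revenue, so
\[
\sum_{i,v_i,\alpha_i} y_i(v_i,\alpha_i) + \sum_j z_j = 2 \cdot \E_{\vect{v},\vect{\sigma}}\bigl[ \textsc{Sw}(\vect{\sigma}(\vect{v}); \vect{v}) \bigr],
\]
exactly twice the expected equilibrium welfare. Because the primal LP is a valid relaxation of the integer assignment problem that selects the welfare-maximizing allocation pointwise, its optimum upper-bounds the expected optimal welfare. Once feasibility of $(y,z)$ is established, weak duality yields that the expected optimal welfare is at most twice the expected equilibrium welfare, that is, $\PoA \leq 2$.

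Feasibility reduces to proving, for every $(i, j, v_i, \alpha_i)$, the inequality $2\,U_i(v_i,\alpha_i) + z_j/f_i(v_i,\alpha_i) \geq \beta_j \alpha_i v_i$, where $U_i(v_i,\alpha_i)$ denotes the interim equilibrium utility. I would establish this through the bluffing deviation in which player $i$ with type $(v_i,\alpha_i)$ bids $0$ in rounds $1,\ldots,j-1$ (thereby guaranteed not to win any of those slots) and then bids $v_i$ on slot $j$. By the subgame-perfection property of $\vect{\sigma}$, the remaining players continue to play an equilibrium of the subgame induced by this history; writing $H_j^{-i}$ for the competitors' maximum effective bid on slot $j$ in that subgame, player $i$'s deviation utility is $\beta_j \cdot \max(0, \alpha_i v_i - H_j^{-i})$. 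The Bayes-Nash inequality, combined with $\max(0,x) \geq x$, then yields $U_i(v_i,\alpha_i) \geq \beta_j \alpha_i v_i - \beta_j \E[H_j^{-i}]$.

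The main obstacle will be to certify that $\beta_j \E[H_j^{-i}] \leq z_j/(2 f_i(v_i,\alpha_i))$, that is, to compare the competitors' off-path bid distribution induced by the bluff with the on-path equilibrium revenue. Two ingredients should drive this comparison: the non-overbidding assumption, which bounds any competitor's bid on any slot by its true valuation, and the fact that slots are sold in non-increasing order of click-through rate $\beta_j$, so that a competitor who has not yet won a slot by round $j$ has in particular lost all earlier, higher-value slots in the equilibrium play. Together these should let me dominate $H_j^{-i}$ by the second-highest effective bid on slot $j$ in the original equilibrium, and therefore by $R_j/\beta_j$, yielding the required revenue comparison. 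This bid-domination step is the delicate point of the argument; once it is in hand, the paragraphs above combine immediately via weak duality to give $\PoA \leq 2$.
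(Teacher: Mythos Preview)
Your dual decomposition $y_i \propto$ (interim utility) and $z_j \propto$ (equilibrium revenue on slot $j$) does sum to twice the equilibrium welfare, so the bounding step is fine. The gap is in the feasibility argument, precisely at the step you flag as ``the delicate point.''

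The deviation you choose --- bid $0$ in rounds $1,\dots,j-1$, then bid $v_i$ in round $j$ --- sends play \emph{off} the equilibrium path before round $j$. Subgame perfection tells you the competitors play an equilibrium of the induced subgame, but that subgame has a different history: a different player may have won slot $1$ (if $i$ was the on-path winner), the observed prices may differ, and hence the set of active bidders and their round-$j$ bids can differ arbitrarily from the on-path profile. Your quantity $H_j^{-i}$ is therefore an \emph{off-path} object, while $R_j$ is \emph{on-path}; neither non-overbidding nor the monotonicity of $\beta_j$ gives you any coupling between them. In particular, the inequality you need, $\beta_j\,\E[H_j^{-i}] \le \E[R_j]$, can fail: a strong competitor who on-path wins an early slot (and hence is absent from round $j$) may, off-path, still be active in round $j$ and push $H_j^{-i}$ above the on-path second price. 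Note also that if this inequality \emph{did} hold, your argument would actually give $U_i + \E[R_j] \ge \beta_j\alpha_i v_i$ without the factor~$2$, hence $\PoA \le 1$, which is impossible --- a sign that the comparison cannot hold in general.

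The paper avoids this trap in two ways. First, the bluffing deviation follows the equilibrium bid $b_i$ through round $j-1$ and only deviates at round $j$; the history at round $j$ is then exactly the equilibrium history, so competitors' round-$j$ bids are their on-path bids. Second, the dual variables are chosen differently: $y_i$ is proportional to the expected \emph{value} $\beta_{\pi(i)}\alpha_i v_i$ player $i$ receives (not utility), and $z_j$ is the expected highest \emph{competitor} effective bid on slot $j$ in equilibrium (not the second price). With this choice the feasibility inequality becomes a pointwise statement about the on-path profile that can be verified by a three-case analysis (player $i$ already won a better slot; the round-$j$ winning bid already exceeds $\alpha_i v_i$; otherwise the deviation wins slot $j$). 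If you want to salvage your utility-plus-revenue decomposition, you must replace the ``bid $0$'' bluff with one that keeps the history on-path through round $j-1$; but then you will need the case split anyway, and the argument essentially reduces to the paper's.
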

\begin{proof}
Fix a Bayes-Nash equilibrium $\vect{\sigma}$. 
Let $\pi(\vect{\sigma}(\vect{v}, \vect{\alpha}), i)$ be the random variable indicating the 
slot that player $i$ receives in the equilibrium $\vect{\sigma}(\vect{v}, \vect{\alpha})$ given the valuation profile $\vect{v}$
and the quality factor profile $\vect{\alpha}$.
Whenever $\vect{\sigma}$ and $(\vect{v}, \vect{\alpha})$ are clear in the context, we simply write 
$\pi(\vect{\sigma}(\vect{v},\vect{\alpha}), i)$
as $\pi(i)$. Inversely, let $\pi^{-1}(\vect{\sigma}(\vect{v}, \vect{\alpha}),j)$ be the winner of slot $j$
in profile $\vect{\sigma}(\vect{v}, \vect{\alpha})$. 
Note that $\pi^{-1}(\vect{\sigma}(\vect{v}, \vect{\alpha}),j)$ is also a random variable.

\paragraph{Dual Variables.} For fixed $(v_{i},\alpha_{i})$, denote $\vect{B}(v_{i},\alpha_{i})$ the distribution of 
the equilibrium bid $\vect{b} = \linebreak \vect{\sigma}\bigl( (v_{i},\vect{v}_{-i}), (\alpha_{i}, \vect{\alpha}_{-i}) \bigr)$.
Recall that $\vect{b} = (b_{1}, \ldots, b_{n})$ where $b_{i}$ is a bid vector over bids $b_{ij}$ --- the equilibrium bid
that player $i$ submits in the round selling slot $j$. 
Moreover, denote $\vect{B}_{-i}$ the distribution of 
the equilibrium bid $\vect{b}_{-i} = \vect{\sigma}_{-i}\bigl( (v_{i},\vect{v}_{-i}), (\alpha_{i}, \vect{\alpha}_{-i}) \bigr)
= \vect{\sigma}_{-i}\bigl(\vect{v}_{-i}, \vect{\alpha}_{-i}) \bigr)$ where the last equality is due to the independence
of distributions $F_{i}$'s.
Define the dual variables as follows.
\begin{align*}
y_{i}(v_{i},\alpha_{i}) 
	&:= f_{i}(v_{i},\alpha_{i}) \cdot \E_{\vect{b} \sim \vect{B}(v_{i},\alpha_{i})} 
			\left[ \beta_{\pi(\vect{b}, i)} \cdot \alpha_{i}v_{i} \right], \\
z_{j}	&:= \max_{i} \E_{\vect{b}_{-i} \sim \vect{B}_{-i}} \left[ \beta_{j} \cdot \alpha_{\pi^{-1}(\vect{b}_{-i},j)} b_{\pi^{-1}(\vect{b}_{-i},j),j} \right] 
\end{align*}
Note that $\pi^{-1}(\vect{b}_{-i},j)$ is the winner of slot $j$ in the round selling slot $j$
assuming that player $i$ do not participate to this round. 

\paragraph{Feasibility.} Fix a player $i$ with valuation $v_{i}$ and quality factor $\alpha_{i}$, and a slot $j$. 
We show that the dual constraint corresponding to $i,j,v_{i},\alpha_{i}$ is satisfied. 
By the dual variable definitions and the independence of distributions, it is equivalent to prove that:
\begin{align}	\label{eq:GSP-dual}
\E_{\vect{b} \sim \vect{B}(v_{i},\alpha_{i})}
	\left[ \beta_{\pi(\vect{b}, i)} \cdot \alpha_{i}v_{i} + \beta_{j} \cdot \alpha_{\pi^{-1}(\vect{b}_{-i},j)} b_{\pi^{-1}(\vect{b}_{-i},j),j} \right] 
	\geq \beta_{j} \cdot \alpha_{i} v_{i}
\end{align}

We prove this inequality through a choice of a hypothetical deviation of player $i$ and use the assumption that 
$\vect{\sigma}$ is a Bayes-Nash equilibrium. We first make some observations. Consider a fixed valuation profile $\vect{v}_{-i}$,
a fixed quality factor profile $\vect{\alpha}_{-i}$
and a realization of (mixed) equilibrium $\vect{\sigma}\bigl( (v_{i},\vect{v}_{i}),(\alpha_{i},\vect{\alpha}_{-i}) \bigr)$, 
denoted as $\vect{b} = (b_{1}, \ldots, b_{n})$. 
Now the assignment $\pi$ of slots to players is completely determined. There are three different cases.
\begin{description}
	\item[Case 1:] Player $i$ receives some slot $\pi(i) \leq j$.
	 Then $ \beta_{\pi(i)} \cdot \alpha_{i}v_{i} \geq \beta_{j} \cdot \alpha_{i} v_{i}$
	 since $\beta_{\pi(i)} \geq \beta_{j}$.
	\item[Case 2:] $\pi(i) > j$ and $\alpha_{\pi^{-1}(j)}  b_{\pi^{-1}(j),j} \geq \alpha_{i} v_{i}$.
	Then $\beta_{j} \cdot \alpha_{\pi^{-1}(j)} \cdot b_{\pi^{-1}(j)} \geq \beta_{j} \cdot \alpha_{i} v_{i}$. 
	\item[Case 3:] $\pi(i) > j$ and $\alpha_{\pi^{-1}(j)}  b_{\pi^{-1}(j),j} < \alpha_{i} v_{i}$.
	Note that in this case in the round $j$, player $i$ could have submitted a bid without violating the no-overbidding property 
	such that the corresponding effective bid is infinitesimal larger than $\alpha_{\pi^{-1}(j)} b_{\pi^{-1}(j),j}$
	and could have received slot $j$. 
\end{description}

We are now choosing a bid deviation in order to prove the dual constraint based on the fact that
$\vect{\sigma}$ is a Bayes-Nash equilibrium. Intuitively, the different cases above suggest the following deviation.
For the first two cases, the term inside the expectations in the left-hand-side of (\ref{eq:GSP-dual})
is already larger than the right-hand-side (so no need to deviate). Hence, the deviation is necessary only in Case 3.  

Formally, we define the (mixed) deviation $b^{'}_{i}$ as follows. 
First, player $i$ follows the equilibrium strategy $b_{i}$. If until the allocation step of slot $j$, player $i$
has not received to any slot then submit $v_{i}$. 
As $\vect{\sigma}$ is a Bayes-Nash equilibrium, 
the utility of player $i$ is at least that induced by this deviation. Specifically,
\begin{align*}
\E_{\vect{b} \sim \vect{B}(v_{i},\alpha_{i})} \bigl[ u_{i}(\vect{b}) \bigr]
\geq 
\E_{\vect{b}_{-i} \sim \vect{B}_{-i}} \E_{b'_{i}} 
		 \biggl[ u_{i}( b'_{i} ,\vect{b}_{-i}) \biggr]
\end{align*}
where since $(v_{i},\alpha_{i})$ is fixed, for short, we write $u_{i}(\vect{b}) = u_{i}(\vect{b}; v_{i},\alpha_{i})$.

By definition of the deviation $b'_{i}$, player $i$ follows the same equilibrium strategy $b_{i}$ if Case 1 happens. 
Therefore, the above inequality is equivalent to 
\begin{align}	\label{ineq:sponsor-deviation}
\E_{\vect{b} \sim \vect{B}(v_{i},\alpha_{i})} 
		\biggl[ u_{i}(\vect{b}) \bigr | \textnormal{Case 2 or Case 3} \biggr]
\geq
\E_{\vect{b}_{-i} \sim \vect{B}_{-i}} \E_{b'_{i}} 
		 \biggl[ u_{i}( b'_{i} ,\vect{b}_{-i}) \bigr | \textnormal{Case 2 or Case 3} \biggr]
\end{align}

Note that if Case 3 holds then player $i$ gets slot $j$ with the payment 
$\frac{\alpha_{\pi^{-1}(\vect{b}_{-i},j)} b_{\pi^{-1}(\vect{b}_{-i},j),j}}{\alpha_{i}}$. 
So

\begin{align}	\label{ineq:sponsor-deviation-refine}
\E_{\vect{b}_{-i} \sim \vect{B}_{-i}} \E_{b'_{i}} 
		 \biggl[ u_{i}( b'_{i} ,\vect{b}_{-i}) \bigr | \textnormal{Case 3} \biggr]
&=
\E_{\vect{b}_{-i} \sim \vect{B}_{-i}}
	\left[  \beta_{j} \cdot \alpha_{i} \biggl( v_{i} - \frac{\alpha_{\pi^{-1}(\vect{b}_{-i},j)} b_{\pi^{-1}(\vect{b}_{-i},j),j}}{\alpha_{i}} \biggr) \big | \textnormal{Case 3}  \right] 
\end{align}

We are now ready to prove the inequality (\ref{eq:GSP-dual}). We have
\begin{align*}
& \E_{\vect{b} \sim \vect{B}(v_{i},\alpha_{i})} 
	\left[ \beta_{\pi(\vect{b}, i)} \cdot \alpha_{i}v_{i} + \beta_{j} \cdot \alpha_{\pi^{-1}(\vect{b}_{-i},j)} b_{\pi^{-1}(\vect{b}_{-i},j),j} \right] \\
&= \sum_{\ell = 1, 2,3 } \E_{\vect{b} \sim \vect{B}(v_{i},\alpha_{i})}
	\left[ \beta_{\pi(\vect{b}, i)} \cdot \alpha_{i}v_{i} + \beta_{j} \cdot \alpha_{\pi^{-1}(\vect{b}_{-i},j)} b_{\pi^{-1}(\vect{b}_{-i},j),j} \big| \textnormal{Case } \ell \right] \\
&\geq \E_{\vect{b} \sim \vect{B}(v_{i},\alpha_{i})} \left[ \beta_{j} \cdot \alpha_{i} v_{i} \big | \textnormal{Case 1} \right]
	+   \E_{\vect{b} \sim \vect{B}(v_{i},\alpha_{i})}  \left[ \beta_{\pi(\vect{b}, i)} \cdot \alpha_{i}v_{i} + \beta_{j} 
					\cdot \alpha_{\pi^{-1}(\vect{b}_{-i},j)} b_{\pi^{-1}(\vect{b}_{-i},j),j} \big| \textnormal{Case 2 or 3} \right] \\
&\geq \E_{\vect{b} \sim \vect{B}(v_{i},\alpha_{i})} \left[ \beta_{j} \cdot \alpha_{i} v_{i} \big | \textnormal{Case 1} \right]
	+   \E_{\vect{b} \sim \vect{B}(v_{i},\alpha_{i})} \left[ u_{i}(\vect{b}) + \beta_{j} 
					\cdot \alpha_{\pi^{-1}(\vect{b}_{-i},j)} b_{\pi^{-1}(\vect{b}_{-i},j),j} \big| \textnormal{Case 2 or 3} \right] \\
&\geq \E_{\vect{b} \sim \vect{B}(v_{i},\alpha_{i})} \left[ \beta_{j} \cdot \alpha_{i} v_{i} \big | \textnormal{Case 1} \right]
	+   \E_{\vect{b} \sim \vect{B}(v_{i},\alpha_{i})}  \left[ u_{i}\bigl((b'_{i},\vect{b}_{-i}) \bigr) + \beta_{j} 
					\cdot \alpha_{\pi^{-1}(\vect{b}_{-i},j)} b_{\pi^{-1}(\vect{b}_{-i},j),j} \big| \textnormal{Case 2 or 3} \right] \\
&\geq \E_{\vect{b} \sim \vect{B}(v_{i},\alpha_{i})} \left[ \beta_{j} \cdot \alpha_{i} v_{i} \big | \textnormal{Case 1 or 2} \right]
	+   \E_{\vect{b} \sim \vect{B}(v_{i},\alpha_{i})}  \left[ u_{i}\bigl((b'_{i},\vect{b}_{-i}) \bigr) 
				+ \beta_{j} \cdot \alpha_{\pi^{-1}(\vect{b}_{-i},j)} b_{\pi^{-1}(\vect{b}_{-i},j),j} \big| \textnormal{Case 3} \right] \\
&\geq \E_{\vect{b} \sim \vect{B}(v_{i},\alpha_{i})}\left[ \beta_{j} \cdot \alpha_{i} v_{i} \big | \textnormal{Case 1 or 2} \right] \\
  &\qquad	\qquad 
  	+  \E_{\vect{b} \sim \vect{B}(v_{i},\alpha_{i})}  \left[  \beta_{j} \cdot \alpha_{i} v_{i} 
				  - \beta_{j} \cdot \alpha_{\pi^{-1}(\vect{b}_{-i},j)} b_{\pi^{-1}(\vect{b}_{-i},j),j}
  				 + \beta_{j} \cdot \alpha_{\pi^{-1}(\vect{b}_{-i},j)} b_{\pi^{-1}(\vect{b}_{-i},j),j} \big| \textnormal{Case 3} \right] \\
&= \E_{\vect{b} \sim \vect{B}(v_{i},\alpha_{i})} \left[ \beta_{j} \cdot \alpha_{i} v_{i} \right] =  \beta_{j} \cdot \alpha_{i} v_{i}
\end{align*}
The first inequality follows the assumption of Case 1: $\beta_{\pi(i)} \geq \beta_{j}$. The second inequality holds since
the utility $u_{i}(\vect{b}) \leq \beta_{\pi(\vect{b},i)} \cdot \alpha_{i} v_{i}$. The third inequality is due to 
(\ref{ineq:sponsor-deviation}). The fourth inequality follows the assumption of Case 2: $\alpha_{\pi^{-1}(j)}  v_{\pi^{-1}(j)} \geq \alpha_{i} v_{i}$.
The last inequality follows (\ref{ineq:sponsor-deviation-refine}). Hence, the constructed dual variables form a dual feasible solution.

\paragraph{Bounding primal and dual.} 
Let $\vect{B}$ be the distribution of equilibrium bid $\vect{b} = \vect{\sigma}(\vect{v})$.
The expected welfare of equilibrium $\vect{\sigma}$ is 
$
\E_{\vect{b} \sim \vect{B}} \bigl[ \sum_{i}  \beta_{\pi(\vect{b},i)} \alpha_{i} v_{i} \bigr]
$.
By the definition of dual variables, we have 
\begin{align*}
\sum_{i,(v_{i}, \alpha_{i})} y_{i}(v_{i},\alpha_{i})
= \sum_{i}  \E_{(v_{i},\alpha_{i}) \sim F_{i}} \E_{\vect{b} \sim \vect{B}(v_{i},\alpha_{i})} \biggl[  \beta_{\pi(\vect{b},i)} \alpha_{i} v_{i} \biggr]
= \E_{\vect{b} \sim \vect{B}} \biggl[ \sum_{i}  \beta_{\pi(\vect{b},i)} \alpha_{i} v_{i} \biggr].
\end{align*}
Besides, consider a slot $j$ and let $i^{*}$ be the player such that 
$$
z_{j} = \E_{\vect{b}_{-i^{*}} \sim \vect{B}_{-i^{*}} }
			\left[ \beta_{j} \cdot \alpha_{\pi^{-1}(\vect{b}_{-i^{*}},j)} b_{\pi^{-1}(\vect{b}_{-i^{*}},j),j} \right]
$$ 
As the right-hand side is independent of $b_{i^{*}}$, we have 
$$
z_{j} = \E_{b_{i^{*}}} \E_{\vect{b}_{-i^{*}} \sim \vect{B}_{-i^{*}} }
			\left[ \beta_{j} \cdot \alpha_{\pi^{-1}(\vect{b}_{-i^{*}},j)} b_{\pi^{-1}(\vect{b}_{-i^{*}},j),j} \right]
       = \E_{\vect{b}}
			\left[ \beta_{j} \cdot \alpha_{\pi^{-1}(\vect{b}_{-i^{*}},j)} b_{\pi^{-1}(\vect{b}_{-i^{*}},j),j} \right]
$$
Moreover,
\begin{align*}
z_{j} &\leq  \E_{\vect{b}} \left[ \beta_{j} \cdot \alpha_{\pi^{-1}(\vect{b},j)} b_{\pi^{-1}(\vect{b},j),j} \right]
\leq  \E_{\vect{b}} \left[ \beta_{j} \cdot \alpha_{\pi^{-1}(\vect{b},j)} v_{\pi^{-1}(\vect{b},j)} \right]	
\end{align*}
The first inequality holds since the effective bid of the slot-$j$-winner in round $j$
including all players is larger than that in case player $i^{*}$ does not participate.
The last inequality is due to the non-overbidding property. Summing over all $j$, we have
\begin{align*}
\sum_{j}z_{j} &\leq \E_{\vect{b}}
			\biggl[ \sum_{j} \beta_{j} \cdot \alpha_{\pi^{-1}(\vect{b},j)} v_{\pi^{-1}(\vect{b},j)} \biggr]	
		= \E_{\vect{b}} \biggl[ \sum_{i}  \beta_{\pi(\vect{b},i)} \alpha_{i} v_{i} \biggr].
\end{align*}
Thus, the dual objective value is at most twice the expected welfare of the equilibrium. 
\end{proof}

\paragraph{Remark.} The non-overbidding assumption can be relaxed in the same way as the remark in 
Section~\ref{sec:pd-non-smooth}. Specifically, if players are allowed to bid up to a constant $r$ times their valuations 
(hence, the utility of a winning player may be negative) then the PoA is at most $(1+r)$.  

\subsubsection{Sequential First Price Auctions in Matching Markets}

\paragraph{Model.} 
In the matching market problem, there are $n$ players and $m$ items. Each player $i$ has \emph{private unit-demand valuation}
$v_{i}: 2^{[m]} \rightarrow \mathbb{R}$ defined as $v_{iS} := \max_{j} v_{ij}$ where $v_{ij}$ is the valuation of player $i$ on item $j$.
Note that in the sponsored search problem $v_{ij} \geq v_{ij'}$ for every $j < j'$ and for every player $i$, 
while in the matching market problem it might be that for some items $j,j'$ and some players $i, i'$, 
$v_{ij} > v_{ij'}$ and $v_{i'j} < v_{i'j'}$.   
The valuation vector $v_{i}$ is drawn from a publicly known distribution $F_{i}$.
In the model, we assume that the distributions $F_{i}$'s are mutually independent. 
An \emph{outcome} is an assignment of items to players.

In the auction, the auctioneer sells items sequentially one-by-one
via the first price mechanisms. At the consideration of item $j$, the auctioneer collects 
all the bids $b_{ij}$ on item $j$ from all players.
We also assume that the non-overbidding property, meaning that $b_{ij} \leq v_{i}$ for all $i$ and $j$. 
The auctioneer then assigns item $j$ to the player with highest bid. 
Note that, in contrast to the sponsored search problem, a player may receive multiple items. 
The payment of the winning player is simply the winning bid.
The \emph{utility} of player $i$ is $(v_{iS} - \sum_{j \in S} b_{ij})$ where $S$ is its allocated items. 
The \emph{social welfare} of the outcome is 
$\sum_{i,j} v_{iS}$ where the sum is taken over all players $i$ and their corresponding allocated items $S$.

Related work about sequential auctions have been summarized in the previous section.
For the matching market problem, \citet{LemeSyrgkanis12:Sequential-auctions} proved that the sequential auctions via the second price mechanisms
may lead to unbounded inefficiency.  The authors \cite{LemeSyrgkanis12:Sequential-auctions} then 
considered the sequential first price auctions and showed that 
in full-information settings, the PoA is at most 2 and 4 for pure and mixed Nash equilibria. 
Subsequently, \citet{SyrgkanisTardos12:Bayesian-sequential} extended the results to incomplete-information settings.
They proved a Bayesian PoA bound $2e/(e-1)$ for matching markets with independent valuations. 
They also raised a question whether the difference of PoA bounds between the full-information settings and the incomplete-information 
ones is necessary. 

In this section, we answer this question by showing that the (mixed) Bayesian PoA is at most 2. 
In the proof, we use similar bluffing deviations as in \cite{LemeSyrgkanis12:Sequential-auctions,SyrgkanisTardos12:Bayesian-sequential} 
and the primal-dual approach enables the improvement. 
The proof follows similar structure as the one in Section~\ref{sec:sponsor}; however, there is a subtle difference compared to 
the sponsored search problem. In the latter, each player receives at most one item (slot) so in constructing the hypothetical deviation, 
it is sufficient to design a deviation in which the player gets one item, improves its utility and then leaves the game (bids 0 in subsequent rounds). 
In the matching market problem, a player may receive multiple 
items hence the player would deviate in such a way that the player receives only the highest valuable item without receiving (so paying for) items 
allocated in previous rounds.
However, such deviations may lead to completely different outcomes and the equilibrium structure could be very complex to analyze. 
Therefore, we do not reason directly on the utility of players in deviation. Instead, we explore the connection between the winning bid 
and the player valuation. Consequently, the argument works only for the sequential auctions via the first price mechanisms 
(but not via the second price mechanisms).

\paragraph{Formulation.} 
For every player $i$, every valuation $v_{i}$ and every set of items $S$, 
let $x_{iS}(v_{i})$ be a variable indicating the interim assignment of $S$  to player $i$. 
Consider the following formulation and its dual.
In the primal, the first and second constraints are relaxations of the facts that 
a player receives a set of items and 
an item is assigned to at most one player, respectively.

\begin{minipage}[t]{0.4\textwidth}
\begin{align*}
\max ~ \sum_{i,S} \E_{v_{i} \sim F_{i}} \bigl[ v_{iS} \cdot &x_{iS}(v_{i}) \bigr] \\
\sum_{S} x_{iS}(v_{i}) &\leq 1 \qquad  \forall i, v_{i} \\
\sum_{i} \sum_{v_{i}} f_{i}(v_{i}) \sum_{S: j \in S}x_{iS}(v_{i}) &\leq 1 \qquad  \forall j \\
x_{iS}(v_{i}) &\geq 0 \qquad \forall i,j,v_{i}
\end{align*}
\end{minipage}
\qquad
\begin{minipage}[t]{0.45\textwidth}
\begin{align*}
\min ~ \sum_{i} \sum_{v_{i}} y_{i}(v_{i}) &+ \sum_{j} z_{j}\\
		y_{i}(v_{i}) + f_{i}(v_{i}) \sum_{j \in S} z_{j} &\geq f_{i}(v_{i}) \cdot v_{iS} 
				 & & \forall i,S,v_{i}\\
y_{i}(v_{i}), z_{j} &\geq 0 & & \forall i,j,v_{i} \\
\end{align*}
\end{minipage}

\begin{theorem}
For every sequential first-price auction, 
the expected welfare of every perfect Bayesian equilibrium is at 
least half the maximum welfare.
\end{theorem}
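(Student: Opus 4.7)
I would follow the primal-dual template of the sponsored-search proof in Section~\ref{sec:sponsor}, adapting it to the first-price sequential matching-market setting. Define $y_{i}(v_{i}) := f_{i}(v_{i}) \cdot \E_{\vect{b} \sim \vect{B}(v_{i})}[v_{i,S_{i}(\vect{b})}]$, where $S_{i}(\vect{b})$ denotes the set of items allocated to player $i$ under bid profile $\vect{b}$, and $z_{j} := \max_{i} \E_{\vect{b}_{-i} \sim \vect{B}_{-i}}[b_{\pi^{-1}(\vect{b}_{-i},j),j}]$, where $\pi^{-1}(\vect{b}_{-i},j)$ is the player who wins item $j$ when the first-price rule is applied to the bid vector $(\vect{b}_{-i},0)$, i.e.\ treating player $i$ as absent. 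The first dual constraint then follows immediately from the definition of $z_{j}$.

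The key step is the feasibility of the second dual constraint. By unit-demand, pick $j^{*} \in \arg\max_{j \in S} v_{ij}$ so that $v_{iS}=v_{ij^{*}}$; it suffices to show $\E[v_{i,S_{i}(\vect{b})}] + \E_{\vect{b}_{-i}}[b_{\pi^{-1}(\vect{b}_{-i},j^{*}),j^{*}}] \geq v_{ij^{*}}$. I would analyze a bluffing deviation in the Leme--Syrgkanis--Tardos style: player $i$ bids $0$ in every round $j \neq j^{*}$ and $v_{ij^{*}}$ at round $j^{*}$. Splitting on realizations, in case (a) the winning bid on $j^{*}$ in the ``$i$-absent'' sub-auction is already at least $v_{ij^{*}}$, so the $z_{j^{*}}$ term covers the right-hand side; in case (b) the deviation would let $i$ win $j^{*}$, and one must show that $v_{i,S_{i}(\vect{b})} \geq v_{ij^{*}}$ in actual equilibrium. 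Crucially, as flagged in the paper's remark, the deviation's utility is essentially zero in first-price (unlike the second-price sponsored-search case), so the argument must proceed by directly relating the winning bid on $j^{*}$ to $v_{i,S_{i}(\vect{b})}$ through the perfect-Bayesian-equilibrium continuation structure at round $j^{*}$, rather than via a utility inequality.

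For bounding the primal against the dual, $\sum_{i,v_{i}} y_{i}(v_{i}) = \E[\textnormal{welfare of equilibrium}]$ by construction. For $\sum_{j} z_{j}$, fix the arg-max player $i^{*}(j)$ for each $j$; since $b_{i^{*}(j),j}$ does not appear in the expression for $z_{j}$, one rewrites $z_{j} = \E_{\vect{b}}[b_{\pi^{-1}(\vect{b}_{-i^{*}(j)},j),j}]$. Removing a player can only weakly decrease the winning bid on $j$, so this is at most the equilibrium winning bid on $j$. Summing over $j$ and applying non-overbidding ($\sum_{j \in S_{i}(\vect{b})} b_{ij} \leq v_{i,S_{i}(\vect{b})}$ for each player $i$) yields $\sum_{j} z_{j} \leq \E[\textnormal{welfare of equilibrium}]$. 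Consequently the dual objective is at most $2\cdot \E[\textnormal{welfare of equilibrium}]$, and weak duality gives $\PoA \leq 2$.

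The hard part is case (b) of the feasibility analysis. Without the clean utility comparison that powered the sponsored-search proof, one must extract the inequality $v_{i,S_{i}(\vect{b})} \geq v_{ij^{*}}$ (or its expectation) from the PBE property in the subgame starting at round $j^{*}$, precisely in realizations where the ``$i$-absent'' winning bid on $j^{*}$ falls below $v_{ij^{*}}$. Turning this bid-versus-valuation comparison into a rigorous inequality---without invoking a utility deviation that could let $i$ win multiple items and disrupt all subsequent rounds---is the delicate new step, and explains why the technique is specific to first-price rather than second-price sequential auctions in this model.
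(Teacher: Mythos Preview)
Your dual variables and the primal--dual bounding step are exactly those of the paper, so the framework is right. The gap is entirely in the feasibility argument, and it is a real gap rather than a detail to be filled in.

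Your case split is too coarse. In case~(b) you want to conclude that $v_{i,S_{i}(\vect{b})} \geq v_{ij^{*}}$ whenever the ``$i$-absent'' winning bid on $j^{*}$ is below $v_{ij^{*}}$, but this is simply false: player $i$ may rationally forgo $j^{*}$ (losing to a bid below $v_{ij^{*}}$) and instead win a set $T$ with $v_{iT} < v_{ij^{*}}$, because the \emph{payment} on $T$ is low enough to make this preferable. Your proposed deviation (bid $0$ except $v_{ij^{*}}$ at round $j^{*}$) cannot rescue this: in a first-price auction it yields utility at most zero, so the equilibrium inequality it generates is vacuous. Moreover, bidding $0$ in rounds before $j^{*}$ alters the public history, so the other players' continuation bids need not equal $\vect{b}_{-i}$ at all; the quantity $b_{\pi^{-1}(\vect{b}_{-i},j^{*}),j^{*}}$ in your $z_{j^{*}}$ term is defined with respect to the equilibrium history, not the deviated one.

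The paper closes this gap with two ideas you are missing. First, it does not try to cover $v_{ij^{*}}$ by either term separately; it proves the \emph{additive} inequality $v_{iT} + b_{\pi^{-1}(\vect{b}_{-i},j^{*}),j^{*}} \geq v_{ij^{*}}$ realization by realization (in expectation). Second, the deviation that yields this is: follow the equilibrium bids $b_{i}$ up to round $j^{*}$ (so the history, and hence $\vect{b}_{-i}$, is preserved), then at round $j^{*}$ bid $v_{ij^{*}} - v_{i,j^{*}(T)}$, then bid $0$ thereafter, where $T$ is player $i$'s equilibrium winning set determined from a fresh sample $\vect{w}_{-i}\sim\vect{F}_{-i}$. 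If this bid wins, the first-price utility on $j^{*}$ alone is $v_{ij^{*}} - (v_{ij^{*}} - v_{i,j^{*}(T)}) = v_{i,j^{*}(T)}$, which matches the valuation component of the equilibrium utility; comparing the two forces $b_{\pi^{-1}(j^{*}),j^{*}} \geq v_{ij^{*}} - v_{i,j^{*}(T)}$, i.e.\ exactly the additive inequality. The choice of deviation bid $v_{ij^{*}} - v_{i,j^{*}(T)}$ (rather than $v_{ij^{*}}$) is the crux, and it is what makes the argument first-price-specific.
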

\begin{proof}
Fix a Bayes-Nash equilibrium $\vect{\sigma}$. 
Let $\pi(\vect{\sigma}(\vect{v}), i)$ be the random variable indicating the 
set of items allocated to player $i$ in the equilibrium given the valuation profile $\vect{v}$.
%Whenever $\vect{\sigma}$ and $\vect{v}$ are clear in the context, we write simply 
%$\pi(\vect{\sigma}(\vect{v}), i)$ as $\pi(i)$. 
Inversely, let $\pi^{-1}(\vect{\sigma}(\vect{v}),j)$ be the winner of item $j$. 
Note that $\pi^{-1}(\vect{\sigma}(\vect{v}),j)$ is also a random variable.

\paragraph{Dual Variables.} For a fixed valuation $v_{i}$, denote $\vect{B}(v_{i})$ the distribution of 
the equilibrium bid $\vect{b} = \vect{\sigma}(v_{i},\vect{v}_{-i})$.
Recall that $\vect{b} = (b_{1}, \ldots, b_{n})$ where $b_{i}$ is a bid vector over $b_{ij}$ --- the equilibrium bid
that player $i$ submits in the round selling item $j$ for $1 \leq j \leq m$. 
Moreover, denote $\vect{B}_{-i}$ the distribution of 
the equilibrium bid $\vect{b}_{-i} = \vect{\sigma}_{-i}(v_{i},\vect{v}_{-i})
= \vect{\sigma}_{-i}(\vect{v}_{-i})$ where the last equality is due to the independence
of distributions.
Define the dual variables as follows.
\begin{align*}
y_{i}(v_{i}) 
	&:= f_{i}(v_{i}) \cdot \E_{\vect{b} \sim \vect{B}(v_{i})} 
			\left[ v_{i,\pi(\vect{b},i)} \right], \\
z_{j}	&:= \max_{i} \E_{\vect{b}_{-i} \sim \vect{B}_{-i}} \left[ b_{\pi^{-1}(\vect{b}_{-i},j),j} \right] 
\end{align*}
Note that $\pi^{-1}(\vect{b}_{-i},j)$ is the winner of item $j$
assuming that player $i$ does not participate to this round. 

\paragraph{Feasibility.} Fix a player $i$ with valuation $v_{i}$ and a set of items $S$. 
We show that the dual constraint corresponding to $i,S,v_{i}$ is satisfied. 
By the dual variable definitions and the independence of distributions, it is equivalent to prove that:
\begin{align}	\label{eq:matching-dual}
\E_{\vect{b} \sim \vect{B}(v_{i})}
	\biggl[ v_{i,\pi(\vect{b},i)} + \sum_{j \in S}b_{\pi^{-1}(\vect{b}_{-i},j),j} \biggr] 
	\geq v_{iS}
\end{align}

We prove this inequality through a choice of a hypothetical deviation of player $i$ and use the assumption that 
$\vect{\sigma}$ is a Nash-Bayes equilibrium. 
For any set of items $U$, let $j^{*}(U) \in U$ be an item such that $v_{j^{*}} = \max_{j \in U} v_{ij} = v_{iU}$.
We first make some observations. Consider a fixed valuation profile $\vect{v}_{-i}$
and a realization of (mixed) equilibrium $\vect{\sigma}(v_{i},\vect{v}_{i})$, 
denoted as $\vect{b} = (b_{1}, \ldots, b_{n})$. 
Now the assignment $\pi$ of items to players is completely determined. 
Let $T = \pi(\vect{b},i)$.
There are three different cases.
\begin{description}
	\item[Case 1:] $v_{i,j^{*}(T)} \geq v_{i,j^{*}(S)}$. 
	\item[Case 2:] $v_{i,j^{*}(T)} < v_{i,j^{*}(S)}$ (so $j^{*}(S) \notin T$) and the round of $j^{*}(S)$ is before the round of $j^{*}(T)$. 
	In this case, $b_{\pi^{-1}(j^{*}(S)),j^{*}(S)} \geq v_{i,j^{*}(S)} - v_{i,j^{*}(T)}$ since otherwise $i$ could have improved 
	its utility by submitting a bid of value $(v_{i,j^{*}(S)} - v_{i,j^{*}(T)})$ and stop playing the remaining rounds (by submitting bids 0). 
	\item[Case 3:] $v_{i,j^{*}(T)} < v_{i,j^{*}(S)}$ (so $j^{*}(S) \notin T$) and the round of $j^{*}(T)$ is before the round of $j^{*}(S)$.
	Again, in this case, $b_{\pi^{-1}(j^{*}(S)),j^{*}(S)} \geq v_{i,j^{*}(S)} - v_{i,j^{*}(T)}$ by the same argument. 
\end{description}

The cases suggest the following (mixed) deviation $b^{'}_{i}$ of player $i$. 
Player $i$ draws a random sample of a valuation profile $\vect{w}_{-i} \in \vect{F}_{-i}$
and determine the winning set $T = \pi(\vect{\sigma}(v_{i},\vect{w}_{-i}),i)$ and also item $j^{*}(T)$. 
If $v_{i,j^{*}(T)} \geq v_{i,j^{*}(S)}$ then player $i$ follows the equilibrium strategy $b_{i}$. 
Otherwise, player $i$ first follows strategy $b_{i}$ until the round of item $j^{*}(S)$. In the round of $j^{*}(S)$,
bid $b'_{i,j^{*}(S)} = v_{i,j^{*}(S)} - v_{i,j^{*}(T)}$ and in the subsequent rounds, bid 0.

As $\vect{\sigma}$ is a Bayes-Nash equilibrium, 
the utility of player $i$ is at least that induced by this deviation. Specifically,
\begin{align*}
\E_{\vect{b} \sim \vect{B}(v_{i})} \bigl[ u_{i}(\vect{b}) \bigr]
= \E_{\vect{v}_{-i} \sim \vect{B}_{-i}} \E_{\vect{\sigma}} 
		 \biggl[ u_{i} \bigl( b_{i},\vect{\sigma}_{-i}(\vect{v}_{-i}) \bigr) \biggr]
\geq 
\E_{\vect{w}_{-i} \sim \vect{B}_{-i}} \E_{\vect{\sigma}} 
		 \biggl[ u_{i} \bigl( b'_{i},\vect{\sigma}_{-i}(\vect{w}_{-i}) \bigr) \biggr]
\end{align*}
where since $v_{i}$ is fixed, for short, we write $u_{i}(\vect{b}) = u_{i}(\vect{b}; v_{i})$.
By definition of the deviation $b'_{i}$, player $i$ follows the same equilibrium strategy $b_{i}$ if Case 1 happens. 
Therefore, by remaining variables, the above inequality implies 
\begin{align}	\label{ineq:matching-deviation}
\E_{\vect{v}_{-i} \sim \vect{B}_{-i}} \E_{\vect{\sigma}} 
		\biggl[ b_{\pi^{-1}\bigl( \vect{\sigma}_{-i}(\vect{v}_{-i}),j^{*}(S) \bigr),j^{*}(S )}  \big| \textnormal{Case 2 or 3} \biggr] 
\geq
\E_{\vect{v}_{-i} \sim \vect{B}_{-i}} \E_{\vect{\sigma}} 
		\biggl[ v_{i,j^{*}(S)} - v_{i,j^{*}(T)} \big| \textnormal{Case 2 or 3} \biggr]
\end{align}
where $T = \pi \bigl(\vect{\sigma}_{-i}(v_{i},\vect{v}_{-i}) ,i \bigr)$
the set of items allocated to $i$.

We are now ready to prove the inequality (\ref{eq:matching-dual}). We have
\begin{align*}
\E_{\vect{b} \sim \vect{B}(v_{i})} &
	\biggl[ v_{i,\pi(\vect{b},i)} + \sum_{j \in S}b_{\pi^{-1}(\vect{b}_{-i},j),j} \biggr]  \\
&\geq \sum_{\ell = 1, 2,3 } \E_{\vect{b} \sim \vect{B}(v_{i})}
	\biggl[ v_{i,\pi(\vect{b},i)} + b_{\pi^{-1}(\vect{b}_{-i},j^{*}(S)),j^{*}(S)} \big| \textnormal{Case } \ell \biggr] \\
&\geq \E_{\vect{b} \sim \vect{B}(v_{i})} \left[v_{iS} \big | \textnormal{Case 1} \right]
	+   \E_{\vect{b} \sim \vect{B}(v_{i})}  
		\left[ v_{i,\pi(\vect{b},i)} + b_{\pi^{-1}(\vect{b}_{-i},j^{*}(S)),j^{*}(S)}  \big| \textnormal{Case 2 or 3} \right] \\
&\geq \E_{\vect{b} \sim \vect{B}(v_{i})} \left[v_{iS} \big | \textnormal{Case 1} \right]
	+   \E_{\vect{b} \sim \vect{B}(v_{i})}  
		\left[ v_{i,\pi(\vect{b},i)} + \bigl( v_{i,j^{*}(S)} - v_{i,j^{*}(T)} \bigr) \big| \textnormal{Case 2 or 3} \right] \\
&= v_{iS}
\end{align*}
The first inequality holds since $j^{*}(S) \in S$ and the bids are non-negative. The second inequality holds due to the assumption
of Case 1. The third inequality follows Inequality (\ref{ineq:matching-deviation}). Hence, the constructed dual variables form a dual feasible solution.

\paragraph{Bounding primal and dual.} By the definition of dual variables, we have 
\begin{align*}
\sum_{i,v_{i}} y_{i}(v_{i})
= \sum_{i}  \E_{v_{i} \sim F_{i}} \E_{\vect{b} \sim \vect{B}(v_{i})} \biggl[ v_{i,\pi(\vect{b},i)} \biggr]
= \E_{\vect{b}} \biggl[ \sum_{i} v_{i,\pi(\vect{b},i)} \biggr].
\end{align*}
Besides, consider an item $j$ and let $i^{*}$ be player such that 
$$
z_{j} = \E_{\vect{b}_{-i^{*}} \sim \vect{B}_{-i^{*}} }
			\left[ b_{\pi^{-1}(\vect{b}_{-i^{*}},j),j} \right]
$$ 
As the right-hand side is independent of $b_{i^{*}}$, we have 
$$
z_{j} = \E_{b_{i^{*}}} \E_{\vect{b}_{-i^{*}} \sim \vect{B}_{-i^{*}} } \left[ b_{\pi^{-1}(\vect{b}_{-i^{*}},j),j} \right]
       = \E_{\vect{b}} \left[ b_{\pi^{-1}(\vect{b}_{-i^{*}},j),j} \right]
       \leq \E_{\vect{b}} \left[ b_{\pi^{-1}(\vect{b},j),j} \right]
$$
Summing over all items $j$, we get
\begin{align*}
\sum_{j}z_{j} &\leq \E_{\vect{b}} \biggl[ \sum_{j} b_{\pi^{-1}(\vect{b},j),j} \biggr]	
		= \E_{\vect{b}} \biggl[ \sum_{i} \sum_{j \in \pi(\vect{b},i)} b_{ij} \biggr]
		\leq  \E_{\vect{b}} \biggl[ \sum_{i} v_{i,\pi(\vect{b},i)} \biggr]
\end{align*}
where the last inequality is due to non-overbidding property. 
Thus, the dual objective value is at most twice the expected welfare of the equilibrium. 
\end{proof}

\section{Conclusion}
In the paper, we have presented a primal-dual approach to study the efficiency of games. 
We have shown the applicability of the approach on a wide variety of settings and gave simple
and improved analyses for several problems in settings of different natures. Beyond concrete results, 
the main point of the paper is to illuminate the potential of the primal-dual approach. In this approach, 
the PoA-bound analyses now can be done similarly as the analyses of LP-based algorithms 
in Approximation/Online Algorithms. We hope that 
linear programming and duality would bring new ideas and techniques, from well-developed domains 
such as approximation, online algorithms, etc to algorithmic game theory, not only for
the analyses and the understanding of current games but also for the design of new games (auctions) and 
new concepts leading to improved efficiency.    

\paragraph{Acknowledgement.} We thank Tim Roughgarden for pointing out some related works.

%\newpage
%\section{Directions}
%\begin{enumerate}
%	\item Coarse NE, best responses, regret moves, resource augmentation.
%	\item Bound on the POS for congestion games? 
%	\item How about the POS for network cost sharing? (seems hard) Reexplain the constant POS of 
%		broadcast connection games by duality.
%	\item Connections with smooth argument in the works of Roughgarden: incomplete-information,
%		large games, POA and POS for Shapley congestion games
%	\item Mechanism design, second price auction. Many work.
%	\item Opinion formation games
%	\item Scheduling games (with release times), extension for egalitarian objective.  
%	\item Routing over time, improving Kulkarni SODA games.
%	\item Strong NE
%	\item Connection Games
%	\item See if we can prove that Mayerson gives Bayes Nash that maximizes the revenue. Replace $v_{i}$ by $\phi_{i}$.
%\end{enumerate}

%\bibliographystyle{plainnat}
%\bibliography{game} 

\end{document}